\def\arXiv{}
\acrodef{AGP}{Art Gallery Problem}
\acrodef{CGAL}{Computational Geometry Algorithms Library}
\acrodef{CTGP}{Continuous Terrain Guarding Problem}
\acrodef{EGC}{Exact Geometric Computation}
\acrodef{IP}{Integer Linear Program}
\acrodef{PTAS}{Polynomial Time Approximation Scheme}
\acrodef{SC}{Set Cover}
\acrodef{TGP}{Terrain Guarding Problem}
\acrodef{TGPIL}{Terrain Guarding Problem Instance Library}
\acrodef{VTGP}{Terrain Guarding Problem with Vertex Guards}
\newtheorem{definition}{Definition}[section]
\newtheorem{theorem}[definition]{Theorem}
\newtheorem{lemma}[definition]{Lemma}
\newtheorem{corollary}[definition]{Corollary}
\newtheorem{observation}[definition]{Observation}
\def\dash---{\kern.16667em---\penalty\exhyphenpenalty\hskip.16667em\relax}
\newcommand{\bigo}{\operatorname{o}}
\newcommand{\bigO}{\operatorname{O}}
\newcommand{\bigTheta}{\operatorname{\Theta}}
\newcommand{\interior}{\operatorname{int}}
\newcommand{\N}{\mathds{N}}
\newcommand{\opt}{\operatorname{OPT}}
\newcommand{\Q}{\mathds{Q}}
\newcommand{\R}{\mathds{R}}
\newcommand{\tgp}{\operatorname{TGP}}
\newcommand{\V}{\operatorname{\mathcal{V}}}
\newcommand{\VK}{\operatorname{VK}}
\newcommand{\pointguardmode}{\textsc{Point\-Guards}\xspace}
\newcommand{\vertexguardmode}{\textsc{Ver\-tex\-Guards}\xspace}
\newcommand{\domfilter}{\textsc{Dom\-i\-na\-tion\-Fil\-ter}\xspace}
\newcommand{\pointguardfilter}{\textsc{Edge\-Fil\-ter}\xspace}
\newcommand{\witnessfilter}{\textsc{Wit\-ness\-Fil\-ter}\xspace}
\newcommand{\walk}{\mbox{\textsc{Walk}}\xspace}
\newcommand{\sinewalk}{\textsc{Sine\-Walk}\xspace}
\newcommand{\parabolawalk}{\textsc{Pa\-rab\-o\-la\-Walk}\xspace}
\newcommand{\concavevalleys}{\textsc{Con\-cave\-Val\-leys}\xspace}
\newcommand{\vdefault}{\textsc{VDe\-fault}\xspace}
\newcommand{\vnodom}{\textsc{VNo\-Dom}\xspace}
\newcommand{\vnow}{\mbox{\textsc{VNoW}}\xspace}
\newcommand{\pdefault}{\textsc{PDe\-fault}\xspace}
\newcommand{\pnoedge}{\textsc{PNo\-Edge}\xspace}
\newcommand{\pnodom}{\textsc{PNo\-Dom}\xspace}
\newcommand{\pnow}{\mbox{\textsc{PNoW}}\xspace}
\newcommand{\papertitle}{The Continuous 1.5D Terrain Guarding Problem:\protect\\ Discretization, Optimal Solutions, and {PTAS}}
\newcommand{\paperthanks}{This work extends and subsumes Chapter~3 of \emph{James King's PhD thesis,} pages 29--72, 2010~\cite{k-gpgst-10}, and the extended abstracts that appeared in the \emph{Proceedings of the 26th Canadian Conference on Computational Geometry (CCCG~2014),} pages 367--373, 2014~\cite{fhs-ptasctgp-14} and in the \emph{31st European Workshop on Computational Geometry (EuroCG~2015),} pages 212--215, 2015~\cite{fhs-esctg-15}.}
	\title{\bf\Large\papertitle\thanks{\paperthanks}}
	\date{}
	\title{\MakeUppercase{\papertitle}\thanks{\paperthanks}}
	\newcommand{\affil}[1]{#1}
	\newcommand{\email}[1]{\texttt{#1}}
\author{%
	Stephan~Friedrichs,%
	\thanks{%
		\affil{Max Planck Institute for Informatics, Saarbr\"ucken, Germany},
		\email{sfriedri@mpi-inf.mpg.de}}
	\thanks{\affil{Saarbr\"ucken Graduate School of Computer Science}}\,
	Michael~Hemmer,%
	\thanks{%
		\affil{TU Braunschweig, IBR, Algorithms Group, Braunschweig, Germany},
		\email{mhsaar@gmail.com}}\,
	James~King,%
	\thanks{%
		\affil{D-Wave Systems, Burnaby, Canada},
		\email{jking@dwavesys.com}}~
	and Christiane~Schmidt%
	\thanks{%
		\affil{Communications and Transport Systems, ITN, Link\"oping University, Sweden. Supported by grant 2014-03476 from Sweden's innovation agency VINNOVA.}
		\email{christiane.schmidt@liu.se}}
}
\begin{document}

\maketitle

\begin{abstract}
In the NP-hard~\cite{kk-tginph-11} continuous 1.5D \ac{TGP} we are given an $x$-monotone chain of line segments in $\R^2$ (the \emph{terrain~$T$}) and ask for the minimum number of guards (located anywhere on~$T$) required to guard all of~$T$.
We construct guard candidate and witness sets $G,W \subset T$ of polynomial size such that any feasible (optimal) guard cover $G^* \subseteq G$ for $W$ is also feasible (optimal) for the continuous \ac{TGP}.
This discretization allows us to
\begin{inparaenum}
\item
	settle NP-completeness for the continuous \ac{TGP},
\item
	provide a \ac{PTAS} for the continuous \ac{TGP} using the \ac{PTAS} for the discrete \ac{TGP} by Gibson et~al.~\cite{gkkv-gtvls-14}, and
\item
	formulate the continuous \ac{TGP} as an \ac{IP}.
\end{inparaenum}
Furthermore, we propose several filtering techniques reducing the size of our discretization, allowing us to devise an efficient \acs{IP}-based algorithm that reliably provides optimal guard placements for terrains with up to $10^6$ vertices within minutes on a standard desktop computer.
\end{abstract}

\acresetall
\section{Introduction}
\label{sec:introduction}

In the 1.5D \ac{TGP}, we are given an $x$-monotone chain of line segments in~$\R^2$, the terrain~$T$, on which we have to place a minimum number of point-shaped guards, such that they cover~$T$.
This is a close relative of the \ac{AGP} and traditionally motivated by the optimal placement of antennas for line-of-sight communication networks, or the placement of street lights or security cameras along roads~\cite{bkm-acfaafotg-07}.

The authors would like to revive a motivation stemming from research regarding algorithms solving the \ac{AGP}~\cite{rsfhkt-eag-14,eh-ggt-06,efhkkms-aagi-14,ffks-ffagp-15,kbfs-esbgagp-12} already mentioned in~\cite{bkm-acfaafotg-07}:
An application of the \ac{AGP} is the placement of sensors or communication devices w.r.t.\ obstacles, for example placing laser scanners in production facilities to acquire a precise mapping of the facility~\cite{efhkkms-aagi-14,kbfs-esbgagp-12}.
While the \ac{AGP} properly models most indoor environments it cannot capture many outdoor scenarios, like placing cell phone towers in an urban environment, because it does not take height information into account.
To remedy this shortcoming essentially means working on two dimensions and height, a 2.5D \ac{AGP}.
One dimension and height, the 1.5D \ac{TGP}, is a natural starting point to develop techniques for a 2.5D \ac{AGP}.
We show in this paper that the ``height dimension'' is more benevolent than the ``second dimension'' in the \ac{AGP}:
It allows a finite discretization whose existence in the \ac{AGP} is, to the best of our knowledge, still unknown and poses a key challenge w.r.t.\ software solving the \ac{AGP}~\cite{rsfhkt-eag-14}.
We hope that our contribution helps tackling the 2.5D \ac{AGP}.

\subsection{Our Contribution}

\begin{enumerate}
\item
	Our core contribution is to show that the \ac{CTGP}, where guards can be freely placed on the terrain, has a discretization of polynomial size (Section~\ref{sec:discretization}).
	We then infer two results:
	\begin{enumerate}
	\item
		While the \ac{CTGP} is known to be NP-hard~\cite{kk-tginph-11}, we also conclude that it is a member of NP, and hence NP-complete (Section~\ref{sec:complexity}).

	\item
		It follows from the \ac{PTAS} for the discrete \ac{TGP} from Gibson et~al.~\cite{gkkv-gtvls-14} that there is a \ac{PTAS} for the \ac{CTGP} (Section~\ref{sec:ptas}).
	\end{enumerate}

\item
	We present filtering techniques reducing the size of our discretization (Section~\ref{sec:filters}).

\item
	An efficient algorithm for continuous and discrete \ac{TGP} versions is proposed.
	It finds optimal solutions for terrains with up to~$10^6$ vertices on a standard desktop computer\footnote{Standard as of~2015: An Intel Core i7-3770 CPU with 3.4\,GHz with 14\,GB of main memory.} within minutes.
	This is achieved following the \ac{EGC} paradigm, i.e., using exact arithmetic for geometric calculations to ensure correctness.
	We test our algorithm and filtering techniques (Sections~\ref{sec:implementation} and~\ref{sec:experiments}).
\end{enumerate}

\begin{figure}
	\centering
	\subfigure[The visibility region $\V(p)$ of $p \in T$ has $\bigO(n)$ subterrains.]{
		\includegraphics[width=0.61\textwidth]{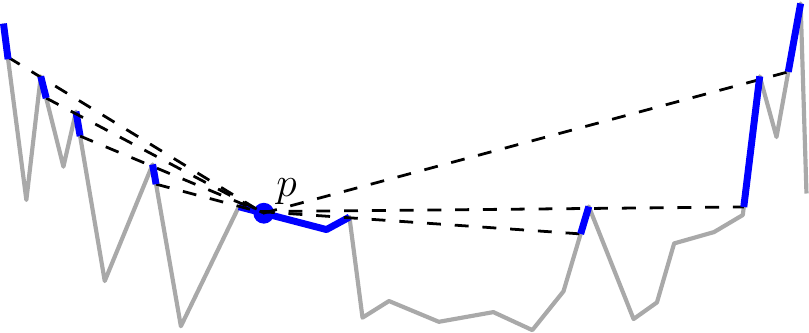}
		\label{fig:tgp-visibility}
	}
	\subfigure[This terrain needs two vertex- but only one non-vertex guard~\cite{bkm-acfaafotg-07}.]{
		\includegraphics[width=0.61\textwidth]{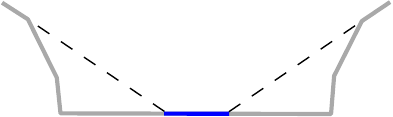}
		\label{fig:tgp-point}
	}
	\caption{The \acf{TGP}: visibility~\subref{fig:tgp-visibility} and non-vertex guards~\subref{fig:tgp-point}.}
	\label{fig:tgp}
\end{figure}

\subsection{Related Work}

The \ac{TGP} is closely related to the \ac{AGP} where, given a polygon~$P$, we seek a minimum cardinality guard set that covers~$P$.
Potential guards can, e.g.,  be located on the vertices only, on arbitrary points in~$P$, or patrol along edges or diagonals of~$P$.
Many polygon classes have been considered for the \ac{AGP}, including simple polygons, polygons with holes, and orthogonal polygons.
Moreover, the guards' task can be altered, e.g.\ Laurentini~\cite{l-gwag-99} required visibility coverage for the edges of~$P$, but not the interior.

The first result in the context of the \ac{AGP} was obtained by Chv\'{a}tal~\cite{c-actpg-75} who proved the \emph{Art Gallery Theorem,} answering a question posed by Victor Klee in~1973 (see~\cite{r-agta-87}):
$\lfloor \frac{n}{3} \rfloor$~guards are always sufficient and sometimes necessary to guard a polygon of $n$ vertices.
A simple and elegant proof of the sufficiency was later given by Fisk~\cite{f-spcwt-78}.
Related results were obtained for various polygon classes, Kahn et~al.~\cite{kkk-tgrfw-83} established a tight bound of $\lfloor \frac{n}{4} \rfloor$ for orthogonal polygons with $n$ vertices.

The work of Chv\'{a}tal and its variants focused on upper bounds on the number of guards.
However, the \ac{AGP} also is an optimization problem (given a polygon, find a minimum number of guards covering it), the decision variant of which was shown to be NP-hard for various problem versions~\cite{rs-snphpdp-83,sh-tnphagpop-95}, even for vertex guards in polygons without holes~\cite{ll-ccagp-86}.
Eidenbenz et~al.\ established APX-hardness of many \ac{AGP} variants~\cite{esw-irgpt-01}.
Chwa et~al.~\cite{cjkmos-gagbgw-06} considered witnessable polygons, in which coverage of some finite set of witness points implies coverage of the entire polygon.
For classical surveys on the \ac{AGP} see O'Rourke~\cite{r-agta-87} and Shermer~\cite{s-rrag-92}, and de Rezende et~al.~\cite{rsfhkt-eag-14} for more recent computational developments.

For the~1.5D \ac{TGP}, research first focused on approximation algorithms, because NP-hardness was generally assumed, but had not been established.
The first constant-factor approximation was given by Ben-Moshe et~al.~\cite{bkm-acfaafotg-07} for the discrete vertex guard problem version $\tgp(V,V)$,\footnote{$\tgp(G,W)$ means that $W$ must be covered using only guards located in~$G$, see Definition~\ref{def:tgp}.} where only vertex guards are used to cover only the vertices.
They were able to use it as a building block for an $\bigO(1)$-approximation of $\tgp(T,T)$, where guards on arbitrary locations on $T$ must guard all of~$T$.
The approximation factor of this algorithm was not stated by the authors, but claimed to be 6 in~\cite{k-4aagdt-06} (with minor modifications).
Another constant-factor approximation based on $\epsilon$-nets and \ac{SC} was given by Clarkson and Varadarajan~\cite{cv-iaags-07}.
King~\cite{k-4aagdt-06} presented a 4-approximation (which was later shown to actually be a 5-approximation~\cite{k-eaag}) for $\tgp(V,V)$ and $\tgp(T,T)$.
The most recent $\bigO(1)$-approximation was presented by Elbassioni et~al.~\cite{ekmms-iag15-11}:
Using LP-rounding techniques, they achieve a 4-approximation of $\tgp(T,T)$ and $\tgp(G,W)$ w.r.t.\ finite, disjoint $G, W \subset T$ (a~5-approximation if $G \cap W \neq \emptyset$).
This approximation is also applicable to the \ac{TGP} with weighted guards.
In the~2009 conference version of~\cite{gkkv-gtvls-14}, Gibson et~al.\ devised a \ac{PTAS} based on local search for $\tgp(G,W)$ and $\tgp(G,T)$, where $G,W \subset T$ are finite.

Only after all these approximation results, in the 2010 conference version of~\cite{kk-tginph-11}, King and Krohn established the NP-hardness of both the discrete and the continuous \ac{TGP} by a reduction from {PLANAR~3SAT}.
The membership of the \ac{CTGP} in NP remained, to the best of our knowledge, an open problem that we answer positively in Section~\ref{sec:complexity}.
Khodakarami et~al.~\cite{kdm-fpagt-15} showed that the \ac{TGP} is fixed-parameter tractable w.r.t.\ the \emph{depth of terrain onion peeling,} the number of layers of upper convex hulls induced by a terrain.

Variants of the \ac{TGP} include guards hovering above the terrain (Eidenbenz~\cite{e-aatg-02}), orthogonal terrains (Katz and Roisman~\cite{kr-gvrd-08}), and directed visibility (Durocher et~al.~\cite{dlm-got-15}).
Hurtado et~al.~\cite{hlmssss-tvwmv-14} gave algorithms for computing visibility regions in~1.5D and~2.5D terrains.
Haas and Hemmer~\cite{hh-tv-15} presented implementations for~1.5D visibility based on~\cite{hlmssss-tvwmv-14} and the triangular expansion technique for visibility computations in polygons by Bungiu et~al.~\cite{bhhhk-ecvp-14}.

Martinovi{\'c} et~al.~\cite{mms-epiaagt-15} proposed an approximate solver for the discrete \ac{TGP}.
Requiring a-priori knowledge about pairwise visibility of the vertices~$V$, they 5.5- and 6-approximate $\tgp(V,V)$ instances with up to 8000 vertices and dense (0.19--0.65) visibility matrices in 11--900 and 4--250 seconds.
As geometric information is encoded in the input, they are not tied to the \ac{EGC} paradigm, use floating-point arithmetic, and a parallel GPU implementation.
Note that we solve a different problem:
We determine the discretization and visibility information that Martinovi{\'c} et~al.\ require as input, follow the \ac{EGC} paradigm, and guarantee optimal solutions in no more than~3.5 seconds for~10000 vertices.
However, we do not focus on dense visibility matrices and use different hardware, rendering a comparison of computation times meaningless.

Regarding a discretization for the continuous \ac{TGP}, Gibson et~al.\ claimed~\cite{gkkv-gtvls-14} that their local search works well, but could not limit the number of bits representing the guards.
King gave a discretization with $\bigO(n^3)$ guard candidates and $\bigO(n^4)$ witnesses in his PhD thesis in 2010~\cite{k-gpgst-10} and posed the question if a smaller discretization exists.
Independently, Friedrichs et~al.\ discovered a discretization using $\bigO(n^2)$ guard candidates and $\bigO(n^3)$ witnesses~\cite{fhs-ptasctgp-14} in~2014.
This paper subsumes and extends~\cite{fhs-ptasctgp-14,k-gpgst-10}.

\subsection{Preliminaries and Notation}

A \emph{terrain~$T$,} see Figure~\ref{fig:tgp}, is an $x$-monotone chain of line segments in $\R^2$ defined by its \emph{vertices} $V(T) = \{ v_1, \dots, v_n \}$ that has \emph{edges} $E(T) = \{ e_1, \dots, e_{n-1} \}$ with $e_i = \overline{v_i v_{i+1}}$.
Unless specified otherwise, $n := |V(T)|$.
Where $T$ is clear from context, we occasionally abbreviate $V(T)$ and $E(T)$ by $V$ and~$E$.
$v_i$~and $v_{i+1}$ are the vertices of the edge~$e_i$, and $\interior(e_i) := e_i \setminus \{v_i, v_{i+1}\}$ is its \emph{interior.}
Due to monotonicity, the points on $T$ are totally ordered w.r.t.\ their $x$-coordinates.
For $p,q \in T$, we write $p \leq q$ ($p < q$) if $p$ is (strictly) left of~$q$, i.e., has a (strictly) smaller $x$-coordinate.
We refer to a closed, connected subset of $T$ as a \emph{subterrain.}

A point $p \in T$ \emph{sees} or \emph{covers} $q \in T$ if and only if $\overline{pq}$ is nowhere below~$T$.
$\V(p)$~is the \emph{visibility region} of $p$ with $\V(p) := \{ q \in T \mid \textnormal{$p$ sees $q$} \}$.
Observe that $\V(p)$ is not necessarily connected and is the union of $\bigO(n)$ subterrains, see Figure~\ref{fig:tgp-visibility}.
We say that $q \in \V(p)$ is \emph{extremal} in $\V(p)$ if $q$ has a maximal or minimal $x$-coordinate within its subterrain in~$\V(p)$.
For $G \subseteq T$ we abbreviate $\V(G) := \bigcup_{g \in G} \V(g)$.
A set $G \subseteq T$ with $\V(G) = T$ is named a \emph{(guard) cover} of~$T$.
In this context, $g \in G$ is sometimes referred to as \emph{guard.}

\begin{definition}[\acl{TGP}]\label{def:tgp}
	In the \emph{\acf{TGP},} abbreviated $\tgp(G,W)$, we are given a terrain~$T$, and sets of guard candidates and witnesses $G, W \subseteq T$.
	$C \subseteq G$ is \emph{feasible w.r.t.\ $\tgp(G,W)$} if and only if $W \subseteq \V(C)$.
	If $C$ is feasible and $|C| = \opt(G,W) := \min\{ |C| \mid \text{$C \subseteq G$ is feasible w.r.t.\ $\tgp(G,W)$} \}$, we say that $C$ is \emph{optimal w.r.t.\ $\tgp(G,W)$.}
	$\tgp(G,W)$ asks for an optimal guard cover $C \subseteq G$.
	The \emph{\acf{CTGP}} is $\tgp(T,T)$, and the \emph{\acf{VTGP}} is $\tgp(V(T),T)$.
\end{definition}

Throughout this paper, we assume $W \subseteq \V(G)$, i.e., that $\tgp(G,W)$ has a feasible solution.
The \ac{CTGP} is the primary focus of this paper.
Observe that \ac{CTGP} and \ac{VTGP} are different problems~\cite{bkm-acfaafotg-07}, as demonstrated in Figure~\ref{fig:tgp-point}.
We consider \ac{VTGP} a representative of the numerous discrete versions of the \ac{TGP};
our algorithm solves both \ac{CTGP} and \ac{VTGP}, and generalizes to arbitrary discretizations.

\acresetall
\section{Discretization}
\label{sec:discretization}

This section is our core contribution.
We consider the following problem:
Given a terrain $T$ with $n$ vertices, construct sets $G, W \subset T$ (guard candidate and witness points) of size polynomial in~$n$, such that any feasible (optimal) solution for $\tgp(G,W)$ is feasible (optimal) for $\tgp(T,T)$ as well.
We proceed in three steps.
\begin{inparaenum}
\item
	In Section~\ref{sec:discretization-witnesses} we assume that we are provided with some finite guard candidate set $G \subset T$ and show how to construct a witness set $W(G)$ with $|W(G)| \in \bigO(n|G|)$, such that any feasible solution of $\tgp(G, W(G))$ is feasible for $\tgp(G,T)$ as well.
\item
	Section~\ref{sec:discretization-guards} discusses a set of guard candidates $U$ with $|U| \in \bigO(n^2)$ and $\opt(U,T) = \opt(T,T)$.
\item
	We combine the above steps in Section~\ref{sec:discretization-complete}.
\end{inparaenum}

\begin{figure}
	\centering
	\includegraphics[width=0.75\textwidth]{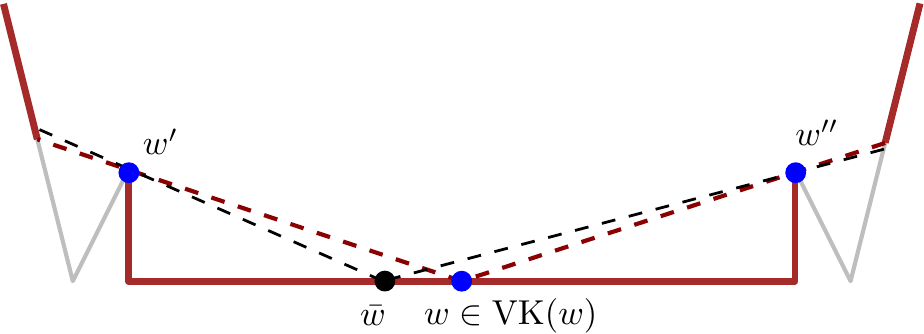}
	\caption{Witness $w$, $\V(w)$ highlighted in red, and its finite visibility kernel $\VK(w) = \{w, w', w''\}$ marked in blue.
		$\bar w$ has equivalent properties.}
	\label{fig:vis-kernel}
\end{figure}

When discretizing a problem as closely related to the \ac{AGP} as the \ac{TGP}, one must consider the work of Chwa et~al.\ who pursued the idea of \emph{witnessable} polygons~\cite{cjkmos-gagbgw-06} which allow placing a finite set of witnesses, such that covering the witnesses with any guard set implies full coverage of the polygon.
The basic building blocks of Chwa et~al.\ are \emph{visibility kernels:}
Given a point $w$ in a polygon, the visibility kernel of $w$ is the set of points that see at least as much as $w$ (definition for terrains below).
Chwa et~al.\ show that a polygon admits a finite witness set if and only if it can be covered by a finite set of visibility kernels;
this is not the case for arbitrary polygons.

Transferring this approach to the \ac{TGP} means that the visibility kernel of $w \in T$ is $\VK(w) := \{ w' \in T \mid \V(w) \subseteq \V(w') \}$.
Then for the terrain $T$ and $w \in T$ in Figure~\ref{fig:vis-kernel} we have $\VK(w) = \{ w, w', w'' \}$, so $\VK(w)$ is finite.
The same argument holds for infinitely many $\bar{w} \in T$ near~$w$.
It follows that $T$ does not admit a finite visibility kernel cover and thus is not witnessable as defined by Chwa et~al.\ (our witness set below is different in that it is associated with a finite guard set).

\subsection{Witnesses}
\label{sec:discretization-witnesses}

\begin{figure}
	\centering
	\subfigure[%
			Visibility overlay of $\V(g_1)$, $\V(g_2)$, and $\V(g_3)$ indicated in blue, orange, and green, respectively.
			Overlaps are indicated by altering colors.]{
		\includegraphics[width=.65\linewidth]{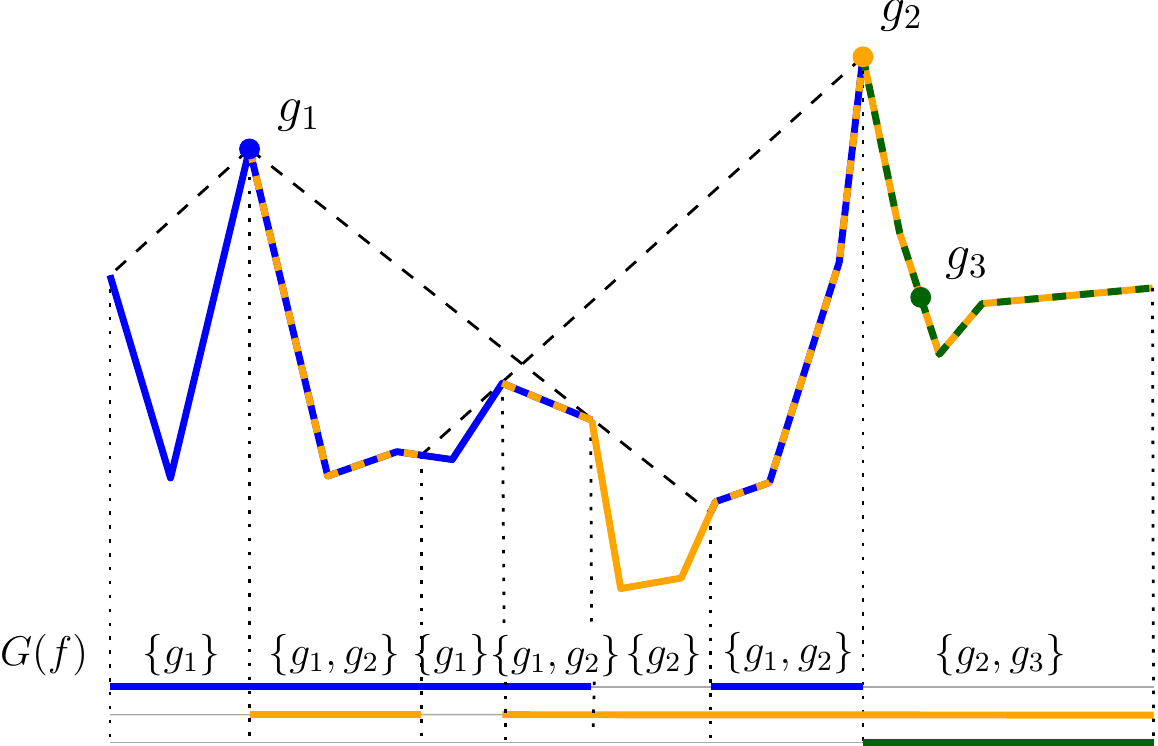}
		\label{fig:witnesses-overlay}
	}
	\subfigure[The set of inclusion-minimal features may still have cardinality $\bigO(n|G|)$.]{
		\includegraphics[width=.9\textwidth]{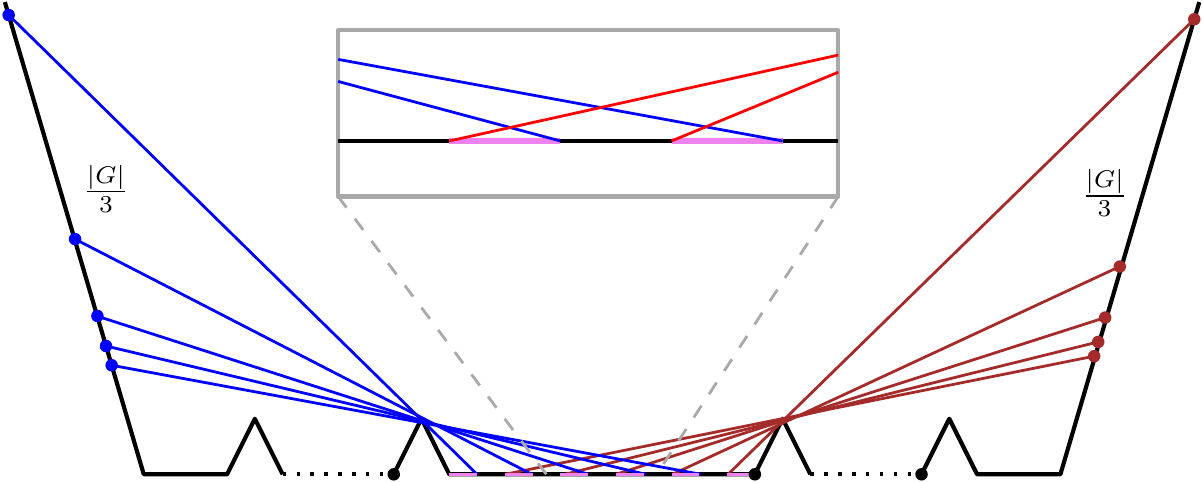}
		\label{fig:witnesses-inclusion-minimal}
	}
	\caption{Witness discretization: visibility overlay~\subref{fig:witnesses-overlay} and cardinality~\subref{fig:witnesses-inclusion-minimal}.}
	\label{fig:witnesses}
\end{figure}

Suppose we are given a terrain $T$ and a finite set $G \subset T$ of guard candidates with $\V(G) = T$, and we want to cover $T$ using only guards $C \subseteq G$, i.e., we want to solve $\tgp(G,T)$.
$G$~could be the set $V(T)$ of vertices to solve the \ac{VTGP} or any other finite set, especially our guard candidates in Equation~\eqref{eq:u}.
We construct a finite set $W(G) \subset T$ of $\bigO(n|G|)$ witness points, such that feasible solutions for $\tgp(G,W(G))$ also are feasible for $\tgp(G,T)$.

Let $g \in G$ be one of the guard candidates.
$\V(g)$~subdivides $T$ into $\bigO(n)$ subterrains, see Figure~\ref{fig:tgp-visibility}.
The monotonicity of $T$ allows us to project them onto the $x$-axis and thus to represent $\V(g)$ as a set of closed \emph{visibility intervals.}
We consider the overlay of all visibility intervals of all guard candidates in~$G$, see Figure~\ref{fig:witnesses-overlay} for an overlay of three guard candidates.
It forms a subdivision consisting of \emph{maximal intervals} (maximal, connected intervals seen by the same guards) and \emph{end points.}
Every point in a \emph{feature~$f$} (maximal interval or end point) of the subdivision is seen by the same set of guards
\begin{equation}\label{eq:g-f}
	G(f) := \left\{ g \in G \mid f \subseteq \V(g) \right\}.
\end{equation}

\begin{observation}
	Let $f$ be a feature of the guard candidates' overlay, and let $g \in G$ be a guard.
	Now consider an arbitrary witness $w \in f$.
	Then
	\begin{equation}
		w \in \V(g)
			\quad \Leftrightarrow \quad f \subseteq \V(g)
			\quad \stackrel{\eqref{eq:g-f}}{\Leftrightarrow} \quad g \in G(f).
	\end{equation}
\end{observation}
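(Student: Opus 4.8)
The plan is to verify the two equivalences in turn. The rightmost one, $f \subseteq \V(g) \Leftrightarrow g \in G(f)$, requires no argument at all: it is literally the defining Equation~\eqref{eq:g-f}, which sets $G(f) = \{ g \in G \mid f \subseteq \V(g) \}$. Everything therefore reduces to the leftmost equivalence $w \in \V(g) \Leftrightarrow f \subseteq \V(g)$, which I would prove by two inclusions. One of them, $f \subseteq \V(g) \Rightarrow w \in \V(g)$, is immediate from $w \in f$, since a point of a subset of $\V(g)$ lies in $\V(g)$; no geometry enters here.

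For the reverse implication $w \in \V(g) \Rightarrow f \subseteq \V(g)$, I would appeal directly to the construction of the overlay. For each $g \in G$ we represent $\V(g)$ as its finite collection of closed visibility intervals on the $x$-axis, and the features are exactly the cells obtained by cutting the $x$-axis at all endpoints of all these intervals, together with those endpoints themselves as degenerate single-point features. The decisive structural fact is that no visibility interval of any guard can begin or end in the relative interior of a feature, since such a breakpoint would by construction be a feature boundary. Hence, for a fixed $g$, the feature $f$ is either contained in a single visibility interval of $g$ or disjoint from $\V(g)$; that is, $g$ sees either all of $f$ or none of it. As $w \in f$ and $g$ sees $w$, the first alternative must hold and $f \subseteq \V(g)$ follows. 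Chaining the two equivalences then yields the observation.

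I do not expect a genuine obstacle here; the statement is essentially a reformulation of the fact that the overlay was constructed precisely so that the set of guards seeing a point is constant along each feature. The only point deserving a word of care is the single-point features arising from shared endpoints: because the visibility intervals are closed, membership of such an endpoint in $\V(g)$ is again an all-or-nothing property of that feature, so the same argument applies without change.
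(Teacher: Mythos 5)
Your proposal is correct and matches the paper's reasoning: the paper gives no separate proof for this observation, treating it as immediate from the construction of the overlay (the sentence ``every point in a feature $f$ of the subdivision is seen by the same set of guards'') together with the definition of $G(f)$ in Equation~\eqref{eq:g-f}, which is exactly the all-or-nothing argument you spell out. Your elaboration via the fact that no visibility-interval endpoint lies in the relative interior of a feature is just a more explicit rendering of the same construction-based justification.
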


\begin{observation}\label{obs:witness-representation}
	Let $G$ be a finite set of guard candidates and $f$ a feature in the overlay of~$G$.
	Then a witness point $w \in f$ can be represented by the set $G(f)$ of guards covering~$f$.
\end{observation}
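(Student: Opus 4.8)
The plan is to derive this directly from the preceding (unlabeled) observation, which already carries the essential content. Recall that solving $\tgp(G,W)$ amounts to selecting a subset $C \subseteq G$ with $W \subseteq \V(C)$; equivalently, each witness $w \in W$ must be seen by at least one chosen guard, i.e.\ $C$ must intersect the \emph{coverage set} $\{ g \in G \mid w \in \V(g) \}$ of guards that see~$w$. Hence the only information a witness contributes to $\tgp(G,\cdot)$ is precisely this coverage set, and I would begin by making that interpretation of ``represented by'' explicit.

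First I would invoke the preceding observation to compute the coverage set. For any $w$ lying in a feature~$f$, that observation gives $w \in \V(g) \Leftrightarrow g \in G(f)$, so $\{ g \in G \mid w \in \V(g) \} = G(f)$. The decisive point is that the right-hand side $G(f)$ depends only on the feature~$f$ and not on the particular point $w \in f$: every witness inside the same feature has an identical coverage set.

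From this I would conclude the representation claim. Since the covering constraint a witness imposes on $C$ is determined entirely by its coverage set, and that set equals $G(f)$ for \emph{every} $w \in f$, the set $G(f)$ encodes all the information about~$w$ that is relevant to $\tgp(G,\cdot)$. Therefore $w$ may be faithfully represented by $G(f)$, and any two points of a feature are interchangeable when used as witnesses.

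I do not anticipate a genuine obstacle here, as the statement is essentially a reformulation of the preceding observation. The only step requiring care is the conceptual one above\dash--- pinning down that ``represented by'' means ``interacts with the guard candidates solely through its coverage set''\dash--- so that the identity $\{ g \in G \mid w \in \V(g) \} = G(f)$ immediately yields the asserted representation.
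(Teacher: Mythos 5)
Your proposal is correct and matches the paper's intent: the paper gives no separate proof for this observation, treating it as an immediate consequence of the preceding (unlabeled) observation that $w \in \V(g) \Leftrightarrow g \in G(f)$ for every $w \in f$, which is exactly the identity you derive. Your added clarification that ``represented by'' means the witness interacts with $\tgp(G,\cdot)$ only through its coverage set is precisely how the paper later uses this fact (witnesses are stored as references to the guards covering them, without coordinates).
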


Placing one witness in every feature of the subdivision ensures that coverage of all these witnesses implies coverage of all features and thus of~$T$.
This requires $\bigO(n|G|)$ witnesses.
However, keeping efficient algorithms in mind, we reduce the number of witnesses, see Section~\ref{sec:filters-witnesses}:
Similar to the shadow atomic visibility polygons in~\cite{crs-aaafmvgoag-11}\dash---a successful strategy in \ac{AGP} algorithms~\cite{rsfhkt-eag-14}\dash---it suffices to include only those features $f$ with inclusion-minimal $G(f)$, i.e., those for which no $f'$ with $G(f') \subset G(f)$ exists:

\begin{theorem}\label{thm:w}
	Consider a terrain $T$ and a finite set of guard candidates $G$ with $\V(G) = T$.
	Let $F_G$ denote the set of features of the visibility overlay of $G$ on $T$ and $w_f \in f$ an arbitrary point in the feature $f \in F_G$.
	Then for
	\begin{equation}\label{eq:w}
		W(G) := \left\{ w_f \mid f \in F_G,\text{ $G(f)$ is inclusion-minimal\,} \right\},
	\end{equation}
	we have that if $C \subseteq G$ is feasible w.r.t.\ $\tgp(G, W(G))$ then $C$ is also feasible w.r.t.\ $\tgp(G, T)$ and
	\begin{equation}\label{eq:w-opt}
		\opt(G, W(G)) = \opt(G, T).
	\end{equation}
\end{theorem}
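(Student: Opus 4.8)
The plan is to collapse both claims onto a single combinatorial fact about the overlay. By the Observation immediately preceding the theorem, a guard $g$ sees a point $p$ lying in a feature $f$ if and only if $g \in G(f)$. Consequently a cover $C \subseteq G$ sees \emph{all} of $f$ exactly when $C \cap G(f) \neq \emptyset$, and $C$ is feasible for $\tgp(G,T)$, i.e.\ $T \subseteq \V(C)$, precisely when $C$ meets $G(f)$ for \emph{every} feature $f \in F_G$. Feasibility for $\tgp(G, W(G))$, by contrast, only guarantees that $C$ meets $G(f)$ for every \emph{inclusion-minimal} feature. So the whole argument hinges on transferring this hitting condition from the inclusion-minimal features to all features.

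The key lemma I would establish is: for every $f \in F_G$ there exists an inclusion-minimal feature $f^\ast$ with $G(f^\ast) \subseteq G(f)$. Since $G$ is finite and each $\V(g)$ is the union of $\bigO(n)$ subterrains, the overlay has finitely many features, so the family $\{\, G(f) \mid f \in F_G \,\}$ is finite; any $\subseteq$-descending chain of such sets therefore terminates, and its minimum element is inclusion-minimal by definition. This is the step I expect to be the crux, as everything afterwards is a short hitting-set argument.

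With the lemma in hand, the feasibility direction is immediate. Suppose $C$ is feasible for $\tgp(G, W(G))$, fix any $f \in F_G$, and take the associated inclusion-minimal $f^\ast$. Because $f^\ast$ is inclusion-minimal, its witness $w_{f^\ast} \in W(G)$, so some $g \in C$ sees $w_{f^\ast}$, i.e.\ $g \in G(f^\ast) \subseteq G(f)$. Then $g$ sees all of $f$, giving $f \subseteq \V(C)$. Letting $f$ range over $F_G$ yields $T = \bigcup_{f \in F_G} f \subseteq \V(C)$, so $C$ is feasible for $\tgp(G, T)$ as claimed.

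For the optimality equation~\eqref{eq:w-opt} I would argue by two inequalities. Since $W(G) \subseteq T$, any cover of $T$ also covers $W(G)$, so every solution feasible for $\tgp(G,T)$ is feasible for $\tgp(G, W(G))$, whence $\opt(G, W(G)) \le \opt(G, T)$. Conversely, by the feasibility direction just proved, an optimal solution for $\tgp(G, W(G))$ is feasible for $\tgp(G, T)$, so $\opt(G, T) \le \opt(G, W(G))$; equality follows. Throughout, the one point to handle with care is invoking the Observation in the right direction\dash---it is exactly what lets me pass from ``$C$ covers the single witness $w_{f^\ast}$'' to ``$C$ covers the whole feature $f^\ast$,'' and thereby $f$.
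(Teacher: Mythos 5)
Your proposal is correct and follows essentially the same route as the paper: reduce coverage of an arbitrary point to coverage of its feature, pass to an inclusion-minimal feature $f^\ast$ with $G(f^\ast)\subseteq G(f)$ whose witness lies in $W(G)$, and then obtain \eqref{eq:w-opt} from the relaxation inequality plus the feasibility transfer. The only difference is that you spell out (via finiteness of $\{G(f)\mid f\in F_G\}$ and termination of descending chains) the existence of such an $f^\ast$, which the paper's proof dispatches with ``by construction.''
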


\begin{proof}
	Let $C \subseteq G$ cover $W(G)$, and consider some point $w \in T$.
	We show that $w \in \V(C)$.
	By assumption, $w \in \V(G)$ and thus $w \in f$ for some feature $f \in F_G$.
	The set $W(G)$ contains some witness in $w_f \in f$, or a witness $w_{f'} \in f'$ with $G(f') \subseteq G(f)$ by construction.
	In the first case, $w$~must be covered, otherwise $w_f$ would not be covered and $C$ would be infeasible for $\tgp(G,W(G))$.
	In the second case $w_{f'}$ is covered, so some guard in $G(f')$ is part of $C$, and that guard also covers $f$ and therefore~$w$.

	As for Equation~\eqref{eq:w-opt}, observe that $\tgp(G, W(G))$ is a relaxation of $\tgp(G, T)$, so $\opt(G, W(G)) \leq \opt(G, T)$ follows.
	Furthermore, if $C$ is feasible and optimal w.r.t.\ $\tgp(G, W(G))$, it is also feasible for $\tgp(G, T)$ as argued above.
	It follows that $|C| = \opt(G, W(G)) \geq \opt(G, T)$, proving~\eqref{eq:w-opt}.
\end{proof}

\begin{observation}\label{obs:witcard}
	Using the set of one witness per inclusion-minimal feature as in Equation~\eqref{eq:w} may not reduce the worst-case complexity of $|W(G)| \in \bigO(n|G|)$ witnesses.
\end{observation}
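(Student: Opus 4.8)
The statement is a tightness claim: despite restricting to inclusion-minimal features in Equation~\eqref{eq:w}, the witness count can remain $\bigTheta(n|G|)$, matching the trivial upper bound. Since $|W(G)| \in \bigO(n|G|)$ follows from the fact that each of the $|G|$ guards contributes $\bigO(n)$ visibility intervals (hence $\bigO(n|G|)$ overlay endpoints and features), the plan is to establish the matching lower bound by exhibiting a family of instances for which $\bigOmega(n|G|)$ features have an inclusion-minimal set $G(f)$, as sketched in Figure~\ref{fig:witnesses-inclusion-minimal}.

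First I would design a gadget of $\bigO(1)$ terrain vertices and repeat it $\bigTheta(n)$ times along the $x$-axis, so that the terrain has $\bigTheta(n)$ vertices and satisfies $\V(G) = T$. The $|G|$ guards are placed so that, restricted to each gadget, their visibility intervals form an overlapping \emph{shingle} pattern: ordering the guards $g_1 < g_2 < \dots$, guard $g_i$ sees a sub-interval that overlaps only the sub-intervals of $g_{i-1}$ and $g_{i+1}$, leaving a non-empty portion seen by $g_i$ alone. Sweeping through one gadget, the covering set then runs through $\{g_1\}, \{g_1, g_2\}, \{g_2\}, \{g_2, g_3\}, \{g_3\}, \dots$, so each gadget contains $\bigTheta(|G|)$ features whose guard set is a singleton.

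Next I would argue minimality. Because $\V(G) = T$, no feature has $G(f) = \emptyset$, so every singleton $\{g_i\}$ is inclusion-minimal, and distinct singletons are pairwise incomparable; the two-guard features $\{g_i, g_{i+1}\}$ are \emph{not} minimal, as they contain the singleton $\{g_i\}$. Hence each gadget contributes $\bigTheta(|G|)$ inclusion-minimal features, and summing over the $\bigTheta(n)$ gadgets yields $|W(G)| \in \bigOmega(n|G|)$. Combined with the upper bound this gives $|W(G)| \in \bigTheta(n|G|)$, so the reduction of Theorem~\ref{thm:w} does not improve the asymptotic worst case.

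The hard part is realizability: one must exhibit an actual $x$-monotone terrain and explicit guard coordinates that produce the shingle pattern inside every gadget simultaneously, while keeping each guard's total interval count in $\bigO(n)$ and, crucially, avoiding any nesting of guard sets across gadgets\dash---the whole point of Equation~\eqref{eq:w} is to collapse nested chains, so the lower bound must realize a genuine antichain of $\bigTheta(n|G|)$ minimal features rather than long chains that the filter would discard. Confining the gadgets to ``pockets'' that only nearby guards can see into, and verifying the overlap offsets geometrically, is where the construction needs care; the counting itself is then immediate.
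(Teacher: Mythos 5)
Your overall plan---exhibit a family of terrains and guard sets realizing $\bigOmega(n|G|)$ pairwise incomparable, hence inclusion-minimal, features---is the right one and is what the paper does. But the specific mechanism you propose, a ``shingle'' pattern in which each $\bigO(1)$-vertex gadget contains $\bigTheta(|G|)$ features whose guard set is a \emph{singleton}, is not realizable on a 1.5D terrain, and this is not a detail that can be deferred: it is ruled out by Lemma~\ref{lem:single-interval-vis} (equivalently, by the order claim). On any fixed edge $e_i$, every guard strictly left of $e_i$ that sees part of $\interior(e_i)$ sees a single subinterval containing $v_{i+1}$; hence the intervals of all left-side guards on $e_i$ are nested with a common right endpoint, and symmetrically for right-side guards. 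Consequently at most one left-side guard (the one with the largest interval) and at most one right-side guard can own a point of $e_i$ seen by no other guard, so each edge carries only $\bigO(1)$ singleton features. A gadget of $\bigO(1)$ vertices therefore contributes $\bigO(1)$, not $\bigTheta(|G|)$, singleton features, and enlarging the gadgets to compensate destroys the $\bigTheta(n)$ gadget count. Your fallback of ``pockets that only nearby guards can see into'' makes matters worse: if only $\bigO(1)$ guards see into a pocket, it contributes $\bigO(1)$ inclusion-minimal features.

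The paper's construction (Figure~\ref{fig:witnesses-inclusion-minimal}) circumvents exactly this obstruction by not using singletons. It takes $|G| \in \bigTheta(n)$, places $|G|/3$ guards on a left and on a right slope and one guard in each of $\bigTheta(n)$ valleys. Within a valley, the left-slope guards see nested intervals anchored at the valley's right end, the right-slope guards see nested intervals anchored at its left end, and the interval endpoints interleave. The resulting features have guard sets of the form (a prefix of the left-slope guards) $\cup$ (a suffix of the right-slope guards); these $\bigTheta(|G|)$ sets per valley are pairwise incomparable, hence all inclusion-minimal, even though each has $\bigTheta(|G|)$ elements. That antichain-of-large-sets idea is what your argument is missing; with it, your counting over $\bigTheta(n)$ gadgets goes through as stated.
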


\begin{proof}
	See Figure~\ref{fig:witnesses-inclusion-minimal}.
	For $|G| \in \bigTheta(n)$ consider the terrain with $\bigTheta(n)$ valleys with $\frac{|G|}{3}$ guards placed on the left (blue) and the right (red) slope each.
	In addition there is one guard (black) placed in each valley.
	Thus, each of the $\bigTheta(n)$ valleys contains $\bigTheta(|G|)$ inclusion-minimal intervals depicted in violet, resulting in $\bigO(n|G|)$ inclusion-minimal features.
\end{proof}

Nevertheless, using only inclusion-minimal witnesses significantly speeds up our implementation, refer to Sections~\ref{sec:filters-witnesses} and~\ref{sec:experiments-witnesses}.

\begin{observation}
	$W(G)$~does not require any end point $p$ between two maximal intervals $I_1$ and~$I_2$:
	$G(p) = G(I_1) \cup G(I_2)$, since visibility regions are closed sets.
\end{observation}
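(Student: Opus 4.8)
The plan is to prove the stated equality $G(p) = G(I_1) \cup G(I_2)$ by establishing the two inclusions separately, and then to use it to argue that the witness $w_p$ is redundant, so that $W(G)$ may omit every such end point. Throughout I write $I_1$ for the maximal interval immediately left of $p$ and $I_2$ for the one immediately right, so that $p \in \overline{I_1} \cap \overline{I_2}$, and I invoke the fact that the set of covering guards is constant across a maximal interval (seeing one of its points means seeing all of it).

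The inclusion $G(I_1) \cup G(I_2) \subseteq G(p)$ is the part directly justified by the phrase ``visibility regions are closed sets'': if $g \in G(I_1)$, then $I_1 \subseteq \V(g)$, and closedness of $\V(g)$ gives $\overline{I_1} \subseteq \V(g)$; since $p \in \overline{I_1}$ we get $p \in \V(g)$, i.e.\ $g \in G(p)$. The symmetric argument for $I_2$ yields the full inclusion.

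The reverse inclusion $G(p) \subseteq G(I_1) \cup G(I_2)$ is where I expect the real work to lie, and the main obstacle is ruling out that $p$ is an \emph{isolated} point of some $\V(g)$, i.e.\ that a guard could see $p$ while seeing no neighbourhood of $p$ on either side. Fix $g \in G(p)$, so $\overline{gp}$ is nowhere below $T$, and distinguish whether $\overline{gp}$ enjoys a clear line of sight (strictly above $T$ strictly between $g$ and $p$) or grazes $T$ at an intermediate ridge vertex $z$. In the clear case a continuity argument shows that every terrain point $q$ sufficiently close to $p$ is still seen, so $g$ covers points of both $I_1$ and $I_2$. In the grazing case $p$ is exactly the shadow boundary cast by $z$; the shadow extends only to the far side of $p$ (away from $g$), while the near side is lit, so $g$ covers a point of the adjacent maximal interval on the lit side. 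Either way $g$ covers a point of $I_1$ or of $I_2$, and by constancy of visibility on a maximal interval it then covers all of that interval, so $g \in G(I_1) \cup G(I_2)$; this shows $\V(g)$ has no isolated point at $p$ and completes the equality.

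Finally I would close the loop for the stated claim. Since $G(p) = G(I_1) \cup G(I_2) \supseteq G(I_1)$, either the containment is strict, in which case $G(p)$ is not inclusion-minimal and $p$ is never selected in~\eqref{eq:w}; or $G(p) = G(I_1)$, in which case $p$ ties a neighbour and any feasible cover of the retained witnesses already forces a guard of $G(I_1) = G(p)$, which sees $p$, exactly by the closedness argument in the proof of Theorem~\ref{thm:w}. In both cases $p$ is redundant, so no end point between two maximal intervals need be included in $W(G)$.
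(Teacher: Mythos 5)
The half of your argument that the observation actually rests on\dash---closedness of $\V(g)$ gives $G(I_1)\cup G(I_2)\subseteq G(p)$, hence in particular $G(I_1)\subseteq G(p)$, hence the retained witness in $I_1$ already forces a guard that sees $p$\dash---is correct and is precisely the paper's entire justification (the observation carries no further proof); your closing paragraph draws the right conclusion from that inclusion alone.

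The reverse inclusion, on which you spend most of your effort, is where the argument breaks, and in fact the literal equality $G(p)=G(I_1)\cup G(I_2)$ can fail. Your grazing-case analysis has the geometry reversed and misses a case: the shadow cast by the intermediate vertex $z$ lies on the \emph{near} side of $p$ (between $z$ and $p$, where the terrain dips below the ray through $g$ and $z$), and whether the far side is lit depends on whether the terrain beyond $p$ rises above or falls below the extended sight line. If $p$ is a peak vertex reached tangentially through $z$ and the terrain drops back below that line immediately beyond $p$, then $g$ sees $p$ and \emph{no} neighbourhood of it. Concretely, take vertices $(0,0)=g$, $(1,0)=z$, $(2,-10)$, $(3,0)=p$, $(4,-10)$: the segment $\overline{gp}$ lies on the line $y=0$ and is nowhere below $T$, but for every terrain point slightly left or right of $p$ the sight line from $g$ passes strictly below $z$. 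So $\V(g)$ has an isolated point at $p$, $g\in G(p)\setminus\bigl(G(I_1)\cup G(I_2)\bigr)$, and no continuity or ``lit side'' argument can repair this. Fortunately none of it is needed: the redundancy of the end point follows from $G(I_1)\subseteq G(p)$ alone ($p$ is either not inclusion-minimal or tied with its neighbour), which is exactly the closedness direction you proved correctly.
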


\subsection{Guards}
\label{sec:discretization-guards}

\begin{figure}
	\centering
	\subfigure[%
			The edge $e_i$ is critical w.r.t.\ $g_\ell$ and~$g_r$:
			The right (left) part of~$e_i$, indicated in blue (red), is seen by $g_\ell$ ($g_r$) only.]{
		\includegraphics[width=0.6\textwidth]{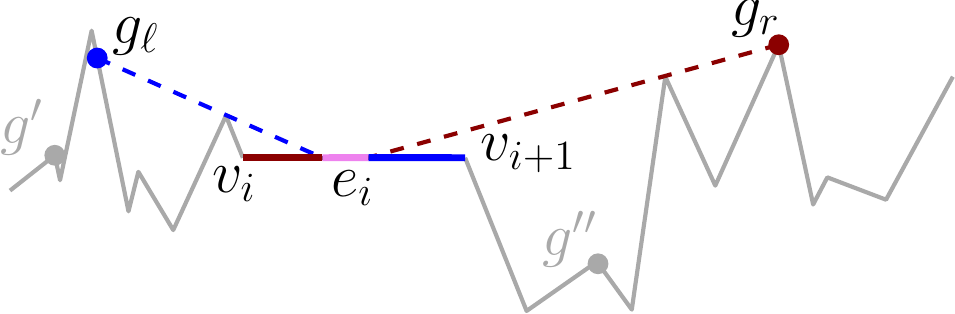}
		\label{fig:guards-critical-edge}
	}
	\subfigure[%
			No guard $g$ is both left- and right-guard.
			Any point on the critical edge $e_\ell$ seen by $g$ is also seen by~$g_r$, hence $e_\ell$ cannot be critical w.r.t.~$g$.]{
		\includegraphics[width=0.6\textwidth]{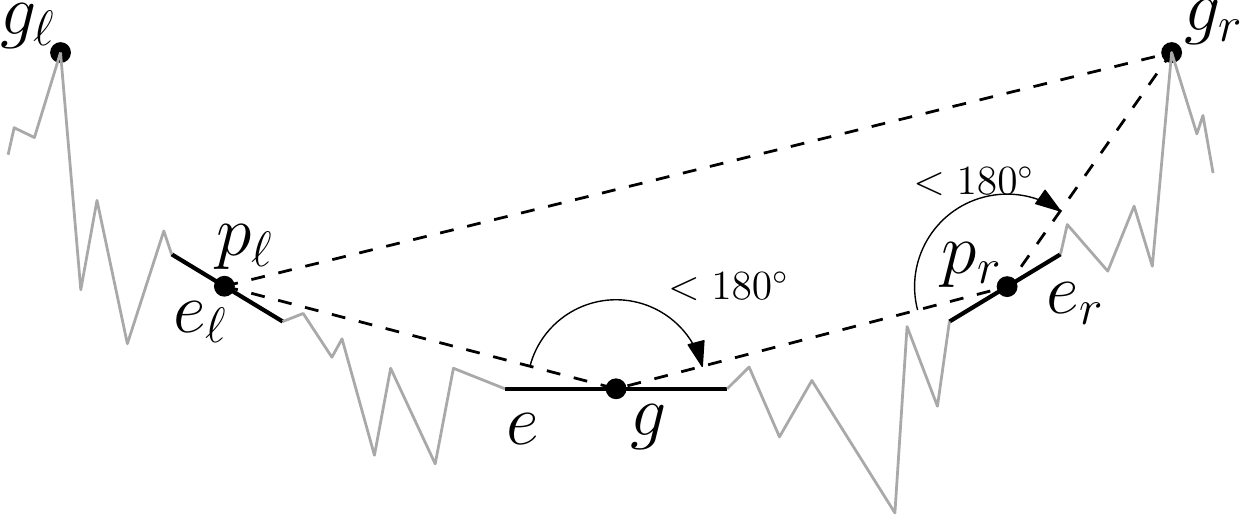}
		\label{fig:guards-leftright-guard-to-u}
	}
	\subfigure[%
			Moving the left-guard $g$ to the left.
			Any point $p$ that $g$ sees to its right remains visible while moving $g$ towards~$v_\ell$.]{
		\includegraphics[width=0.6\textwidth]{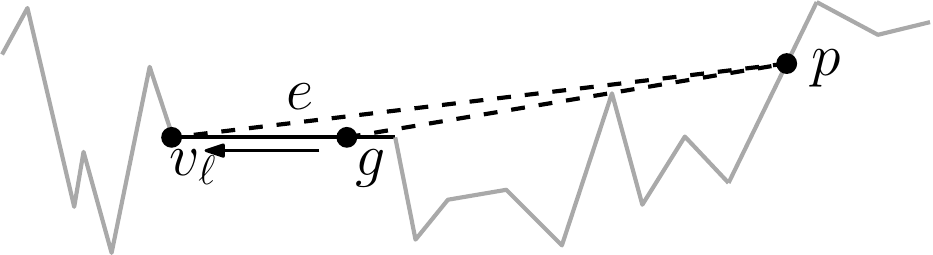}
		\label{fig:guards-left-guard-to-v}
	}
	\caption{Guard discretization: critical edges~\subref{fig:guards-critical-edge} and dominated guards~\subref{fig:guards-leftright-guard-to-u}--\subref{fig:guards-left-guard-to-v}.}
	\label{fig:guards}
\end{figure}

Throughout this section, let $T$ be a terrain, $V = V(T)$ its vertices, and $E = E(T)$ its edges.
Let $C \subset T$ be feasible w.r.t.\ $\tgp(T,T)$, i.e., some finite, possibly optimal, guard cover of~$T$.
Define $U$ as all vertices along with the extremal points of their visibility regions:
\begin{equation}\label{eq:u}
	U := V \cup \bigcup_{v \in V} \left\{ p \mid  \textnormal{$p$ is extremal in $\V(v)$} \right\}.
\end{equation}

\begin{observation}\label{obs:guardcard}
	$|U| \in \bigO(n^2)$ as noted by Ben-Moshe et~al.~\cite{bkm-acfaafotg-07}: $n$~vertices with visibility regions of $\bigO(n)$ subterrains each.
\end{observation}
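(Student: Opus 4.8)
The plan is a direct counting argument that leans on the structural fact, already recorded in the Preliminaries and illustrated in Figure~\ref{fig:tgp-visibility}, that for any point $p \in T$ the visibility region $\V(p)$ is a union of $\bigO(n)$ subterrains. The set $U$ from Equation~\eqref{eq:u} is the union of the $n$ vertices and, for each vertex, the extremal points of its visibility region, so it suffices to bound the total number of extremal points by $\bigO(n^2)$ and then add the $n$ vertices.

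First I would fix a single vertex $v \in V$ and count the extremal points of $\V(v)$. By definition, a point $q \in \V(v)$ is extremal exactly when it attains the minimal or maximal $x$-coordinate within its subterrain of $\V(v)$; since each subterrain is closed and connected, its projection onto the $x$-axis is a single interval and hence contributes at most two such points, namely its two endpoints. Invoking the $\bigO(n)$ bound on the number of subterrains of $\V(v)$, this yields at most $\bigO(n)$ extremal points for the fixed vertex~$v$. Summing over all vertices, there are exactly $n$ of them, each contributing $\bigO(n)$ extremal points, so $\bigcup_{v \in V} \{ p \mid p \text{ is extremal in } \V(v) \}$ has cardinality at most $n \cdot \bigO(n) = \bigO(n^2)$. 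Combining this with the $n$ vertices of $V$ gives $|U| \le n + \bigO(n^2) = \bigO(n^2)$, as claimed.

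The argument becomes essentially immediate once the $\bigO(n)$ subterrain bound is in hand, and I would treat that bound as established in the Preliminaries rather than reprove it here (it reflects the geometry of $x$-monotone terrains, each visibility boundary being induced by a terrain vertex). Accordingly, I do not expect a genuine obstacle: the only points deserving care are observing that a connected subterrain contributes at most two extremal points and that the $\bigO(n)$ visibility-region bound applies uniformly to every vertex, so that the product $n \cdot \bigO(n)$ is valid.
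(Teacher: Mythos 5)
Your proposal is correct and is essentially the paper's own argument, merely spelled out: the paper's justification is exactly the one-line count ``$n$ vertices with visibility regions of $\bigO(n)$ subterrains each,'' and you fill in the routine detail that each connected subterrain contributes at most two extremal points. No discrepancy or gap.
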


Ben-Moshe et~al.\ use a similar set, but they also add an arbitrary point of $T$ between each pair of consecutive points in~$U$.
They need these points as witnesses.
We, however, keep the witnesses separate by our definition of $\tgp(G,W)$.

In the remainder of this section we show that $U$ contains all guard candidates necessary for solving the \ac{CTGP}, $\tgp(T,T)$, i.e., that $\opt(U,T) = \opt(T,T)$.
Our strategy is to show that in any cover $C$ of $T$ it is always possible to move a guard in $C \setminus U$ to a carefully chosen point in $U$ without losing coverage.
This procedure preserves the cardinality and feasibility of any feasible cover;
iterating it results in a cover $C \subseteq U$.
In particular, this is possible for an optimal guard cover.

First observe that an edge that is entirely covered by a guard $g \in C \setminus U$ is still covered after moving $g$ to one of its neighbors in~$U$.

\begin{lemma}\label{lem:guard-edge-u}
	Let $g \in C \setminus U$ be a guard that covers an entire edge $e_i \in E$.
	Then $u_\ell$ and $u_r$, the \emph{$U$-neighbors of~$g$,} with
	\begin{align}\label{eq:ul-ur}
		u_\ell & = \max\{ u \in U \mid u < g \} \\
		u_r    & = \min\{ u \in U \mid g < u \}
	\end{align}
	each entirely cover~$e_i$, too.
\end{lemma}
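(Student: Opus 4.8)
The plan is to reduce the claim about covering an entire \emph{edge} to a claim about covering the two \emph{endpoints} of that edge, which are vertices and are therefore governed by the extremal visibility points collected in $U$ via Equation~\eqref{eq:u}. Let $e_j$ be the edge whose interior contains $g$. If $e_i = e_j$ the statement is trivial, since $g$, $u_\ell$, $u_r$ are then collinear on $e_i$ and each sees all of $e_i$; so assume $e_i \neq e_j$. Because $V \subseteq U$, no vertex lies strictly between $g$ and either $U$-neighbor, so $g$, $u_\ell$, $u_r$ all lie on the closed edge $e_j$ and $e_i$ lies entirely on one side of $g$.

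First I would establish the key geometric ingredient: \textbf{from any single point the visible portion of one edge is contiguous.} Parametrize $e_i$ by $p(\lambda)$, $\lambda \in [0,1]$. For a fixed point $s$ with $e_i$ to one side, the direction, and hence the slope, of $\overline{s\,p(\lambda)}$ is strictly monotone in $\lambda$. A terrain point $t$ strictly between $s$ and $e_i$ occludes $p(\lambda)$ exactly when $t$ lies above the line through $s$ and $p(\lambda)$, i.e.\ when $\mathrm{slope}(\overline{s\,p(\lambda)})$ lies on a fixed side of $\mathrm{slope}(\overline{s\,t})$; by monotonicity this is a one-sided, end-anchored subinterval of $[0,1]$. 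Hence the blocked part of $e_i$ is a union of intervals each containing $v_i$ or $v_{i+1}$, so its complement, the visible part of $e_i$, is a single (possibly empty) subinterval. In particular, $s$ sees all of $e_i$ if and only if it sees both endpoints $v_i$ and $v_{i+1}$.

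With this reduction it suffices to show that $u_\ell$ and $u_r$ each see $v_i$ and $v_{i+1}$. Fix an endpoint $v \in \{v_i, v_{i+1}\}$; since $g$ covers $e_i$ we have $g \in \V(v)$. The set $\V(v) \cap e_j$ is a union of subintervals of $e_j$ whose endpoints are exactly the visibility transitions of $\V(v)$ along $e_j$; each such transition is the $x$-extreme of a maximal visible subterrain of $\V(v)$ and is therefore an \emph{extremal point of $\V(v)$}, which belongs to $U$ by Equation~\eqref{eq:u}. By the definition of $u_\ell$ and $u_r$ there is no point of $U$ strictly between $g$ and either neighbor, so no transition of $\V(v)$ occurs strictly between $u_\ell$ and $g$ or between $g$ and $u_r$. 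As $g \in \V(v)$ and $\V(v)$ is relatively closed, visibility of $v$ cannot toggle off on the sub-segment $[u_\ell, u_r]$ of $e_j$, giving $u_\ell, u_r \in \V(v)$. Applying this to both $v_i$ and $v_{i+1}$ and invoking the contiguity reduction shows that $u_\ell$ and $u_r$ each cover all of $e_i$.

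I expect the main obstacle to be the contiguity ingredient and, tied to it, the precise identification of each visibility transition of $\V(v)$ on $e_j$ with an extremal point of $\V(v)$ listed in $U$: this is exactly where the construction of $U$ enters, and one must argue carefully that such a transition point is the leftmost or rightmost point of its subterrain rather than an interior feature. A pleasant feature of this route is that it treats $u_\ell$ and $u_r$ uniformly, so the seemingly harder ``move the guard toward the edge'' direction requires no separate tangency analysis.
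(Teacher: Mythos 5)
Your proposal is correct and follows essentially the same route as the paper: reduce coverage of $e_i$ to visibility of its two endpoints, then use symmetry of visibility together with the fact that the boundaries of $\V(v_i)$ and $\V(v_{i+1})$ lie in $U$ to conclude that nothing changes when $g$ slides to $u_\ell$ or $u_r$. The only difference is that you prove the contiguity ingredient (``seeing both endpoints of an edge implies seeing all of it'') from scratch via slope monotonicity, whereas the paper simply invokes it, essentially deferring to the order claim and its Lemma~\ref{lem:single-interval-vis}, which it proves by a convex-chain argument.
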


\begin{proof}
	$g$~covers~$e_i$, so $v_i, v_{i+1} \in \V(g)$, implying $g \in \V(v_i) \cap \V(v_{i+1})$.
	Moving $g$ towards $u_\ell$ does not move $g$ out of $\V(v_i)$ or~$\V(v_{i+1})$, as the boundaries of those regions are contained in $U$ by construction.
	Hence, $v_i, v_{i+1} \in \V(u_\ell)$ and thus $e_i \subseteq \V(u_\ell)$.
	Analogously $e_i \subseteq \V(u_r)$.
\end{proof}

It remains to consider the edges not entirely covered by a single guard, refer to Figure~\ref{fig:guards-critical-edge}.
We refer to such edges as \emph{critical edges:}

\begin{definition}[Critical Edge]\label{def:critical-edge}
	An edge $e \in E$ is \emph{critical w.r.t.\ $g \in C$} if $C \setminus \{ g \}$ covers some part of, but not all of, $\interior(e)$.
	If $e$ is critical w.r.t.\ some $g \in C$ we call $e$ \emph{critical edge.}
\end{definition}

So $e$ is critical if and only if more than one guard is responsible for covering~$\interior(e)$.

\begin{definition}[Left-Guard/Right-Guard]\label{def:left-right-guard}
	$g \in C$ is a \emph{left-guard (right-guard)} of $e_i \in E$ if $g < v_i$ ($v_{i+1} < g$) and $e_i$ is critical w.r.t.~$g$.
	We call $g$ a \emph{left-guard (right-guard)} if it is a left-guard (right-guard) of some $e \in E$.
\end{definition}

For the sake of completeness, we state and prove the following lemma which also follows from the well-established order claim~\cite{bkm-acfaafotg-07}:

\begin{lemma}\label{lem:single-interval-vis}
	Let $g \in C$ be a guard left of $v_i$ (right of $v_{i+1}$) such that $g$ covers a non-empty subset of $\interior(e_i)$.
	Then $g$ covers a single interval of~$e_i$, including $v_{i+1}$~($v_i$).
	In particular, this holds if $g$ is a left-guard (right-guard) of~$e_i$.
\end{lemma}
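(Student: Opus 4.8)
The plan is to prove the statement for a guard $g < v_i$ and to obtain the right-guard case by the mirror reflection $x \mapsto -x$, which swaps the roles of $v_i$ and $v_{i+1}$ and of ``left'' and ``right.'' First I would dispatch the ``in particular'' clause: if $g$ is a left-guard of $e_i$, then $e_i$ is critical w.r.t.\ $g$, so $C \setminus \{g\}$ fails to cover some point of $\interior(e_i)$; since $C$ is a cover of $T$, that point must lie in $\V(g)$, hence $g$ indeed covers a non-empty subset of $\interior(e_i)$ and the general hypothesis applies.

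The core geometric tool is the visibility characterization: for $x_g < x_s$, the guard $g$ sees the terrain point at abscissa $x_s$ if and only if no intermediate terrain point rises above the segment $\overline{gs}$, i.e., iff $\overline{gs}$ is nowhere below $T$ on the interval between $g$ and abscissa $x_s$. I would fix a point $p \in \interior(e_i) \cap \V(g)$ supplied by the hypothesis and first establish the decisive structural fact: the line of sight $\overline{gp}$ must reach $p$ from above the edge, so its slope is at most the slope of $e_i$; equivalently, $g$ lies on or above the supporting line of $e_i$. This follows from a local comparison just left of $p$: there $T$ coincides with $e_i$, and since $\overline{gp}$ is above $T$, the segment cannot be steeper than $e_i$.

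With $g$ on or above the supporting line of $e_i$, I would show that every point $q$ of $e_i$ with $p \leq q \leq v_{i+1}$ is also visible. Splitting the range between $g$ and $q$ at the abscissa of $p$: to the left of $p$ the segment $\overline{gq}$ lies weakly above $\overline{gp}$ (because $q$ sits on or above the line through $g$ and $p$, forcing the ray to $q$ to be at least as steep), and $\overline{gp}$ is above $T$ by visibility of $p$; to the right of $p$ both $\overline{gq}$ and $T = e_i$ are linear and agree at $q$, while $\overline{gq}$ is at or above $T$ at $p$, so $\overline{gq}$ stays above $T$ there as well. Hence $q \in \V(g)$. Since this argument applies to every visible point of $e_i$, the visible set contains, together with any of its points, the whole subsegment up to $v_{i+1}$; as $\V(g)$ is closed, $v_{i+1}$ is attained, so $\V(g) \cap e_i$ is a single interval whose right endpoint is $v_{i+1}$.

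I expect the main obstacle to be the decisive structural fact in the second paragraph\dash---proving that seeing an interior point forces $g$ onto or above the supporting line of $e_i$, thereby excluding the reflected orientation in which at most the left endpoint $v_i$ could be seen. Everything afterwards is a routine ``segment-stays-above-terrain'' argument on the two subintervals.
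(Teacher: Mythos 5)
Your proof is correct and follows essentially the same route as the paper's: you establish that $g$ lies on or above the supporting line of $e_i$ and then argue that visibility of $p$ propagates to every point of $e_i$ between $p$ and $v_{i+1}$, which is exactly the paper's convex-chain argument ($\overline{gp}$ and $\overline{pq}$ form an $x$-monotone convex chain nowhere below $T$), just unpacked into the two explicit slope comparisons. Your additional justification of the ``obviously'' step and your explicit dispatch of the ``in particular'' clause are fine but not a different method.
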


\begin{proof}
	Refer to Figure~\ref{fig:guards-critical-edge}.
	Obviously, $g = g_\ell$ is nowhere below the line supporting~$e_i$.
	Let $p$ be a point on $e_i$ seen by~$g_\ell$.
	It follows that $\overline{g_\ell p}$ and $\overline{p v_{i+1}}$ form an $x$-monotone convex chain that is nowhere  below~$T$.
	Thus, $\overline{g_\ell v_{i+1}}$ is nowhere below~$T$.
	It follows that $g_\ell$ sees $v_{i+1}$ and any point on $\overline{p v_{i+1}}$.
	A symmetric argument holds for the right-guard~$g_r$.
\end{proof}

\begin{corollary}\label{cor:unique-left-guard}
	For a critical edge there is exactly one left- and exactly one right-guard.
\end{corollary}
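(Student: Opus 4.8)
The plan is to reduce everything to the single‑interval structure of Lemma~\ref{lem:single-interval-vis} and then split the statement into an \emph{existence} part (there is a left‑ and a right‑guard) and a \emph{uniqueness} part (there is no second guard of either kind), letting the order claim do the work for uniqueness.

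First I would record that \emph{no guard of $C$ lies on the closed critical edge $e_i$}: any $g$ with $v_i \le g \le v_{i+1}$ sees every $q \in e_i$, since $\overline{g q} \subseteq e_i$ lies on $T$, so such a $g$ alone would cover $\interior(e_i)$, which contradicts $e_i$ not being covered by a single guard. Hence every guard seeing a point of $\interior(e_i)$ is strictly left of $v_i$ or strictly right of $v_{i+1}$, and by Lemma~\ref{lem:single-interval-vis} it sees a single interval anchored at $v_{i+1}$ (if left) or at $v_i$ (if right).

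For uniqueness I would argue by contradiction with the order claim. Suppose $g_1 < g_2 < v_i$ are both left-guards of $e_i$. Both see $v_{i+1}$, and for any $p \in \interior(e_i)$ seen by the outer guard $g_1$ we have $g_1 < g_2 < p < v_{i+1}$ with $g_1$ seeing $p$ and $g_2$ seeing $v_{i+1}$; the order claim then yields that $g_2$ sees $p$ as well. Thus $g_2$ covers on $e_i$ everything $g_1$ covers, so $g_1$ owns no point of $\interior(e_i)$ that it alone covers, i.e.\ $C \setminus \{g_1\}$ still covers all of $\interior(e_i)$, contradicting that $e_i$ is critical w.r.t.\ $g_1$. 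Hence there is at most one left-guard, and symmetrically at most one right-guard.

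For existence I would exploit the anchoring just established: points of $\interior(e_i)$ sufficiently close to $v_i$ can be covered only by right-positioned guards (a left-positioned guard reaching down to $v_i$ would cover all of $e_i$), and points close to $v_{i+1}$ only by left-positioned guards, so both kinds of guard must occur because $C$ is a cover while no single guard suffices. It then remains to \emph{promote} one guard on each side to an actual left-/right-guard, i.e.\ to exhibit a sub-interval of $\interior(e_i)$ that it alone covers; the natural candidate is the innermost guard on each side (the one reaching farthest onto $e_i$), which by the domination above sees at least as much of $e_i$ as its same-side companions. \textbf{I expect this promotion step to be the main obstacle:} when two or more guards on the same side overlap near an endpoint, none of them need own a private interval on $e_i$ individually, so the argument cannot merely produce guards on both sides but must invoke the definition of ``critical w.r.t.\ $g$'' directly, selecting the guard whose removal genuinely uncovers part of $\interior(e_i)$.
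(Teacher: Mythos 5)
Your uniqueness step contains a genuine error. From $g_1 < g_2 < p < v_{i+1}$ with $g_1$ seeing $p$ and $g_2$ seeing $v_{i+1}$, the order claim concludes that the \emph{outer} point $g_1$ sees the \emph{far} point $v_{i+1}$ — it does not conclude that $g_2$ sees $p$, and that statement is false in general: the inner guard may sit in a local pit and see only a short piece of $e_i$ adjacent to $v_{i+1}$, while the outer guard, from a higher vantage point, sees a strictly larger portion of $e_i$ over an obstacle that blocks the inner guard. So the inner guard need not dominate the outer one, and your contradiction does not go through as written. The repair is the paper's argument: by Lemma~\ref{lem:single-interval-vis} both $\V(g_1)\cap e_i$ and $\V(g_2)\cap e_i$ are single intervals containing $v_{i+1}$, hence nested in \emph{one direction or the other}; whichever guard owns the smaller interval is dominated on $e_i$ by the other, so $e_i$ is not critical w.r.t.\ it and it is not a left-guard — a contradiction in either case, with no need to determine which of the two guards does the dominating.

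On existence, your instinct that the ``promotion'' step is the obstacle is sound: same-side guards whose intervals tie, or a single outside guard seeing all of $e_i$, are exactly the configurations where no guard on a given side need have a private piece of $\interior(e_i)$. But note that the paper's own proof of this corollary establishes only the at-most-one direction and takes existence for granted (it is invoked later, e.g.\ in the proof of Lemma~\ref{lem:left-guard-to-u}). So this part of your proposal identifies a real delicacy in the statement rather than missing an argument the paper supplies. The decisive defect to fix in your write-up is the misapplication of the order claim in the uniqueness part.
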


\begin{proof}
	Suppose for the sake of contradiction that $g, g' \in C$ both are critical left-guards of $e \in E$.
	By Lemma~\ref{lem:single-interval-vis}, $I := \V(g) \cap e$ and $I' := \V(g') \cap e$ are single intervals on~$e$.
	Assume w.l.o.g.\ that $I' \subseteq I$.
	This contradicts $g'$ being a left-guard of $e$ because $g$ dominates $g'$ on~$e$, i.e., $e \subseteq \V(C \setminus \{g'\})$.
	So $e$ has exactly one critical left-guard.
	A symmetric argument shows that $e$ has exactly one right-guard.
\end{proof}

\begin{corollary}\label{cor:vis-intersect}
	Let $e \in E$ be a critical edge and $g_{\ell}, g_r \in C$ be its left- and right-guards.
	Then $\V(g_\ell) \cap e \cap \V(g_r) \neq \emptyset$.
\end{corollary}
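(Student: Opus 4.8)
Write $e = e_i$, so that $v_i$ and $v_{i+1}$ are its left and right endpoints; by Definition~\ref{def:left-right-guard} we have $g_\ell < v_i$ and $v_{i+1} < g_r$. The plan is to translate the claim into a statement about two intervals on $e$ and then argue by contradiction using that $C$ covers all of $T$. Since $g_\ell$ is a left-guard and $g_r$ a right-guard of $e$, Lemma~\ref{lem:single-interval-vis} applies directly: $I_\ell := \V(g_\ell) \cap e$ is a single interval of $e$ containing $v_{i+1}$, and $I_r := \V(g_r) \cap e$ is a single interval containing $v_i$. Thus $I_\ell$ is anchored at the right endpoint and $I_r$ at the left endpoint of $e$, and the assertion $\V(g_\ell) \cap e \cap \V(g_r) \neq \emptyset$ is exactly the statement that these two oppositely anchored intervals overlap. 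I would assume for contradiction that they are disjoint.

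First I would extract a \emph{private} point for each guard. Since $e$ is critical w.r.t.\ $g_\ell$ (Definition~\ref{def:critical-edge}), the set $C \setminus \{g_\ell\}$ does not cover all of $\interior(e)$; because $C$ is a full cover of $T$, the resulting uncovered point must be seen by $g_\ell$ itself. Hence there is a point $q \in \interior(e)$ with $q \in \V(g_\ell)$ and $q \notin \V(g)$ for every $g \in C \setminus \{g_\ell\}$, and in particular $q \in I_\ell$. A symmetric argument, using that $e$ is critical w.r.t.\ $g_r$, yields a point $q' \in I_r \cap \interior(e)$ seen by $g_r$ alone.

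If $I_\ell$ and $I_r$ were disjoint, then, as they are anchored at opposite endpoints, there would be a gap between them, i.e., a point $p \in \interior(e)$ with $p \notin I_\ell \cup I_r$ lying strictly left of $I_\ell$ and strictly right of $I_r$; such $p$ is seen by neither $g_\ell$ nor $g_r$. As $C$ covers $T$, some guard $g \in C \setminus \{g_\ell, g_r\}$ sees $p$, and I would finish with a case distinction on the position of $g$. If $g$ lies on $e$ itself, it sees all of $e$ (the segment from $g$ to any point of $e$ lies on $e \subseteq T$), hence it sees $q$, contradicting that $q$ is private to $g_\ell$. If $g < v_i$, then by Lemma~\ref{lem:single-interval-vis} it sees a single interval of $e$ containing $v_{i+1}$; since this interval also contains $p$ and $p$ lies strictly left of $I_\ell$, the interval contains all of $I_\ell$ and therefore $q$, again a contradiction. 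The case $v_{i+1} < g$ is symmetric and contradicts the privacy of $q'$.

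I expect the last step to be the main obstacle: one must guarantee that whichever guard fills the gap is forced to also swallow one of the private points. This hinges on the single-interval structure from Lemma~\ref{lem:single-interval-vis} together with the fact that $p$ lies on the far side of the relevant private point, plus the separate treatment of a guard sitting on $e$ (using that a point of a segment sees the whole segment). The remaining care is to confirm that the private points $q, q'$ exist at all, which follows directly from unwinding Definition~\ref{def:critical-edge} for $g_\ell$ and $g_r$ and using that $C$ is a full cover.
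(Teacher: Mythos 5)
Your proposal is correct and follows essentially the same route as the paper: assume the two intervals (anchored at opposite endpoints of $e$ by Lemma~\ref{lem:single-interval-vis}) are disjoint, take a guard $g$ covering a point $p$ in the resulting gap, and use the single-interval structure to conclude that $g$'s interval swallows everything $g_\ell$ (or $g_r$) contributes on $e$, contradicting criticality. Your version is marginally more explicit than the paper's\dash---you unwind criticality into ``private points'' rather than speaking of domination, and you separately handle a guard lying on $e$ itself, a case the paper absorbs into a ``w.l.o.g.''\dash---but the underlying argument is identical.
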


\begin{proof}
	For the sake of contradiction, suppose $I := e \setminus (\V(g_\ell) \cup \V(g_r)) \neq \emptyset$, and refer to Figure~\ref{fig:guards-critical-edge}.
	Since $C$ is feasible $I$ is covered, so some $g \in C$ sees a point $p \in I$.
	By Lemma~\ref{lem:single-interval-vis}, $g$ sees a continuous interval containing $p$ and, w.l.o.g., the right vertex of~$e$.
	It follows that $g$ dominates $g_\ell$ on $e$, contradicting that $g_\ell$ is a critical left-guard of~$e$.
\end{proof}

By Lemma~\ref{lem:guard-edge-u}, we can move non-critical guards to one of their neighbors in $U$ because they are only responsible for entire edges.
Unfortunately, this is impossible if $g \in C \setminus U$ is a left- or a right-guard:
We might lose coverage of some part of an edge that is critical w.r.t.~$g$.
However, the following lemma establishes that we can move $g$ to its left neighbor vertex if $g$ is not a right-guard (a~symmetric version for non-left-guards follows).

\begin{lemma}\label{lem:left-guard-to-u}
	Let $C$ be some finite cover of~$T$, let $g \in C \setminus V$ be a left- but not a right-guard, and let $v_\ell = \max\{ v \in V \mid v < g \}$ be the rightmost vertex left of~$g$.
	Then
	\begin{equation}
		C' = \left( C \setminus \{ g \} \right) \cup \{ v_\ell \}
	\end{equation}
	is a guard cover of~$T$.
\end{lemma}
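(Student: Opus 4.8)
The plan is to verify directly that $\V(C') = \V(C\setminus\{g\})\cup\V(v_\ell)$ equals $T$. Since $C$ is a cover and $C\setminus\{g\}\subseteq C'$, the only points that can be lost are those seen by $g$ alone, so I fix an arbitrary $p\in\V(g)\setminus\V(C\setminus\{g\})$ and show $p\in\V(v_\ell)$ (or, where that fails, derive a contradiction). Throughout I use that $g\in\interior(e_j)$ for the edge $e_j=\overline{v_\ell v_{j+1}}$ whose left endpoint is $v_\ell$; hence the whole segment $\overline{v_\ell g}$ is part of $T$ and lies on the supporting line $L$ of $e_j$. I then split on the position of $p$ relative to $g$. (If $p$ is a vertex the same analysis applies to an incident edge whose interior points near $p$ are uniquely seen by $g$.)

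The positive geometric core is the case $p\ge g$, which is exactly Figure~\ref{fig:guards-left-guard-to-v}: rightward visibility survives the move. Because $T$ coincides with $L$ on a neighborhood of $g$ to its right, the sightline $\overline{gp}$ must leave $g$ on or above $L$, which forces $p$ to lie on or above $L$; equivalently $g$ lies on or below the segment $\overline{v_\ell p}$. The chain $v_\ell, g, p$ is therefore convex from above, so $\overline{v_\ell p}$ dominates the terrain segment $\overline{v_\ell g}$ on $[v_\ell, g]$ and the sightline $\overline{gp}$ on $[g,p]$; in particular $\overline{v_\ell p}$ is nowhere below $T$, so $v_\ell$ sees $p$. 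The case $v_\ell\le p<g$ is immediate, since then $p\in\overline{v_\ell g}\subseteq T$ and $v_\ell$ sees $p$ along $e_j$.

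The remaining case $p<v_\ell$ is where the hypothesis that $g$ is not a right-guard enters. Let $e_i\ni p$; since $v_\ell$ is a vertex and $\interior(e_i)$ contains none, $v_{i+1}\le v_\ell<g$, so $v_{i+1}<g$. As $p\in\interior(e_i)$ is not covered by $C\setminus\{g\}$, this set does not cover all of $\interior(e_i)$; moreover, if it covered some of $\interior(e_i)$, then $e_i$ would be critical w.r.t.\ $g$ with $v_{i+1}<g$, making $g$ a right-guard of $e_i$ by Definition~\ref{def:left-right-guard} and contradicting the hypothesis. Hence $C\setminus\{g\}$ covers none of $\interior(e_i)$, so $g$ alone covers all of $e_i$. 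It then remains to show that $v_\ell$ also sees all of $e_i$: if not, some vertex $v_b$ with $p<v_b<v_\ell$ would hide part of $e_i$ from $v_\ell$ while $g$, lying higher, sees over it. The portion of $T$ in the pocket behind $v_b$ is invisible to $v_\ell$ but still covered by $C$, and I plan to invoke Lemma~\ref{lem:single-interval-vis} together with the order claim to show that any guard of $C\setminus\{g\}$ reaching into this pocket down toward $e_i$ must also cover $\interior(e_i)$, contradicting that $g$ alone covers $e_i$. Thus no such $v_b$ exists, $v_\ell$ sees all of $e_i\ni p$, and $p\in\V(v_\ell)$.

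I expect this last case to be the delicate step. The cases $p\ge v_\ell$ reduce cleanly to the convex-chain observation of Figure~\ref{fig:guards-left-guard-to-v} and to the fact that $\overline{v_\ell g}$ is a terrain segment. In the case $p<v_\ell$, by contrast, moving a guard leftward does \emph{not} in general preserve leftward visibility, so the argument cannot be purely local; the real work is the pocket analysis \dash--- choosing the correct blocking vertex $v_b$, handling nested blockers and intermediate valleys between $e_i$ and $v_\ell$, and tying coverage of the hidden far side of $v_b$ back to coverage of $e_i$ \dash--- which needs the single-interval structure of terrain visibility rather than a one-line convexity argument. A symmetric statement for a right- but not left-guard, moving $g$ to its right neighbor vertex, follows by mirroring the $x$-axis.
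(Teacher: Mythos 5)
Your cases $p\ge g$ and $v_\ell\le p<g$ are correct and coincide with the paper's argument that $v_\ell$ dominates $g$ to the right (the convex chain $\overline{v_\ell g}$, $\overline{gp}$ with $g\in\interior(e)$). The gap is in the case $p<v_\ell$, and it is structural rather than a missing detail: you aim to prove that $v_\ell$ itself sees $p$, whereas the paper proves something different there, namely that $p$ is already seen by a guard of $C\setminus\{g\}$. Concretely, since $g$ \emph{is} a left-guard of some critical edge $e_r$ to its right, $e_r$ has a right-guard $g_r\in C\setminus\{g\}$ with $g<e_r<g_r$, and Corollary~\ref{cor:vis-intersect} supplies a point $p_r\in\V(g)\cap e_r\cap\V(g_r)$; the chain $\overline{p g}$, $\overline{g p_r}$, $\overline{p_r g_r}$ is convex (as $g,p_r\notin V$) and nowhere below $T$, so $g_r$ sees every point $g$ sees to its left. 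Thus your case $p<v_\ell$ (and indeed all of $p\le g$) is vacuous. Tellingly, your proof never uses the hypothesis that $g$ \emph{is} a left-guard\dash---only that it is not a right-guard\dash---yet without that hypothesis the lemma is false (Lemma~\ref{lem:guard-edge-u} only licenses moving a non-critical guard to its $U$-neighbor, not all the way to $v_\ell$): a guard $g$ that alone covers an entire edge $e_i$ deep in a pocket left of $v_\ell$ need not have that pocket visible from $v_\ell$, since moving left along the edge containing $g$ can raise the shadow line cast past a blocking peak.

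This is also why the pocket analysis you sketch cannot be completed as stated. You correctly deduce (from ``not a right-guard'' and Definition~\ref{def:left-right-guard}) that $C\setminus\{g\}$ covers none of $\interior(e_i)$, so $g$ alone covers $e_i$; but then the region behind the blocker $v_b$ that is invisible to $v_\ell$ may be covered entirely by $g$ itself, so no guard of $C\setminus\{g\}$ need ``reach into the pocket,'' and there is no contradiction to extract from Lemma~\ref{lem:single-interval-vis} and the order claim. The step is left as a plan in your write-up, and the plan attacks the wrong target: the fact you need is not $e_i\subseteq\V(v_\ell)$ but $e_i\subseteq\V(g_r)$, which drops out of the single convex-chain argument above once the critical edge $e_r$ and its right-guard are brought into play.
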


\begin{proof}
	Since $g$ is a left-guard of some critical edge~$e_r$, there must exist a corresponding right-guard $g_r$ of $e_r$, see Figure~\ref{fig:guards-leftright-guard-to-u}.
	Let $p_\ell \in \{ p \in \V(g) \mid p \leq g \}$ be a point that $g$ sees to its left.
	We show that $p_\ell$ is seen by $g_r$:
	Consider~$p_r$, a point in $\V(g) \cap e_r \cap \V(g_r)$, which exists by Corollary~\ref{cor:vis-intersect}.
	$\overline{p_\ell g}$, $\overline{g p_r}$, and $\overline{p_r g_r}$ form a convex chain (convex due to $g, p_r \notin V$) that is nowhere below~$T$, so $p_\ell \in \V(g_r)$.
	Thus, $g$ is dominated to its left by~$g_r$.
	Moreover, $g$~is dominated to its right by $v_\ell$, see Figure~\ref{fig:guards-left-guard-to-v}:
	Let $p \in \{ p \in \V(g) \mid g \leq p \}$ be a point seen by $g$ located to its right.
	Then $\overline{v_\ell g}$ and $\overline{g p}$ form a convex chain nowhere below~$T$, so $p \in \V(v_\ell)$.
	In conclusion, replacing $g$ by $v_\ell$ in $C$ yields a feasible cover because $\{ p \in \V(g) \mid p \leq g \}$ is covered by~$g_r$ and $\{ p \in \V(g) \mid g \leq p \}$ by~$v_\ell$.
\end{proof}

\begin{corollary}\label{cor:right-guard-to-u}
	Let $C$ be some finite cover of~$T$, let $g \in C \setminus V$ be a right- but no left-guard, and let $v_r = \min\{ v \in V \mid g < v \}$ be the leftmost vertex right of~$g$.
	Then
	\begin{equation}
		C' = \left( C \setminus \{ g \} \right) \cup \{ v_r \}
	\end{equation}
	is a guard cover of~$T$.
\end{corollary}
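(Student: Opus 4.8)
The plan is to mirror the proof of Lemma~\ref{lem:left-guard-to-u} verbatim, interchanging the roles of ``left'' and ``right'' throughout; the statement is the exact reflection of that lemma. Since $g$ is a right-guard, it is a right-guard of some critical edge $e_\ell$ lying to its left, and by Corollary~\ref{cor:unique-left-guard} that edge has a unique corresponding left-guard $g_\ell \in C$. I would split $\V(g)$ at $g$ and argue that the part $\{ p \in \V(g) \mid g \leq p \}$ seen to the right of $g$ is dominated by $g_\ell$, while the part $\{ p \in \V(g) \mid p \leq g \}$ seen to the left of $g$ is dominated by $v_r$. Since every other guard in $C$ is untouched, establishing $\V(g) \subseteq \V(g_\ell) \cup \V(v_r)$ immediately yields that $C'$ covers all of~$T$.

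For the first claim I would use Corollary~\ref{cor:vis-intersect} to pick a point $p_\ell \in \V(g) \cap e_\ell \cap \V(g_\ell)$. Then for any point $p_r$ that $g$ sees to its right, the three segments $\overline{g_\ell p_\ell}$, $\overline{p_\ell g}$, and $\overline{g p_r}$ form an $x$-monotone chain nowhere below $T$ that is convex because $g, p_\ell \notin V$; hence $\overline{g_\ell p_r}$ is nowhere below $T$ and $p_r \in \V(g_\ell)$. For the second claim, for any point $p$ that $g$ sees to its left, the segments $\overline{p g}$ and $\overline{g v_r}$ form a convex chain nowhere below $T$ (here $g \notin V$ and $v_r$ is the neighboring vertex to the right of $g$), so $p \in \V(v_r)$.

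I do not anticipate a real obstacle, since this is purely the symmetric counterpart of the already-proven lemma. The only point needing a line of justification is that the point $p_\ell$ supplied by Corollary~\ref{cor:vis-intersect} can be taken strictly inside $e_\ell$, so that $p_\ell \notin V$ and the convexity argument applies: by Lemma~\ref{lem:single-interval-vis}, $g$ sees an interval of $e_\ell$ containing its left vertex and $g_\ell$ sees an interval containing its right vertex, and since $e_\ell$ is critical neither interval can be all of $e_\ell$, so their nonempty intersection lies in $\interior(e_\ell)$.
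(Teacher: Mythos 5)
Your proposal is correct and matches the paper's intent exactly: the paper offers no separate proof for this corollary, treating it as the verbatim left/right mirror of the proof of Lemma~\ref{lem:left-guard-to-u}, which is precisely what you carry out. Your extra remark justifying that the intersection point supplied by Corollary~\ref{cor:vis-intersect} can be taken in $\interior(e_\ell)$ (so the chain is genuinely convex) is a small point the paper glosses over even in the original lemma, and it is argued correctly.
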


So far, the status is that guards in $C \setminus U$ that are neither left- nor right-guard can be moved to a $U$-neighbor.
Left-guards (right-guards) that are no right-guard (left-guard) can be moved to the next vertex to the left (right).
The remaining case, i.e., guards that are both left- and right-guards, cannot happen:

\begin{lemma}\label{lem:leftright-guard-to-u}
	Let $C$ be a finite cover of~$T$.
	No $g \in C \setminus V$ is both a left- and a right-guard.
\end{lemma}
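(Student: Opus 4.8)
The plan is a proof by contradiction that recycles, almost verbatim, the left-domination argument already carried out in the proof of Lemma~\ref{lem:left-guard-to-u}; the crucial observation is that this domination uses only that $g$ is a left-guard. So suppose some $g \in C \setminus V$ is simultaneously a left-guard and a right-guard, and refer to Figure~\ref{fig:guards-leftright-guard-to-u}. Being a left-guard, $g$ is the (unique, by Corollary~\ref{cor:unique-left-guard}) left-guard of some critical edge $e_r$ lying strictly to the right of $g$, and $e_r$ has a right-guard $g_r \in C$ lying strictly to its right; in particular $g < g_r$, so $g_r \neq g$. I would first re-derive the key domination fact: every point $p \leq g$ seen by $g$ is also seen by $g_r$. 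Indeed, choosing $p_r \in \V(g) \cap e_r \cap \V(g_r)$ (nonempty by Corollary~\ref{cor:vis-intersect}, with $p_r \notin V$ as in Lemma~\ref{lem:left-guard-to-u}), the three segments $\overline{pg}$, $\overline{g p_r}$, $\overline{p_r g_r}$ form an $x$-monotone chain that is nowhere below $T$ and convex at its joints $g$ and $p_r$ (both interior to edges, since $g, p_r \notin V$), so the connecting segment $\overline{p g_r}$ is nowhere below $T$ and $p \in \V(g_r)$.

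Next I exploit that $g$ is also a right-guard: fix a critical edge $e_\ell$ of which $g$ is the right-guard. By Definition~\ref{def:left-right-guard} the right vertex of $e_\ell$ lies strictly left of $g$, so the whole edge $e_\ell$ lies to the left of $g$. Consequently every point $p \in \V(g) \cap e_\ell$ satisfies $p \leq g$, and the previous paragraph yields $p \in \V(g_r)$. Hence $\V(g) \cap e_\ell \subseteq \V(g_r)$, i.e., $g_r$ dominates $g$ on $e_\ell$, with $g_r \in C \setminus \{g\}$.

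To finish, recall that $e_\ell$ being critical w.r.t.\ $g$ means $C \setminus \{g\}$ fails to cover some $q \in \interior(e_\ell)$. Since $C$ covers $T$ we must have $q \in \V(g)$, whence $q \in \V(g) \cap e_\ell \subseteq \V(g_r) \subseteq \V(C \setminus \{g\})$, contradicting the choice of $q$. Therefore no $g \in C \setminus V$ can be both a left- and a right-guard. The only delicate step is the convexity of the three-segment chain at the non-vertex joints $g$ and $p_r$, but this is exactly the geometric fact already established for Lemma~\ref{lem:left-guard-to-u}; the genuinely new content here is merely the remark that its left-domination conclusion depends only on $g$ being a left-guard and hence applies to the edge $e_\ell$ sitting entirely to $g$'s left.
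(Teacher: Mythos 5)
Your proof is correct and takes essentially the same route as the paper's: both derive, via the convex chain $\overline{p_\ell g}$, $\overline{g p_r}$, $\overline{p_r g_r}$ through a point $p_r \in \V(g) \cap e_r \cap \V(g_r)$ (Corollary~\ref{cor:vis-intersect}), that $g_r$ sees everything $g$ sees to its left, contradicting that $g$ is a right-guard of $e_\ell$. You merely make the final contradiction slightly more explicit by naming the uncovered point $q \in \interior(e_\ell)$.
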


\begin{proof}
	Refer to Figure~\ref{fig:guards-leftright-guard-to-u}.
	Suppose for the sake of contradiction that $g \in C \setminus V$ is the left-guard of an edge $e_r$ (to the right of~$g$) and the right-guard of edge $e_\ell$ (to the left of~$g$).
	Since $e_r$ is critical, there must be a right-guard $g_r$ of~$e_r$.
	By Corollary~\ref{cor:vis-intersect} there is a point $p_r \in e_r$ seen by $g$ and~$g_r$.
	As $g \in C \setminus V$, $g \in \interior(e)$ for some edge~$e$.

	Now consider some point $p_\ell \in \V(g)$ such that $p_\ell < g$.
	$p_\ell$~and $p_r$ are not below the line supported by $e$ and the same holds for $g$ and $g_r$ w.r.t.~$e_r$.
	It follows that segments $\overline{p_\ell g}$, $\overline{gp_r}$, and $\overline{p_r g_r}$ form an $x$-monotone convex chain that is nowhere below $T$.
	Hence, $p_\ell \in \V(g_r)$.
	Since $p_\ell$ was arbitrary, any point $p \in \V(g)$ to the left of $g$ is also seen by~$g_r$, a contradiction to $g$ being a right-guard.
\end{proof}

The next theorem shows that the set $U$ as defined in Equation~\eqref{eq:u} contains all guard candidates necessary for a minimum-cardinality guard cover of~$T$.

\begin{theorem}\label{thm:g}
	Let $T$ be a terrain and consider $U$ from Equation~\eqref{eq:u}.
	Then we have
	\begin{equation}\label{eq:u-opt}
		\opt(U, T) = \opt(T, T).
	\end{equation}
\end{theorem}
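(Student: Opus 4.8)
The plan is to prove the two inequalities $\opt(U,T) \le \opt(T,T)$ and $\opt(U,T) \ge \opt(T,T)$ separately. The second is immediate: since $U \subseteq T$, every cover $C \subseteq U$ that is feasible for $\tgp(U,T)$ is also feasible for $\tgp(T,T)$, so restricting the guard candidates to $U$ can only increase the optimum. All the work therefore goes into the first inequality, for which I would start from a finite optimal cover $C^*$ of $T$ (one exists because $V$ covers $T$ and $V \subseteq U$, so $\opt(T,T) \le n$) and transform it into a cover contained in $U$ of no larger cardinality.

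The transformation is an iterative relocation procedure. As long as the current cover $C$ contains a guard $g \in C \setminus U$, I would pick such a $g$ and classify it according to Definition~\ref{def:left-right-guard} into one of at most three cases, relocating it to a point of $U$ while preserving feasibility. If $g$ is a left- but not a right-guard, Lemma~\ref{lem:left-guard-to-u} lets me replace it by the vertex $v_\ell \in V \subseteq U$; symmetrically, if $g$ is a right- but not a left-guard, Corollary~\ref{cor:right-guard-to-u} replaces it by $v_r \in U$; and the remaining possibility, that $g$ is simultaneously a left- and a right-guard, never occurs by Lemma~\ref{lem:leftright-guard-to-u}.

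The one case the earlier lemmas do not package directly is that of a guard $g \in C \setminus U$ that is neither a left- nor a right-guard. Here I would first argue that every point exclusively covered by $g$, i.e.\ every $p \in \V(g) \setminus \V(C \setminus \{g\})$, lies on an edge that $g$ covers entirely: if $p \in \interior(e_i)$ but $g$ did not cover $e_i$ entirely, then by Lemma~\ref{lem:single-interval-vis} $g$ would see only a single sub-interval of $e_i$ incident to one of its vertices, forcing $C \setminus \{g\}$ to cover part but not all of $\interior(e_i)$ and thus making $e_i$ critical w.r.t.\ $g$ --- contradicting that $g$ is neither a left- nor a right-guard (the case $g \in \interior(e_i)$ is trivial, as $g$ then sees its own edge entirely). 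Hence all of $g$'s exclusive responsibilities are full edges, and Lemma~\ref{lem:guard-edge-u} guarantees that moving $g$ to a $U$-neighbor keeps each of them covered, so feasibility is maintained.

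Finally I would tie the procedure together. Each relocation removes one guard from $C \setminus U$ and inserts a point of $U$, so the integer $|C \setminus U|$ strictly decreases while $|C|$ never increases; since the lemmas apply to any finite cover, they remain valid after each step. The process therefore terminates in a feasible cover $C^\dagger \subseteq U$ with $|C^\dagger| \le |C^*| = \opt(T,T)$, which yields $\opt(U,T) \le \opt(T,T)$ and, together with the reverse inequality, Equation~\eqref{eq:u-opt}. I expect the neither-guard case to be the main obstacle: it is the only step not handed to us by a named lemma, and it hinges on the bridging observation that a guard with no critical responsibilities can only be the sole cover of complete edges, which is precisely what makes Lemma~\ref{lem:guard-edge-u} applicable.
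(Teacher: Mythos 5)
Your proposal is correct and follows essentially the same route as the paper: an iterative relocation of each guard in $C \setminus U$ via Lemma~\ref{lem:guard-edge-u}, Lemma~\ref{lem:left-guard-to-u}, Corollary~\ref{cor:right-guard-to-u}, and Lemma~\ref{lem:leftright-guard-to-u}, with the claim following by induction. The only difference is that you spell out why a guard that is neither a left- nor a right-guard is exclusively responsible only for entire edges\dash---a step the paper leaves implicit when invoking Lemma~\ref{lem:guard-edge-u}\dash---and your argument for it is sound.
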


\begin{proof}
	Let $C$ be optimal w.r.t.\ $\tgp(T,T)$.
	We show how to replace a single guard $g \in C \setminus U$ by one in~$U$ while maintaining feasibility, i.e., $\V(C) = T$.
	The claim then follows by induction.

	Should $g$ be neither left- nor right-guard, it can be replaced by a neighboring point in~$U$ by Lemma~\ref{lem:guard-edge-u}.
	If~$g$ is a left-, but not a right-guard (or vice versa), it can be replaced by its left (right) neighbor in $V \subseteq U$ by Lemma~\ref{lem:left-guard-to-u} (Corollary~\ref{cor:right-guard-to-u}).
	Lemma~\ref{lem:leftright-guard-to-u} asserts that $g$ cannot be a left- and a right-guard at the same time.
\end{proof}

\subsection{Full Discretization}
\label{sec:discretization-complete}

We formulate the key result of this section:
The \ac{CTGP}, i.e., finding a minimum-cardinality guard cover $C$ guarding an entire terrain~$T$, without any restriction on where on $T$ the guards can be placed, is a discrete problem with a discretization $(U, W(U))$ of size~$\bigO(n^3)$.

\begin{theorem}\label{thm:opt}
	Let $T$ be a terrain, and consider $U$ and $W(U)$ from Equations~\eqref{eq:u} and~\eqref{eq:w}.
	If $C \subseteq U$ is optimal w.r.t.\ $\tgp(U, W(U))$, $C$~is optimal w.r.t.\ $\tgp(T,T)$:
	\begin{equation}\label{eq:opt-equals-disc}
		\opt(T,T) = \opt(U,W(U)).
	\end{equation}
\end{theorem}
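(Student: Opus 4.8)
The plan is to prove Equation~\eqref{eq:opt-equals-disc} as a short chain of two equalities, each supplied by one of the preceding theorems, and then to upgrade this equality of optima to the statement about the specific cover~$C$. So the real work has already been done in Sections~\ref{sec:discretization-witnesses} and~\ref{sec:discretization-guards}; here I would only assemble those results.

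First I would verify that Theorem~\ref{thm:w} is applicable with $G := U$, which requires the hypothesis $\V(U) = T$. This holds because $V \subseteq U$ by definition~\eqref{eq:u}, and the vertices alone already cover~$T$: any point $p$ on an edge $e_i = \overline{v_i v_{i+1}}$ is seen by both endpoints $v_i$ and~$v_{i+1}$, since the connecting segment lies along $e_i \subseteq T$ and is therefore nowhere below~$T$. Hence $\V(V) = T$ and a fortiori $\V(U) = T$. With applicability secured, Theorem~\ref{thm:w} instantiated at $G = U$ yields $\opt(U, W(U)) = \opt(U, T)$, while Theorem~\ref{thm:g} yields $\opt(U, T) = \opt(T, T)$. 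Chaining the two gives
\[
	\opt(T,T) = \opt(U,T) = \opt(U, W(U)),
\]
which is precisely~\eqref{eq:opt-equals-disc}.

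For the claim about~$C$ itself, suppose $C \subseteq U$ is optimal w.r.t.\ $\tgp(U, W(U))$. Then $C$ is in particular feasible, so by Theorem~\ref{thm:w} it is feasible w.r.t.\ $\tgp(U, T)$, i.e.\ $\V(C) = T$; since $C \subseteq U \subseteq T$, this is exactly feasibility w.r.t.\ $\tgp(T,T)$. Moreover $|C| = \opt(U, W(U)) = \opt(T,T)$ by the displayed chain, so $C$ attains the continuous optimum and is thus optimal w.r.t.\ $\tgp(T,T)$. I do not expect any genuine obstacle in this proof\dash---it is essentially a corollary of the two main theorems\dash---and the only point that warrants care is confirming the coverage hypothesis $\V(U) = T$ before invoking Theorem~\ref{thm:w}, which is stated only for guard sets that already cover the terrain.
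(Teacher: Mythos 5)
Your proof is correct and follows essentially the same route as the paper, which establishes~\eqref{eq:opt-equals-disc} by exactly the chain $\opt(T,T) = \opt(U,T) = \opt(U,W(U))$ via Theorems~\ref{thm:g} and~\ref{thm:w}. Your additional checks\dash---verifying $\V(U)=T$ before invoking Theorem~\ref{thm:w} and spelling out why the specific cover $C$ is feasible and optimal for $\tgp(T,T)$\dash---are sound and merely make explicit what the paper leaves implicit.
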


\begin{proof}
	\begin{equation}
		\opt(T, T)
			\stackrel{\eqref{eq:u-opt}}{=} \opt(U, T)
			\stackrel{\eqref{eq:w-opt}}{=} \opt(U, W(U)). \qedhere
	\end{equation}
\end{proof}

\begin{observation}\label{obs:card}
	Observations~\ref{obs:witcard} and~\ref{obs:guardcard} yield:
	The set of guard candidates $U$ and the witness set $W(U)$ have cardinality $\bigO(n^2)$ and $\bigO(n^3)$, respectively.
\end{observation}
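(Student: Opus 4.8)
The plan is a direct counting argument that composes the two cited observations; there is no new geometric content to establish. First I would re-derive the bound on $|U|$. By its definition in~\eqref{eq:u}, $U$ is the union of the $n$ vertices $V$ and, for each $v \in V$, the extremal points of $\V(v)$. Since $\V(v)$ decomposes into $\bigO(n)$ subterrains, and each subterrain contributes only its leftmost and rightmost point (so $\bigO(1)$ extremal points), every vertex adds $\bigO(n)$ points to $U$. Summing over all vertices gives $|U| \le n + n \cdot \bigO(n) = \bigO(n^2)$, which is precisely the count recorded in Observation~\ref{obs:guardcard}.

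Next I would bound $|W(U)|$ by substituting this into the general witness estimate. Recall from Section~\ref{sec:discretization-witnesses}, as restated in Observation~\ref{obs:witcard}, that for \emph{any} finite guard set $G$ one has $|W(G)| \in \bigO(n|G|)$: each guard $g \in G$ contributes $\bigO(n)$ visibility intervals, so their overlay subdivides the $x$-axis into $\bigO(n|G|)$ features, and $W(G)$ places at most one witness per (inclusion-minimal) feature. Taking $G = U$ and inserting the bound $|U| \in \bigO(n^2)$ just obtained yields
\begin{equation*}
	|W(U)| \in \bigO\!\left(n|U|\right) = \bigO\!\left(n \cdot n^2\right) = \bigO(n^3),
\end{equation*}
as claimed.

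Since the result is a one-line composition of the two prior observations, I anticipate no genuine obstacle. The only subtlety worth flagging is that restricting $W(U)$ to inclusion-minimal features does not asymptotically improve the $\bigO(n|U|)$ estimate; this is exactly the content of Observation~\ref{obs:witcard}, whose construction in Figure~\ref{fig:witnesses-inclusion-minimal} shows the worst case is already attained by inclusion-minimal features. With that caveat absorbed by the cited observation, the proof reduces entirely to the substitution $\bigO(n \cdot n^2) = \bigO(n^3)$.
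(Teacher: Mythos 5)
Your proof is correct and matches the paper's intended argument exactly: the paper's Observation~\ref{obs:card} is itself just the composition of the $\bigO(n)$-subterrains-per-vertex count behind Observation~\ref{obs:guardcard} with the $\bigO(n|G|)$ overlay-feature bound from Section~\ref{sec:discretization-witnesses}, instantiated at $G = U$. You also correctly note the one subtlety\dash---that inclusion-minimal filtering does not lower the asymptotic witness count, which is precisely what Observation~\ref{obs:witcard} records\dash---so nothing is missing.
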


\begin{observation}\label{obs:pol-coord}
	Let $B$ be the largest number of bits required to represent a coordinate of~$V$.
	The number of bits required to represent the coordinates of a guard candidate $g \in U$ is polynomial in $B$ as the coordinates of $g$ are defined by the intersection of two lines each spanned by two vertices in~$V$.
\end{observation}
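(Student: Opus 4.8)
The plan is to treat the two constituents of $U$ (see Equation~\eqref{eq:u}) separately. The vertices $V \subseteq U$ are part of the input and their coordinates already require at most $B$ bits, so nothing is to be shown for them. The work lies entirely with the extremal points of the visibility regions $\V(v)$, $v \in V$, so first I would pin down the geometric characterization promised in the statement. An extremal point $p$ of $\V(v)$ that is itself a terrain vertex is again covered by the $V$ part, so assume $p$ lies in $\interior(e_i)$ for some edge $e_i = \overline{v_i v_{i+1}}$. Being extremal in its subterrain, the line of sight from $v$ to $p$ just grazes the terrain, i.e., it passes through some vertex $v'$ that blocks visibility immediately beyond $p$. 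Hence $p$ is precisely the intersection of the line $\ell_1$ through $v$ and $v'$ with the line $\ell_2$ supporting $e_i$, and both $\ell_1$ and $\ell_2$ are spanned by pairs of vertices of $V$\dash---exactly the claim quoted in the observation.

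Second, I would carry out the bit-complexity bookkeeping. Writing the line through two points $(x_a, y_a)$ and $(x_b, y_b)$ in implicit form $a x + b y = c$ with
\[
	a = y_b - y_a, \qquad b = x_a - x_b, \qquad c = x_a y_b - x_b y_a,
\]
the coefficients are obtained from the input coordinates (each of bit length at most $B$) by a constant number of subtractions and one multiplication, so each of $a$, $b$, $c$ needs $\bigO(B)$ bits. Intersecting $\ell_1\colon a_1 x + b_1 y = c_1$ with $\ell_2\colon a_2 x + b_2 y = c_2$ by Cramer's rule yields
\[
	p = \left( \frac{c_1 b_2 - c_2 b_1}{a_1 b_2 - a_2 b_1}, \; \frac{a_1 c_2 - a_2 c_1}{a_1 b_2 - a_2 b_1} \right),
\]
a rational point whose numerators and common denominator are $2 \times 2$ determinants in $\bigO(B)$-bit entries and thus again require $\bigO(B)$ bits. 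Since a product of two $B$-bit numbers has $\bigO(B)$ bits and a sum of two $\bigO(B)$-bit numbers has $\bigO(B)$ bits, and only a constant number of such operations is performed, the coordinates of $p$ are rationals of bit length $\bigO(B)$, which is polynomial in $B$. The same accounting applies verbatim if the input coordinates are themselves rationals, at the cost of an additional constant factor.

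The arithmetic accounting is routine; the only step that needs genuine care is the geometric characterization of the extremal points, i.e., arguing that every non-vertex endpoint of a visibility subterrain is a grazing intersection of the form above. I would also note that the denominator $a_1 b_2 - a_2 b_1$ is nonzero: since $p$ is an honest intersection point lying on the terrain, $\ell_1$ and $\ell_2$ are not parallel and meet in a unique finite point. Once the characterization is settled, the bit bound follows immediately from the constant-size arithmetic circuit defining $p$, so the coordinates of every $g \in U$ are representable with a number of bits polynomial in $B$.
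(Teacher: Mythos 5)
Your proposal is correct and follows exactly the route the paper takes: the paper's entire justification is the inline remark that each non-vertex guard candidate is the intersection of two lines each spanned by two vertices of $V$, and you simply make explicit both the geometric characterization (the line of sight from $v$ grazing a blocking vertex $v'$, intersected with the supporting line of the edge containing the extremal point) and the routine Cramer's-rule bit accounting. Nothing is missing and nothing diverges from the paper's argument.
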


\acresetall
\section{Complexity Results}
\label{sec:complexity}

For a long time, the NP-hardness of the \ac{CTGP} was generally assumed, but not shown until 2010 by King and Krohn (in the conference version of~\cite{kk-tginph-11}).
In this section we establish that the \ac{CTGP} is also a member of NP, and thus NP-complete.
This is surprising, as it is a long-standing open problem for the more general \ac{AGP}:
For the \ac{AGP} it is not known whether the coordinates of an optimal guard cover can be represented with a polynomial number of bits.

\begin{theorem}
	The \acf{CTGP} is NP-complete:
	Given a terrain $T$ with rational vertices $V(T) \subset \Q^2$ and $k \in \N$, it is NP-complete to decide whether there exist $k \in \N$ guards $G = \{g_1, \dots, g_k\} \subset T$ with $\V(G) = T$.
\end{theorem}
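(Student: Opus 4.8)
The plan is to treat the two halves of NP-completeness separately. NP-hardness is already established: King and Krohn proved it by a reduction from \textsc{Planar 3Sat}~\cite{kk-tginph-11}. Hence the novel task is membership in NP, i.e., exhibiting a certificate of polynomial size that is verifiable in polynomial time. Deciding whether $k$ guards suffice is, by monotonicity of coverage, the same as deciding whether $\opt(T,T) \leq k$, so a ``yes''-certificate should be a guard cover of cardinality at most~$k$. The difficulty\dash---and precisely the reason the analogous question for the \ac{AGP} remains open\dash---is that an optimal continuous cover might \emph{a priori} require guards whose coordinates need super-polynomially many bits; our discretization is exactly what removes this obstacle.

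For membership I would take the certificate to be a set $C \subseteq U$ with $|C| \leq k$, where $U$ is the guard candidate set from Equation~\eqref{eq:u}. By Theorem~\ref{thm:g} we have $\opt(U,T) = \opt(T,T)$, so a continuous cover of $T$ of size at most~$k$ exists if and only if a cover $C \subseteq U$ with $|C| \leq k$ exists (any such $C$ is feasible for $\tgp(T,T)$ since $U \subseteq T$). By Observation~\ref{obs:pol-coord}, every coordinate of every $g \in U$ is rational with a number of bits polynomial in the input bit size~$B$, because each such $g$ is the intersection of two lines, each spanned by two vertices of~$V$. Since $|C| \leq k \leq |U| \in \bigO(n^2)$, the entire certificate has size polynomial in $n$ and~$B$.

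It remains to verify in polynomial time that a proposed $C \subseteq U$ indeed satisfies $\V(C) = T$. Following the \ac{EGC} paradigm, I would compute for each $g \in C$ its visibility region $\V(g)$ as a union of $\bigO(n)$ closed visibility intervals projected onto the $x$-axis. Each interval endpoint is again an intersection of lines through pairs of vertices (respectively through $g$ and a vertex), and every ``$p$ sees $q$'' test reduces to checking that a segment lies nowhere below the $\bigO(n)$ edges of~$T$\dash---a finite set of sign evaluations of determinants over such rationals. Overlaying the $\bigO(n|C|)$ resulting intervals and testing whether their union equals the full $x$-range of~$T$ then decides coverage.

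The main obstacle is not the combinatorial verification itself but guaranteeing that all intermediate quantities stay of polynomial bit size, which is exactly what separates the \ac{TGP} from the \ac{AGP}. Observation~\ref{obs:pol-coord}, together with the fact that every geometric primitive arising in the visibility computation is a line--line intersection among input-defined lines, pins down the bit complexity: exact rational arithmetic suffices and the verifier runs in time polynomial in $n$ and~$B$. Combining a polynomial-size certificate, polynomial-time verification, and the known NP-hardness yields NP-completeness of the \ac{CTGP}.
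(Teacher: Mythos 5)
Your proposal is correct and follows essentially the same route as the paper: NP-hardness is quoted from King and Krohn, and membership in NP rests on restricting the certificate to the polynomial-size, polynomial-bit-complexity guard candidate set $U$ (justified by Theorem~\ref{thm:g} and Observation~\ref{obs:pol-coord}) followed by a polynomial-time visibility check. The paper's proof is just terser, delegating the verification step to the visibility-computation algorithms of Hurtado et~al.\ rather than spelling out the interval-overlay argument.
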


\begin{proof}
	The NP-hardness of the \ac{CTGP} was established in~\cite{kk-tginph-11}.
	It remains to show that the \ac{CTGP} is a member of~NP:
	A non-deterministic Turing machine determines $U$ (possible in polynomial time by Observation~\ref{obs:pol-coord}), and guesses $k$ guards~$C$.
	It then verifies whether $\V(C) = T$ in polynomial time~\cite{hlmssss-tvwmv-14}.
\end{proof}

\acresetall
\section{Polynomial Time Approximation Scheme}
\label{sec:ptas}

In this section we combine our discretization from Section~\ref{sec:discretization} with the \ac{PTAS} for discrete $\tgp(G,W)$ with finite $G,W \subset T$ by Gibson et~al.~\cite{gkkv-gtvls-14}, who established the following theorem:

\begin{theorem}[Gibson et~al.~\cite{gkkv-gtvls-14}]\label{thm:gibson}
	Let $T$ be a terrain, and let $G, W \subset T$ be finite sets of guard candidates and witnesses with $W \subseteq \V(G)$.
	Then there exists a \ac{PTAS} for $\tgp(G,W)$, i.e., for any constant $\epsilon > 0$, there is an algorithm that returns $C \subseteq G$ with $W \subseteq \V(C)$ and
	\begin{equation}\label{eq:gibson}
		|C| \leq (1 + \epsilon) \opt(G,W).
	\end{equation}
\end{theorem}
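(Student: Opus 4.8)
The plan is to establish the theorem through the \emph{local search} framework that yields approximation schemes for geometric covering problems (Mustafa and Ray). The algorithm itself is simple: start from any feasible cover $C \subseteq G$ (e.g.\ a greedy set cover of $W$), and repeatedly apply \emph{$k$-local improvements}, i.e.\ replace a subset $C' \subseteq C$ with $|C'| \le k$ by a set $C'' \subseteq G$ with $|C''| < |C'|$ such that $(C \setminus C') \cup C''$ still covers $W$, where $k = \bigTheta(1/\epsilon^2)$. Each improvement strictly decreases $|C|$, and each step is found by brute force over the $\bigO(|G|^k)$ candidate swaps, which is polynomial for constant $\epsilon$. The process therefore terminates in polynomial time at a $k$-locally optimal cover $L$, and feasibility of $L$ ($W \subseteq \V(L)$) is maintained throughout.

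First I would fix an optimal cover $O$ with $|O| = \opt(G,W)$ and reduce the problem to bounding $|L|$ against $|O|$. Guards in $L \cap O$, together with the witnesses they cover, can be removed, so we analyze $L \setminus O$ against $O \setminus L$. The heart of the analysis is to construct a bipartite graph $H$ on vertex set $L \cup O$ satisfying the \emph{locality condition}: every witness $w \in W$ is covered by some $l \in L$ and some $o \in O$ that are \emph{adjacent} in $H$. Given such an $H$ that is additionally \emph{planar}, the Mustafa–Ray theorem applies as a black box and yields $|L| \le (1 + \bigO(1/\sqrt{k}))\,|O| = (1+\epsilon)\opt(G,W)$, because $k$-local optimality of $L$ combined with the bounded local expansion of a planar graph forces the two solutions to be balanced in size.

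The main obstacle, and the only place where terrain structure rather than generic set-cover machinery enters, is proving that $H$ can be chosen planar. Here I would exploit the non-crossing nature of sightlines on an $x$-monotone terrain, i.e.\ the \emph{order claim}: if $a < b < c < d$ along $T$ with $a \in \V(c)$ and $b \in \V(d)$, then $a \in \V(d)$. For each witness $w$ one would attach $w$ to a canonically chosen covering guard of $L$ and of $O$ on each side (say the nearest covering guard to the left and to the right), and then argue that two such ``sightline pairs'' cannot cross: a crossing would, via the order claim, produce a guard dominating one of the chosen representatives, contradicting the minimality built into the selection. Placing the guards on a line by $x$-coordinate and routing the witness-induced edges as non-crossing arcs above and below then certifies that $H$ embeds in the plane.

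The remaining steps are routine and I would keep them brief. Verifying the locality condition follows from feasibility of both $L$ and $O$, which guarantees each witness is covered on the relevant sides and hence incident to an $L$--$O$ edge of $H$; checking the polynomial running time for fixed $\epsilon$ is immediate from the $\bigO(|G|^k)$ bound with $k = \bigTheta(1/\epsilon^2)$; and converting planarity plus local optimality into the ratio $(1+\epsilon)$ is exactly the statement of the Mustafa–Ray lemma. I expect the planarity argument to demand the most care, as it is precisely the point at which the one-dimensional geometry of the terrain, encoded in the order claim, must be turned into a concrete combinatorial embedding.
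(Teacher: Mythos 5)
The paper does not prove this theorem at all---it is imported as a black box from Gibson et~al.~\cite{gkkv-gtvls-14}---and your outline correctly reconstructs the argument of that cited source: local search with radius $k=\bigTheta(1/\epsilon^2)$, analyzed via the Mustafa--Ray framework by building a planar bipartite locality graph between the local optimum and a global optimum, with planarity supplied by the order claim on $x$-monotone terrains. So your approach is essentially the same as the (external) proof the paper relies on; the only caveat is that the planarity construction, which you rightly identify as the crux, is sketched rather than carried out.
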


We combine our discretization from Theorem~\ref{thm:opt} with Theorem~\ref{thm:gibson}:

\begin{theorem}\label{thm:ptas}
	There is a \ac{PTAS} for the \acf{CTGP}.
	That is, for any constant $\epsilon > 0$, there is a polynomial-time algorithm which, given a terrain~$T$, returns $C \subset T$ with $\V(C) = T$ and $|C| \leq (1 + \epsilon) \opt(T,T)$.
\end{theorem}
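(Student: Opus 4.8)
The plan is to chain together the two results already at our disposal: the discretization of Section~\ref{sec:discretization} (culminating in Theorem~\ref{thm:opt}) and the discrete PTAS of Gibson et~al.\ (Theorem~\ref{thm:gibson}). The overarching idea is that the continuous problem $\tgp(T,T)$ reduces, in polynomial time and with no loss in the optimum, to a discrete instance $\tgp(U, W(U))$ of polynomial size, on which one simply runs the known discrete PTAS.

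First I would, given the terrain $T$ and a constant $\epsilon > 0$, construct the discretization $(U, W(U))$ from Equations~\eqref{eq:u} and~\eqref{eq:w}. I would argue this is doable in time polynomial in $n$ and in the bit-length of the input: by Observation~\ref{obs:card} we have $|U| \in \bigO(n^2)$ and $|W(U)| \in \bigO(n^3)$, and by Observation~\ref{obs:pol-coord} every candidate in $U$ has coordinates representable with a polynomial number of bits, so the whole instance has polynomial encoding size. I would also record that $\V(U) = T$: since $\tgp(T,T)$ admits a finite cover, $\opt(T,T)$ is finite, and Theorem~\ref{thm:g} gives $\opt(U,T) = \opt(T,T)$, so some feasible $C \subseteq U$ covers all of $T$. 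Hence $W(U) \subseteq \V(U)$, and the hypotheses of Theorem~\ref{thm:gibson} are satisfied for the finite instance $(U, W(U))$.

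Next I would apply Theorem~\ref{thm:gibson} to $\tgp(U, W(U))$ with parameter $\epsilon$, obtaining in polynomial time a set $C \subseteq U$ with $W(U) \subseteq \V(C)$ and $|C| \leq (1+\epsilon)\opt(U, W(U))$. Two things then remain, and both are prepared by Section~\ref{sec:discretization}. For \emph{feasibility:} since $C \subseteq U$ covers the witness set $W(U)$, Theorem~\ref{thm:w} (applied with $G = U$) upgrades this to $T \subseteq \V(C)$, i.e.\ $\V(C) = T$, so $C$ is a genuine cover of the entire terrain. For the \emph{approximation ratio:} Equation~\eqref{eq:opt-equals-disc} of Theorem~\ref{thm:opt} gives $\opt(U, W(U)) = \opt(T,T)$, whence $|C| \leq (1+\epsilon)\opt(T,T)$.

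I do not expect a genuine obstacle here, as the heavy lifting resides in Theorems~\ref{thm:opt} and~\ref{thm:gibson}; the remaining work is careful bookkeeping. The two points I would be most attentive to are (i)~that feasibility with respect to the \emph{discrete} witness set actually transfers to coverage of \emph{all} of $T$ \dash---{}this is precisely the content of Theorem~\ref{thm:w} and must be invoked rather than taken for granted\dash---{}and (ii)~that the \emph{construction} of $(U, W(U))$, not merely its cardinality, runs in polynomial time, which relies on Observation~\ref{obs:pol-coord}. Assembling these yields a polynomial-time algorithm returning $C \subset T$ with $\V(C) = T$ and $|C| \leq (1+\epsilon)\opt(T,T)$, which is exactly the claimed PTAS.
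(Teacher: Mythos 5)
Your proposal is correct and follows essentially the same route as the paper's proof: construct the polynomial-size discretization $(U, W(U))$, run the discrete PTAS of Theorem~\ref{thm:gibson} on it, and transfer feasibility and the approximation ratio back to $\tgp(T,T)$ via Theorems~\ref{thm:w} and~\ref{thm:opt}. Your explicit appeal to Theorem~\ref{thm:w} for the feasibility transfer and to Observation~\ref{obs:pol-coord} for the polynomial-time constructibility is slightly more careful bookkeeping than the paper's two-line argument, but it is the same proof.
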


\begin{proof}
	Using Equations~\eqref{eq:u} and~\eqref{eq:w} we determine the sets $U$ and $W(U)$ for $T$ with $|U| + |W(U)| \in \bigO(n^3)$ by Observation~\ref{obs:card}.
	By Theorem~\ref{thm:gibson}, we can compute $C \subseteq U \subset T$ with
	\begin{equation}
		|C|
			\stackrel{\eqref{eq:gibson}}{\leq} (1 + \epsilon) \opt(U, W(U))
			\stackrel{\eqref{eq:opt-equals-disc}}{=} (1 + \epsilon) \opt(T, T),
	\end{equation}
	where $C$ is feasible w.r.t.\ $\tgp(T,T)$ by Theorem~\ref{thm:opt}.
\end{proof}

\acresetall
\section{Reducing the Size of the Discretization}
\label{sec:filters}

While $\bigO(n^2)$ guard candidates and $\bigO(n^3)$ witnesses, see Observation~\ref{obs:card}, may be satisfactory from a theoretical point of view, it is imperative to reduce their numbers for an efficient implementation.
We propose filtering techniques that, while not reducing the asymptotic size of the discretization, typically remove around 90\,\% of the guard candidates and an even larger fraction of the witnesses.
Experiments in Section~\ref{sec:experiments} demonstrate this to be a key success factor that increases the solvable instance size by several orders of magnitude.

We say that $g \in T$ \emph{dominates} $g' \in T$ if $\V(g') \subseteq \V(g)$, in which case $g'$ can be safely discarded;
our filters in Sections~\ref{sec:filters-dominated} and~\ref{sec:filters-pointguards} do just that.
A core issue, however, is that visibility calculations are expensive, so the key challenge is to identify dominated guard candidates \emph{without determining their visibility region.}
The guard filter in Section~\ref{sec:filters-pointguards} has that feature.
Sections~\ref{sec:filters-witnesses} and~\ref{sec:filters-open-problems} discuss witness filtering and an open problem.

\subsection{Filtering Dominated Guards}
\label{sec:filters-dominated}

Let $T$ be a terrain and $G \subset T$ a finite set of guard candidates with $\V(G) = T$.
Consider $g, g' \in G$, suppose we know $\V(g)$ and~$\V(g')$, and observe that checking whether $g$ dominates $g'$ takes $\bigO(n)$ time since visibility regions consist of $\bigO(n)$ subterrains.
Moreover, removing all dominated guards from $G$ requires $\bigO(|G|^2)$ domination queries, i.e., an intolerable $\bigO(n^5)$ time when applied to $G = U$ from Equation~\eqref{eq:u}.

Instead, we devise a heuristic using $\bigO(|G|)$ domination queries and thus an acceptable $\bigO(n^3)$ time for $G = U$.
Suppose $G$ is ordered w.r.t.\ $x$-coordinates.
The \emph{local domination filter} removes all guard candidates that are dominated by one of their neighbors.
This is based on the observation that neighboring guards' visibility regions often are quite similar or one clearly dominates the other (a~local ``dent'').
Experiments demonstrate that this strategy is beneficial in terms of time and memory consumption, see Section~\ref{sec:experiments-dom}.

\subsection{Filtering Edge-Interior Guards}
\label{sec:filters-pointguards}

\begin{figure}
	\centering
	\includegraphics[width=\linewidth]{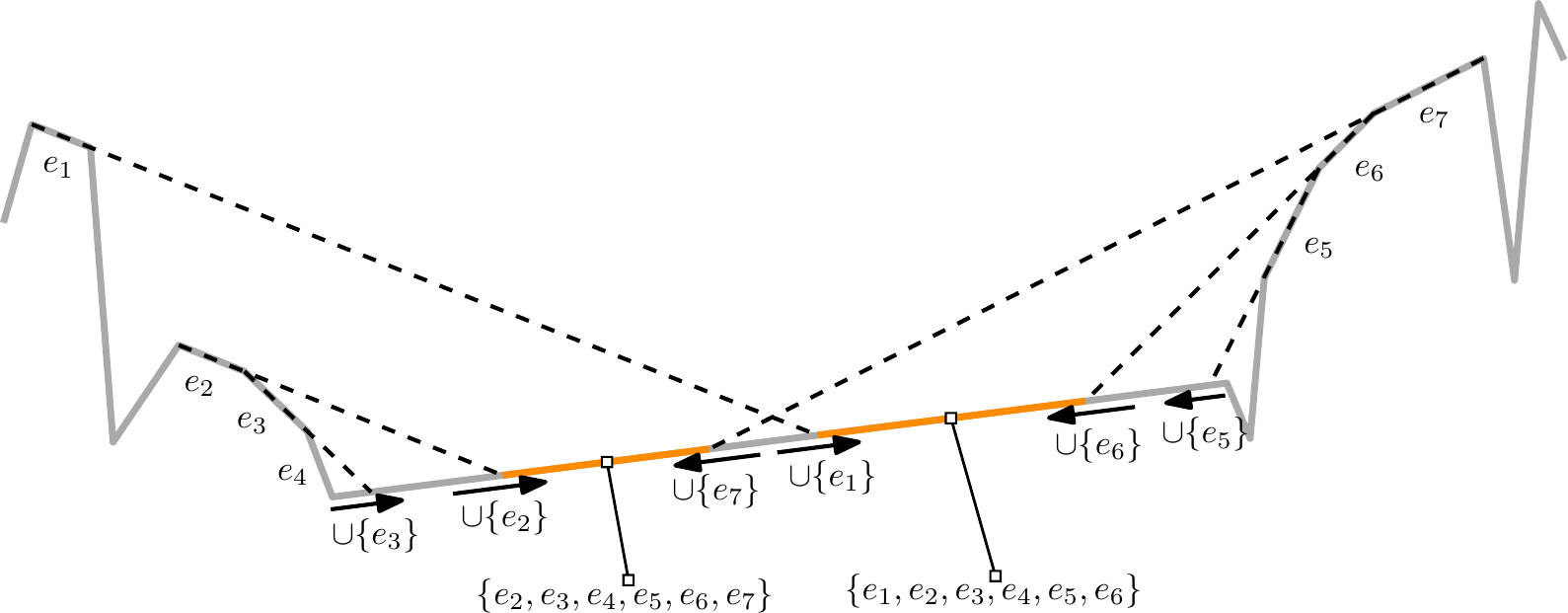}
	\caption{Edge-interior guards are only responsible for entire edges.
		Edges can only become visible when crossing some $u \in U$, the arrows indicate in which direction.
		Only the orange regions contain guard candidates that are inclusion-maximal w.r.t.\ entire edges.}
	\label{fig:point-guard-filter}
\end{figure}

Let $T$ be a terrain and $U$ the guard candidates from Equation~\eqref{eq:u} and fix an edge~$e$.
By Lemma~\ref{lem:left-guard-to-u}, Corollary~\ref{cor:right-guard-to-u}, and Lemma~\ref{lem:leftright-guard-to-u} assume w.l.o.g.\ that all critical guards are located at the vertices.
Hence, guards in $U_e := U \cap \interior(e)$ are only responsible for covering entire edges.
Recall that when moving across $u \in U_e$, a vertex becomes visible or invisible, depending on the direction, by construction of~$U$.
Furthermore, covering an entire edge is equivalent to seeing both its vertices.
The sets of edges entirely seen by each $u \in U_e$,
\begin{equation}
	E_u := \{ e \in E \mid e \subseteq \V(u) \} = \{ e_i \in E \mid v_i, v_{i+1} \in \V(u) \},
\end{equation}
define a partial ordering on $U_e$ w.r.t.\ inclusion as indicated in Figure~\ref{fig:point-guard-filter}.
Most importantly, $u$~is inclusion-maximal if $E_{u'} \not\supset E_u$ for all $u' \in U_e$.
We show that it suffices to consider the guard candidates that are inclusion-maximal w.r.t.~$E_u$:

\begin{theorem}\label{thm:guard-filter-edge}
	Let $U_e' \subseteq U_e$ be the set that only contains inclusion-maximal guard candidates w.r.t.\ entire edges, as defined above.
	Then
	\begin{equation}
		U' = (U \setminus U_e) \cup U_e'
	\end{equation}
	admits covering $T$ with the same number of guards as~$U$, i.e.,
	\begin{equation}
		\opt(U', T) = \opt(U, T).
	\end{equation}
\end{theorem}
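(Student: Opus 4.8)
The plan is to show that restricting the edge-interior guards $U_e$ to their inclusion-maximal subset $U_e'$ never hurts any cover, by exhibiting a domination argument. Concretely, I would argue that every $u \in U_e$ is dominated (w.r.t.\ its responsibility for covering \emph{entire edges}) by some $u^* \in U_e'$, and then lift this to optimal covers. The crucial preliminary observation\dash---already furnished by Lemma~\ref{lem:left-guard-to-u}, Corollary~\ref{cor:right-guard-to-u}, and Lemma~\ref{lem:leftright-guard-to-u}\dash---is that we may assume w.l.o.g.\ that all critical guards sit at vertices, so the only role left for any $u \in U_e$ is to entirely cover some set $E_u$ of edges. This reduces the whole question to the combinatorics of the partial order induced by the sets $E_u$ under inclusion.

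The key steps, in order, are as follows. First, I would establish the domination claim: for any $u \in U_e$ there exists an inclusion-maximal $u^* \in U_e'$ with $E_u \subseteq E_{u^*}$; this is immediate because $(U_e, \subseteq)$ is a finite partial order on the family $\{E_u\}$, and every element lies below some maximal element. Second, starting from an optimal cover $C \subseteq U$ with $|C| = \opt(U,T)$, I would replace each guard $g \in C \cap U_e$ by a corresponding maximal guard $g^* \in U_e'$ with $E_g \subseteq E_{g^*}$, leaving the guards in $C \setminus U_e$ untouched. Third, I would verify that the resulting set $C'$ still covers $T$: every edge entirely covered by some $g \in C \cap U_e$ remains entirely covered because $E_g \subseteq E_{g^*}$; and since we assumed all critical coverage is handled by vertex guards (which lie in $U \setminus U_e \subseteq U'$), no partial coverage of any critical edge is lost. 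Hence $\V(C') = T$ and $C' \subseteq U'$ with $|C'| \le |C|$, giving $\opt(U',T) \le \opt(U,T)$. The reverse inequality $\opt(U',T) \ge \opt(U,T)$ is trivial since $U' \subseteq U$, so we conclude $\opt(U',T) = \opt(U,T)$.

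The main obstacle I expect is making the ``w.l.o.g.\ all critical guards are at vertices'' reduction fully rigorous when combined with the swap on edge-interior guards. One must be careful that moving critical guards to vertices (via the earlier lemmas) and replacing edge-interior guards by inclusion-maximal ones are compatible operations that can be applied to the \emph{same} cover without interfering: a guard that was edge-interior and non-critical should not silently become critical after the other guards are moved. The clean way around this is to fix an optimal cover, first apply Theorem~\ref{thm:g}'s argument so that no guard is a critical left- or right-guard while remaining in the interior of an edge, and only then perform the maximal-edge swap on the genuinely non-critical edge-interior guards\dash---at which point each such guard is, by Lemma~\ref{lem:guard-edge-u}, responsible solely for entire edges, and the inclusion $E_g \subseteq E_{g^*}$ suffices to preserve feasibility.
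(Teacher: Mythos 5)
Your proposal is correct and follows essentially the same route as the paper: invoke Lemma~\ref{lem:left-guard-to-u}, Corollary~\ref{cor:right-guard-to-u}, and Lemma~\ref{lem:leftright-guard-to-u} to assume w.l.o.g.\ that no edge-interior guard is critical, observe that such a guard is then responsible only for the entire edges in $E_u$, swap it for an inclusion-maximal sibling in $U_e'$, and note that $\opt(U',T)\ge\opt(U,T)$ is trivial from $U'\subseteq U$. In fact you are somewhat more explicit than the paper's terse ``w.l.o.g.'' about sequencing the critical-guard relocation before the maximal-edge swap, which is a reasonable refinement rather than a deviation.
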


\begin{proof}
	A guard cannot be left- and right-guard at the same time by Lemma~\ref{lem:leftright-guard-to-u}.
	Furthermore, by Lemma~\ref{lem:left-guard-to-u} (Corollary~\ref{cor:right-guard-to-u}), a left-guard (right-guard) can be moved to its left (right) neighbor in~$V$.
	Thus, w.l.o.g., $u \in U_e$ is no left- or right-guard, because $U_e$ does not contain vertices by definition.
	Hence, no edge is critical w.r.t.\ $u$ by Definition~\ref{def:critical-edge}, so $u$ is only responsible for covering entire edges and can be replaced by its inclusion-maximal sibling in $U_e'$ without changing the feasibility or cardinality of a cover of~$T$.
\end{proof}

The key is that identifying guard candidates $u \in U \setminus V$ that are not inclusion-maximal w.r.t.\ entire edges can be implemented without determining~$\V(u)$:
For each $u \in U \setminus V$, store a reference to which vertex's visibility region is extremal at~$u$, as well as whether it is situated left or right of~$u$.
This allows to decide which vertex becomes visible or invisible when sweeping across $u$ from left to right, as indicated in Figure~\ref{fig:point-guard-filter}.

We use the following sweep line algorithm.
For every $e \in E$, sweep through $U_e$ from left to right.
While encountering $u \in U_e$ where new vertices become visible, do nothing.
When reaching the first $u \in U_e$ where a vertex becomes invisible, report~$u$.
Since $u$ is inclusion-maximal w.r.t.\ vertices, it is inclusion-maximal w.r.t.\ entire edges.
Then ignore all points corresponding to vertices becoming invisible until encountering the first that becomes visible, and continue as above.

This discards up to 98\,\% of the guard candidates efficiently enough to essentially remove the computational boundary between \ac{VTGP} and \ac{CTGP}, see Section~\ref{sec:experiments-edge}.

\begin{observation}
	The above sweep line algorithm needs only the visibility regions of vertices, not those of $U \setminus V$.
	Deciding whether a vertex $v$ becomes visible or invisible at $u \in U_e$ depends only on whether $u$ is extremal in $\V(v)$ and on $v < u$, as described above.
\end{observation}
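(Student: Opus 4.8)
The plan is to handle the observation's two assertions separately, reading the first off the construction of $U$ and deriving the second from Lemma~\ref{lem:single-interval-vis}. For the first sentence I would argue directly from Equation~\eqref{eq:u}: every $u \in U \setminus V$ is, by definition, an extremal point of $\V(v)$ for some vertex $v \in V$, so while assembling $U$ I attach to each such $u$ a tag recording the originating vertex $v$ together with the side on which the relevant subterrain of $\V(v)$ lies relative to $u$ (equivalently, whether $v < u$ or $u < v$). The sweep over each $U_e$ then consults only these tags and the already-computed visibility regions of the \emph{vertices}; it never forms $\V(u)$ for a non-vertex $u$. This is exactly the first claim.

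For the second sentence, fix an edge $e$ and a point $u \in U_e = U \cap \interior(e)$ tagged with a vertex $v$. First I would invoke visibility symmetry: a guard at $x \in e$ sees $v$ if and only if $x \in \V(v)$, so the positions on $e$ from which $v$ is visible are exactly $e \cap \V(v)$, and the points of $\interior(e)$ at which the visibility of $v$ toggles are precisely the extremal points of $\V(v)$ inside $e$, i.e.\ the points of $U_e$ tagged with $v$. The key structural step is Lemma~\ref{lem:single-interval-vis}: because $u \in \interior(e)$, the vertex $v$ is not interior to $e$, and $x$-monotonicity forces $e$ to lie entirely to one side of $v$, so either $v < u$ or $u < v$. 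Lemma~\ref{lem:single-interval-vis} then guarantees that $e \cap \V(v)$ is a \emph{single} interval containing the endpoint of $e$ farther from $v$, whence $u$ is its endpoint nearer to $v$. Thus if $v < u$ the interval extends to the right of $u$, so $u$ is a left endpoint and $v$ becomes \emph{visible} when the sweep crosses $u$ rightward; if $u < v$ the interval lies to the left and $v$ becomes \emph{invisible}. Either way the event is determined by the tagged vertex $v$ together with the single bit $v < u$, with no appeal to $\V(u)$.

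The one place where genuine care is required is this application of Lemma~\ref{lem:single-interval-vis} through visibility symmetry, since it is precisely what prevents $\V(v)$ from meeting the same edge in several intervals\dash---which would otherwise decouple the visible/invisible event from the side of $v$ and break the observation. The residual cases are degenerate and harmless: if $v$ is an endpoint of $e$, then $\overline{xv} \subseteq e$ lies on $T$ for every $x \in e$, so $v$ is visible from all of $e$ and contributes no extremal point in $\interior(e)$; grazing or collinear configurations only influence which boundary point is designated as extremal and are absorbed by the tie-breaking already used in constructing $U$.
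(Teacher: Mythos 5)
Your argument is correct and follows the same reasoning the paper relies on: the observation is stated there without a separate proof, being a direct consequence of the construction of $U$ in Equation~\eqref{eq:u} and of the single-interval structure of $\V(v)\cap e$ from Lemma~\ref{lem:single-interval-vis}, which is exactly the tool you invoke. Your write-up merely makes explicit the visibility-symmetry step, the tagging of each $u\in U\setminus V$ with its originating vertex and side, and the degenerate case where $v$ is an endpoint of $e$\dash---all of which the paper leaves implicit in the sweep-line description preceding the observation.
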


\begin{observation}\label{obs:card-filtered}
	Filtering $U$ as above still yields $\bigO(n^2)$ guard candidates:
	Insert a vertex below each guard on the slopes in Figure~\ref{fig:witnesses-inclusion-minimal}.
	Then every other interval is inclusion-maximal w.r.t.\ the vertices on the slopes.
\end{observation}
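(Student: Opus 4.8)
The plan is to read the statement as two claims of unequal difficulty. The upper bound is immediate: since $U' \subseteq U$, Observation~\ref{obs:guardcard} already gives $|U'| \leq |U| \in \bigO(n^2)$. The real content is the matching lower bound, namely that there is a family of terrains on which $\bigOmega(n^2)$ guard candidates \emph{survive} the edge-interior filter of Theorem~\ref{thm:guard-filter-edge}. Exhibiting such a family certifies that the filter, however effective in practice, cannot improve the asymptotic worst case. I would therefore build an explicit terrain and count the candidates that remain inclusion-maximal.

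For the construction I would reuse the valley terrain of Figure~\ref{fig:witnesses-inclusion-minimal}, the same worst-case example used for witnesses in Observation~\ref{obs:witcard}, and modify it exactly as the statement suggests: immediately below each of the $\bigTheta(n)$ slope guards I insert one additional vertex. This keeps the vertex count in $\bigTheta(n)$, subdivides each slope into $\bigTheta(n)$ short edges, and\dash---crucially\dash---turns each former slope position into an \emph{extremal point} of the visibility region of a vertex across the valley. By the definition of $U$ in Equation~\eqref{eq:u}, each such point therefore genuinely lies in $U \cap \interior(e)$ for the corresponding slope edge~$e$, so that $\bigTheta(n^2)$ candidates appear on the slope edges, mirroring the $\bigTheta(n^2)$ inclusion-minimal witness features of Observation~\ref{obs:witcard}.

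Next I would invoke the sweep-line characterization from Section~\ref{sec:filters-pointguards}: a candidate $u \in U_e$ is reported\dash---i.e.\ survives filtering\dash---precisely when, continuing the sweep to its right, some vertex becomes \emph{invisible} at~$u$, which is to say that $u$ is a local maximum of the inclusion order on the sets $E_u$ of entirely-seen edges. The geometric core of the argument is to check that, as one moves along a slope edge across these extremal points, the set of entirely visible opposite edges alternately gains and loses an edge, so that two consecutive candidates have non-nested sets $E_u$. Then every other crossing is a local maximum reported by the sweep\dash---this is precisely the ``every other interval is inclusion-maximal'' phrasing\dash---and the number of survivors is $\bigOmega(n^2)$, matching the upper bound.

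I expect the non-nesting check to be the main obstacle. One must lay out the slopes and valley peaks so that the visibility of the far edges along a slope is genuinely non-monotone: each step across an extremal point must trade exactly one newly visible far edge for the one that drops out of view, so that neither $E_u$ of two consecutive candidates contains the other. If instead the sets nested, the filter would collapse them and the count would fall below $\bigOmega(n^2)$. Once the geometry is pinned down so that each crossing performs exactly such a one-for-one exchange, the alternation of local maxima and minima\dash---and hence the $\bigOmega(n^2)$ count of surviving candidates\dash---follows by the same bookkeeping already used for the witness lower bound in Observation~\ref{obs:witcard}.
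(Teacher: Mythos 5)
Your overall plan coincides with the paper's: the paper proves this observation with exactly the two-sentence sketch contained in the statement (reuse the terrain of Figure~\ref{fig:witnesses-inclusion-minimal}, promote the slope guards to vertices, and note that every other interval is inclusion-maximal), and your reading of the content as a worst-case lower bound --- with the $\bigO(n^2)$ upper bound coming for free from $U' \subseteq U$ and Observation~\ref{obs:guardcard} --- is the right one. Your identification of the alternation (non-nesting) of the sets $E_u$ as the crux is also correct; in the paper this is implicitly supplied by the same interleaving of visibility regions that makes the witness features of Observation~\ref{obs:witcard} inclusion-minimal.

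There is, however, a concrete slip in your counting step. You place the $\bigTheta(n^2)$ surviving candidates ``on the slope edges,'' obtained by turning ``each former slope position into an extremal point of the visibility region of a vertex across the valley.'' That accounts for only $\bigTheta(n)$ points of $U$: there are only $\bigTheta(n)$ former guard positions, and a point is counted once no matter for how many visibility regions it is extremal. The quadratic count arises the other way around: after the insertion, each of the $\bigTheta(n)$ slope \emph{vertices} $v$ has a visibility region $\V(v)$ fragmented by the $\bigTheta(n)$ valleys, hence $\bigTheta(n)$ extremal points by Equation~\eqref{eq:u}, and these extremal points lie in the interiors of the \emph{valley} edges --- precisely where the $\bigTheta(n|G|)$ inclusion-minimal witness features of Observation~\ref{obs:witcard} sit. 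Consequently the sweep in your third paragraph must run along a valley edge, not a slope edge: crossing consecutive candidates there, slope vertices alternately become visible and invisible, and since a short slope edge is entirely seen if and only if both its vertices are, every other interval is a local maximum of $E_u$ and is reported by the filter. With the roles of slopes and valleys corrected, your alternation argument is exactly the paper's ``every other interval is inclusion-maximal w.r.t.\ the vertices on the slopes,'' and the $\bigOmega(n^2)$ bound follows.
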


\subsection{Filtering Witnesses}
\label{sec:filters-witnesses}

Let $U$ be a possibly filtered set of guard candidates.
The construction of the witness set $W(U)$ as in Equation~\eqref{eq:w} already includes a filtering mechanism:
only inclusion-minimal witnesses need to be kept.
Observe that a smaller, filtered, $U$~automatically yields a smaller~$W(U)$.
Furthermore, observe that in terms of an implementation witnesses are much cheaper then guard candidates:
They require no visibility region or coordinates\dash---by Observation~\ref{obs:witness-representation}, they only need to store references to the guards covering them.

We acquire witnesses very much like in Section~\ref{sec:filters-pointguards}:
Sort the extremal points of all guard candidates' visibility regions by their $x$-coordinates.
For each of these points we know whether a visibility region opens or closes and to which guard it is associated.
Sweeping through these points, it is straightforward to keep track of which guard candidates see the current event point and where this set is inclusion-minimal.

Our approach keeps witnesses that are locally, but not necessarily globally, inclusion-minimal.
We can efficiently exploit the underlying geometry to identify the locally inclusion-minimal witnesses, but it is an open question whether globally inclusion-minimal witnesses can be identified just as efficiently.
However, experiments demonstrate that our approach is extremely effective, see Section~\ref{sec:experiments-witnesses}.

\subsection{Open Problem}
\label{sec:filters-open-problems}

We would like to find an optimal discretization.
But what is an optimal discretization?
Obviously, a good discretization is small\dash---asymptotically and, if an implementation is of interest, in terms of constant factors.
However, a discretization that minimizes $|G| + |W|$ is one where $|G| = \opt(T,T)$, i.e., just as hard to find as solving $\tgp(T,T)$.
Hence, we require a discretization to be obtainable in polynomial time.

Our discretization has size $|U| + |W(U)| \in \bigO(n^3)$.
The filters do not reduce the asymptotic complexity but prove effective by reducing, on average, the size by more than~90\,\%.
Is there a discretization, obtainable in polynomial time, of size~$\bigo(n^3)$?

\acresetall
\section{Optimal Solutions with Integer Linear Programming}
\label{sec:implementation}

We combine discretization and filters from Sections~\ref{sec:discretization} and~\ref{sec:filters} to an efficient algorithm.
It solves instances of the \ac{TGP} with up to $10^6$ vertices within roughly 1--2 minutes on a standard desktop computer, see Section~\ref{sec:experiments}.
For evaluation, the filtering techniques can be enabled individually.

\subsection{\acs{IP} Formulation}
\label{sec:ip}

Let $T$ be a terrain, and let $G, W \subset T$ be finite sets of guard candidates and witnesses, such that $W \subseteq \V(G)$.
We formulate $\tgp(G,W)$ as \ac{IP}:
\begin{alignat}{3}
	\text{min}  & \sum_{g \in G} x_g \label{eq:ip-begin} \\
	\text{s.t.} & \sum_{g \in \V(w) \cap G} x_g \geq 1 & \quad & \forall w \in W \\
	            & x_g \in \{0, 1\}                     & \quad & \forall g \in G. \label{eq:ip-end}
\end{alignat}
A binary variable $x_g$ for each guard candidate $g \in G$ indicates whether $g$ is picked:
$x_g = 1$ if and only if $g$ is part of the cover.
For each witness $w \in W$, a constraint ensures that $w$ is covered by at least one guard.
We minimize the number of guards in the cover.

Choosing $G = U$ (possibly filtered) and $W = W(U)$ from Equations~\eqref{eq:u} and~\eqref{eq:w}, \eqref{eq:ip-begin}--\eqref{eq:ip-end} model the \ac{CTGP};
picking $G = V$ and $W = W(V)$ corresponds to the \ac{VTGP}.

\subsection{Algorithm}
\label{sec:algorithm}

\begin{algorithm}[tb]
	\Input{Terrain $T$}
	\Output{Guard cover of $T$}
	\BlankLine
	$(U, W) \gets (V(T), \emptyset)$\Comment*[r]{vertices are guard candidates in both modes}
	\For{$u \in U$}{
		determine $\V(u)$\;\label{alg:visi-v}
	}
	\If(\Comment*[f]{as opposed to \vertexguardmode}){\pointguardmode}{\label{alg:ip-point-begin}
		$U \gets U \cup \bigcup_{v \in V(T)} \{ p \mid \text{$p$ is extremal in $\V(v)$} \}$\Comment*[r]{Equation~\eqref{eq:u}}
		\If{\pointguardfilter}{
			filter edge-interior guards in $U$ by sweep\Comment*[r]{Section~\ref{sec:filters-pointguards}}
		}
		\For{$u \in U \setminus V(T)$}{
			determine $\V(u)$\Comment*[r]{after \pointguardfilter, see Section~\ref{sec:filters-pointguards}} \label{alg:visi-u}
		}\label{alg:ip-point-end}
	}
	\If{\domfilter}{\label{alg:ip-dom-begin}
		filter out guards in $U$ dominated by a neighbor\Comment*[r]{Section~\ref{sec:filters-dominated}}\label{alg:ip-dom-end}
	}
	\uIf{\witnessfilter}{\label{alg:ip-witness-begin}
		$W \gets \text{inclusion-minimal features from overlay of $U$}$\Comment*[r]{Equation~\eqref{eq:w}}\label{alg:ip-witness-end}
	}
	\Else{
		$W \gets \text{all features from overlay of $U$}$\Comment*[r]{unfiltered version of Equation~\eqref{eq:w}}
	}
	solve $\tgp(U,W)$ with an \acs{IP} solver\;\label{alg:ip-solver}
	\caption{Optimal solutions for the \acs{TGP}.}
	\label{alg:ip}
\end{algorithm}

Algorithm~\ref{alg:ip} has two modes:
\pointguardmode for solving $\tgp(T,T)$, and \vertexguardmode for $\tgp(V,T)$.
Everything except lines~\ref{alg:ip-point-begin}--\ref{alg:ip-point-end} applies to both modes, lines~\ref{alg:ip-point-begin}--\ref{alg:ip-point-end} generate non-vertex guard candidates and possibly filter them.

Filtering mechanisms are activated individually:
\domfilter from Section~\ref{sec:filters-dominated} removes guard candidates that are dominated by one of their neighbors,
\pointguardfilter corresponds to the guard filter from Section~\ref{sec:filters-pointguards} and is only available in the \pointguardmode mode, and
\witnessfilter determines whether only the inclusion-minimal witness are used, refer to Equation~\eqref{eq:w}.

We remark two things about line~\ref{alg:ip-solver}.
\begin{inparaenum}
\item
	It is the only subroutine that requires exponential time, due to the NP-hardness of the \ac{TGP}~\cite{kk-tginph-11}.
	In our experiments, however, this is not the bottleneck of our algorithm;
	the geometric subroutines require most time and memory.
	We discuss this in Sections~\ref{sec:experiments-time} and~\ref{sec:experiments-memory}.
\item
	Algorithm~\ref{alg:ip} transforms an instance of \ac{VTGP} or \ac{CTGP} into an instance of \ac{SC} which it entrusts to a solver.
	An \acs{IP} or {SAT} solver, a \ac{SC} approximation algorithm, the \ac{PTAS} by Gibson et~al.~\cite{gkkv-gtvls-14}, or any other solver (including those oblivious to the underlying geometry) would work.
	Observe that all solvers benefit from our filtering framework.
	However, since benchmarking the underlying solver is not our concern, we restrict our experiments to a state-of-the-art \acs{IP} solver.
\end{inparaenum}

\subsection{Implementation}
\label{sec:implementation-frameworks}

We implemented Algorithm~\ref{alg:ip} in {C++11} and compiled with {g++}-4.8.4~\cite{gcc}.
The geometric subroutines use \acs{CGAL}-4.6~\cite{cgal} (\acl{CGAL}\acused{CGAL}) with the \texttt{CGAL::Exact\_predicates\_exact\_constructions\_kernel} kernel;
note that we follow the \ac{EGC} paradigm, i.e., use exact number types instead of floating-point arithmetic in geometric subroutines, to guarantee correctness.
Terrain visibility is solved by the implementation of Haas and Hemmer~\cite{hh-tv-15}.
We solve \acp{IP} using CPLEX-12.6.0~\cite{cplex}.
Furthermore, we use boost-1.58.0~\cite{boost} and simple-svg-1.0.0~\cite{simple-svg}.

\acresetall
\section{Experiments}
\label{sec:experiments}

We evaluate Algorithm~\ref{alg:ip}.
It can solve large instances within minutes on a standard desktop computer;
our filtering techniques prove critical to success.
Especially \pointguardfilter and \witnessfilter, see Sections~\ref{sec:filters-pointguards} and~\ref{sec:filters-witnesses}, significantly increase the solvable instance size.
Our instances, the tested configurations of Algorithm~\ref{alg:ip}, the experimental setup, and our findings are described in Sections~\ref{sec:experiments-instance-description}, \ref{sec:configurations}, \ref{sec:experiments-setup}, and~\ref{sec:experiments-results}, respectively.

\subsection{Instances}
\label{sec:experiments-instance-description}

\begin{figure}
	\centering
	\subfigure[\walk: Random walk with uniform step width.]{
		\includegraphics[width=.95\linewidth]{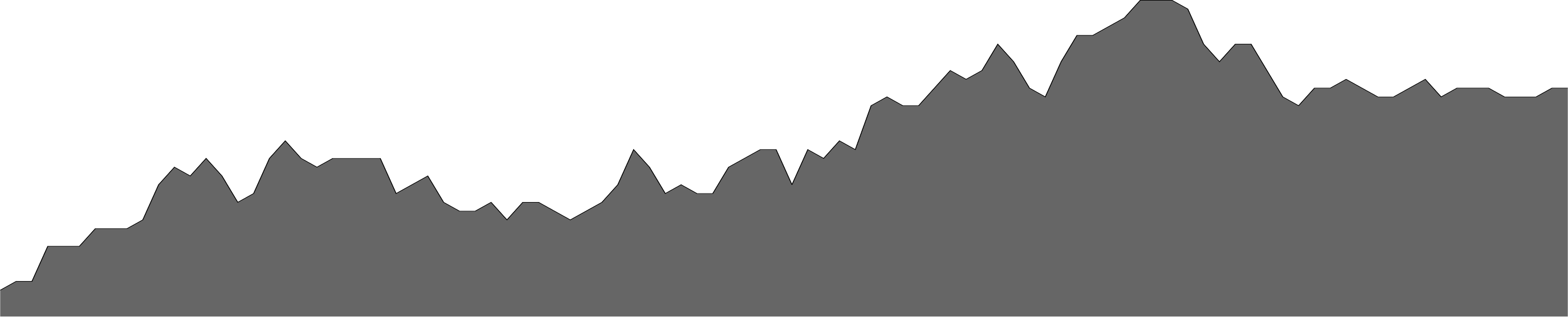}
		\label{fig:instance-walk}
	} \\
	\subfigure[\sinewalk: Sum of a sine wave and a random walk.]{
		\includegraphics[width=.95\linewidth]{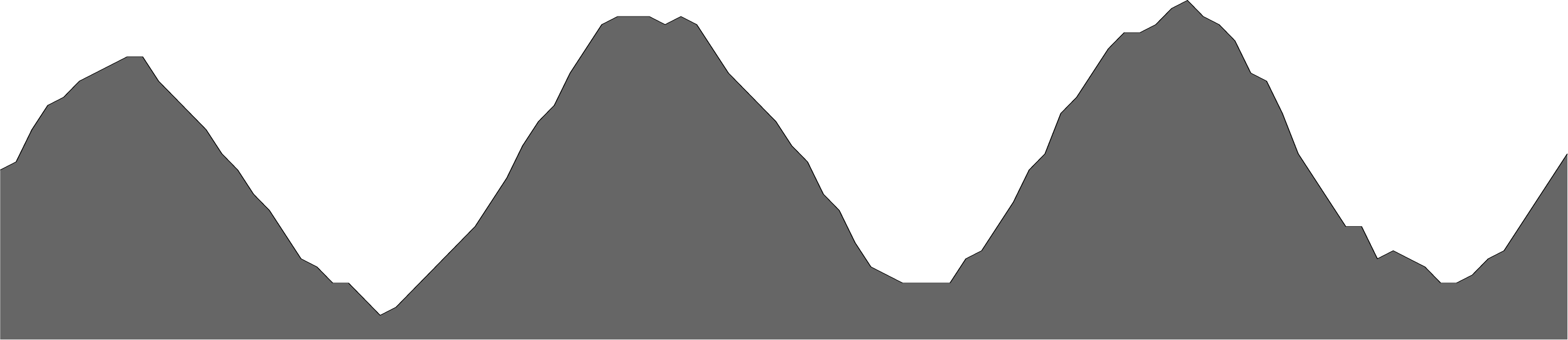}
		\label{fig:instance-sinewalk}
	} \\
	\subfigure[\parabolawalk: Sum of a pa\-rab\-o\-la and a random walk.]{
		\includegraphics[width=.45\linewidth]{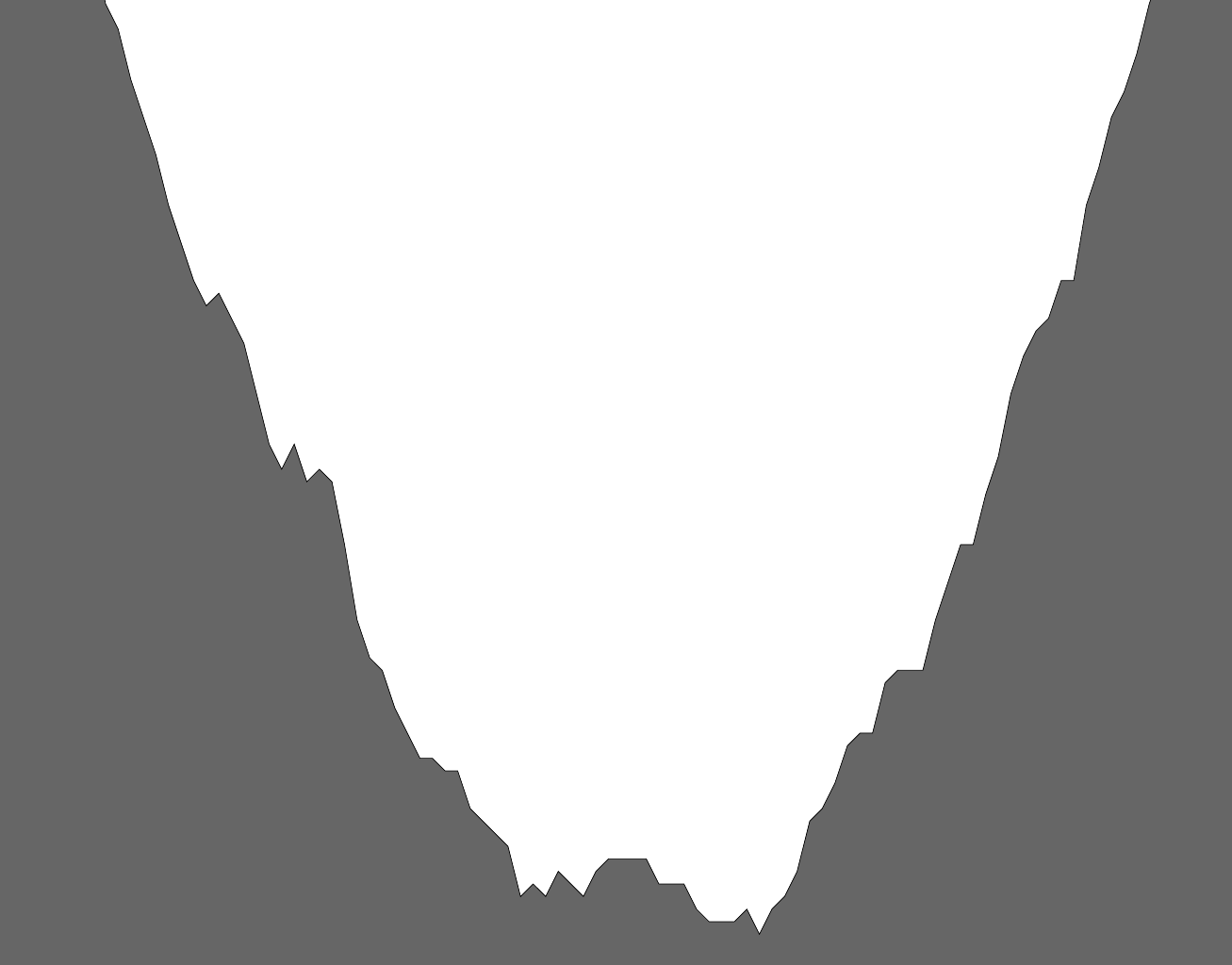}
		\label{fig:instance-parabolawalk}
	}\hspace{.03\linewidth}%
	\subfigure[\concavevalleys: Optimal solutions require point guards in the valley centers.]{
		\includegraphics[width=.45\linewidth]{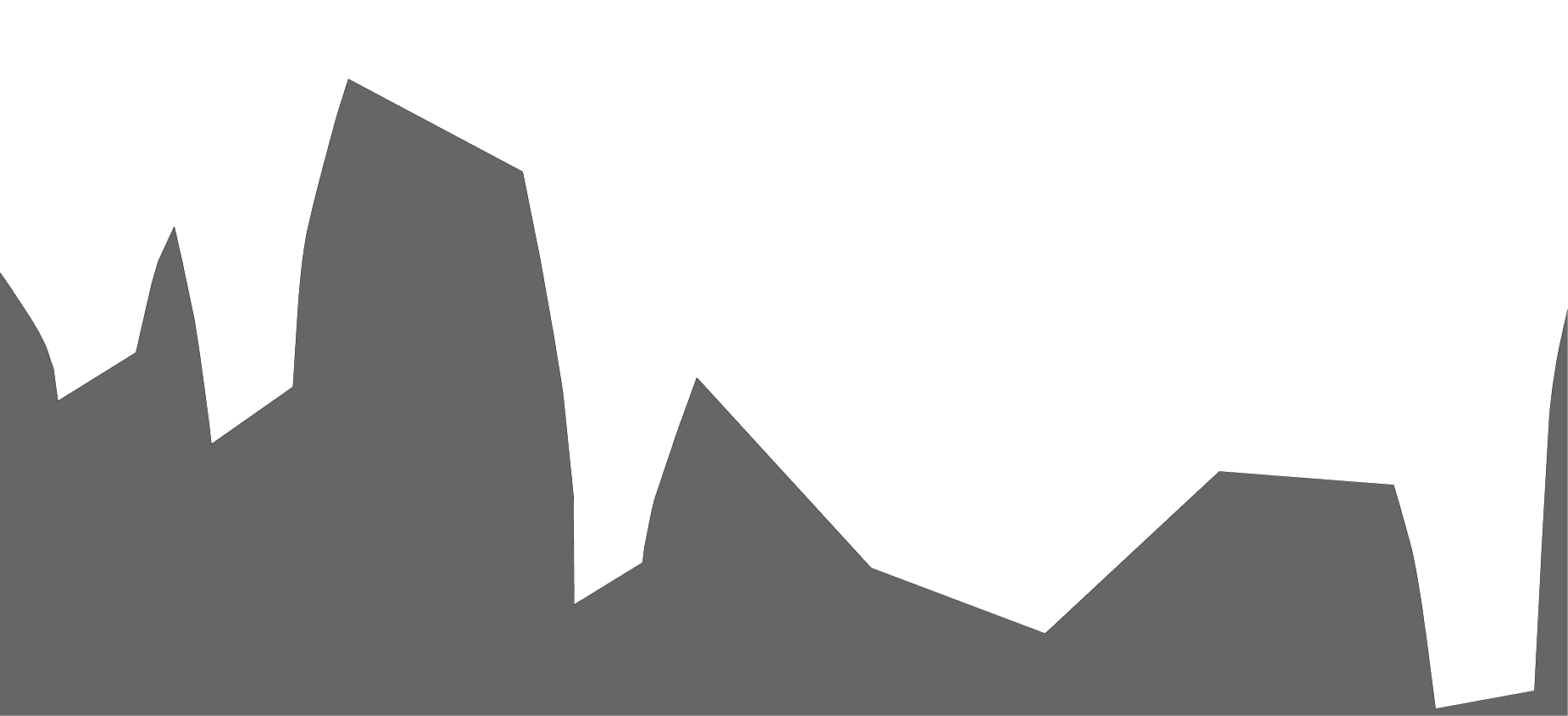}
		\label{fig:instance-concavevalleys}
	}
	\caption{Four classes of randomly generated test instances available in the \acs{TGPIL}~\cite{tgpil}.}
	\label{fig:instance}
\end{figure}

We test four classes of random terrains from the 2015-08-06 version of the \ac{TGPIL}~\cite{tgpil}, see Figure~\ref{fig:instance} for an overview.
Each class comprises 20 instances with $10^3$, $10^4$, $10^5$, $5 \cdot 10^5$, and $10^6$ vertices each, yielding 400 instances.

A \walk, see Figure~\ref{fig:instance-walk}, has $n$ vertices with $x$-coordinates $0, \dots, n-1$, and the $(i+1)$-th $y$-coordinate is a random offset from the $i$-th.
\sinewalk and \parabolawalk, see Figures~\ref{fig:instance-sinewalk} and~\ref{fig:instance-parabolawalk}, are the sum of a \walk and a properly scaled sine or parabola, respectively.
Both classes pose a challenge because many points see a large slope, highly fragmented by shadows of local features.

Preliminary experiments revealed that the above classes hardly require non-vertex guards for optimal solutions.
Hence we propose the \concavevalleys class, see Figure~\ref{fig:instance-concavevalleys}, which encourages point guards.
An instance starts as a \walk.
Iteratively pick a random edge and replace it by a valley with concave slopes.
Connect the slopes by a bottom edge, such that a point in its interior covers both slopes as in Figure~\ref{fig:tgp-point}.
An optimal solution for such a terrain usually requires guards in bottom edges' interiors.

None of the above classes deliberately provokes the NP-hardness of the \ac{TGP};
they are not designed to contain a reduction of hard instances of e.g.\ {PLANAR~3SAT}, as used in the NP-hardness proof of King and Krohn~\cite{kk-tginph-11}.
In our case, testing such instances is out of scope:
We provide and evaluate the means to transform a terrain into a small discretization that can be handed to a solver; \ac{IP}, \ac{PTAS}, {SAT}, or other.
The transformation has to be efficient and our experiments are designed to verify just that.
Combinatorially hard instances merely benchmark the underlying solver.

\subsection{Configurations}
\label{sec:configurations}

We test Algorithm~\ref{alg:ip} in seven configurations, see Table~\ref{tab:config}, to individually assess the impact of each filtering technique from Section~\ref{sec:filters}.
\vdefault, \vnodom, and \vnow test the \vertexguardmode mode;
\pdefault, \pnoedge, \pnodom, and \pnow test the considerably harder \pointguardmode mode.
Recall that \pointguardfilter does not apply to \vertexguardmode mode.
Since we test 400 instances, this results in 2800 test runs.

\begin{table}
	\centering
	\small
	\begin{tabular}{|l|c|c|c|c|}
		\hline
		Configuration & Mode & \pointguardfilter & \domfilter & \witnessfilter \\
		\hline
		\hline
		\vdefault & \vertexguardmode & n/a & yes & yes \\
		\hline
		\vnodom   & \vertexguardmode & n/a &  no & yes \\
		\hline
		\vnow     & \vertexguardmode & n/a & yes &  no \\
		\hline
		\hline
		\pdefault & \pointguardmode  & yes & yes & yes \\
		\hline
		\pnoedge  & \pointguardmode  &  no & yes & yes \\
		\hline
		\pnodom   & \pointguardmode  & yes &  no & yes \\
		\hline
		\pnow     & \pointguardmode  & yes & yes &  no \\
		\hline
	\end{tabular}
	\caption{Algorithm configurations, \acs{VTGP} above and \acs{CTGP} below.}
	\label{tab:config}
\end{table}

\subsection{Experimental Setup}
\label{sec:experiments-setup}

We used eight identical Linux 3.13 machines with Intel Core i7-3770 CPUs running at 3.4\,GHz, provided with 8\,MB of cache and 16\,GB of main memory.
Every run was limited to 15~minutes of CPU time and 14\,GB of memory.
Our software, except solving \acp{IP} with CPLEX, is not parallelized.
Refer to Section~\ref{sec:implementation-frameworks} for details regarding the toolchain.

\subsection{Results}
\label{sec:experiments-results}

The solution rates and median solution times of every combination of configuration, instance class, and instance complexity are listed in Tables~\ref{tab:percentage} and~\ref{tab:time}; each cell corresponds to 20 test runs.
Due to the imposed time and memory limits, not all test runs succeeded;
we account for unfinished test runs with an infinite completion time.
Hence, we use the median instead of the mean throughout the analysis.
More fine-grained timing information is presented in Figure~\ref{fig:time-boxplot}.
Except in \pnoedge mode, all unsolved instances were caused by running out of memory, see Section~\ref{sec:experiments-memory}.
We dedicate one subsection each to the relative hardness of the instance classes (Section~\ref{sec:experiments-instances}), an overview of \vertexguardmode and \pointguardmode modes (Sections~\ref{sec:experiments-vertex} and~\ref{sec:experiments-point}), the impact of \pointguardfilter, \domfilter and \witnessfilter (Sections~\ref{sec:experiments-edge}, \ref{sec:experiments-dom} and~\ref{sec:experiments-witnesses}), timing behavior (Section~\ref{sec:experiments-time}), and memory consumption (Section~\ref{sec:experiments-memory}).

\begin{table}
	\centering
	\small
	\begin{tabular}{|l|l|ccccc|}
		\hline
		\multirow{2}{*}{Configuration} & \multirow{2}{*}{Instance} & \multicolumn{5}{|c|}{\#vertices} \\
		& & $10^3$ & $10^4$ & $10^5$ & $5 \cdot 10^5$ & $10^6$ \\
		\hline
		\hline
		\multirow{4}{*}{\vdefault}
			& \walk           & 100\,\% & 100\,\% & 100\,\% & 100\,\% & 100\,\% \\
			& \sinewalk       & 100\,\% & 100\,\% & 100\,\% &   0\,\% &   0\,\% \\
			& \parabolawalk   & 100\,\% & 100\,\% & 100\,\% &   0\,\% &   0\,\% \\
			& \concavevalleys & 100\,\% & 100\,\% & 100\,\% & 100\,\% & 100\,\% \\
		\hline
		\multirow{4}{*}{\vnodom}
			& \walk           & 100\,\% & 100\,\% & 100\,\% & 100\,\% & 100\,\% \\
			& \sinewalk       & 100\,\% & 100\,\% & 100\,\% &   0\,\% &   0\,\% \\
			& \parabolawalk   & 100\,\% & 100\,\% & 100\,\% &   0\,\% &   0\,\% \\
			& \concavevalleys & 100\,\% & 100\,\% & 100\,\% & 100\,\% & 100\,\% \\
		\hline
		\multirow{4}{*}{\vnow}
			& \walk           & 100\,\% & 100\,\% & 100\,\% &   0\,\% &   0\,\% \\
			& \sinewalk       & 100\,\% & 100\,\% &   0\,\% &   0\,\% &   0\,\% \\
			& \parabolawalk   & 100\,\% &  25\,\% &   0\,\% &   0\,\% &   0\,\% \\
			& \concavevalleys & 100\,\% & 100\,\% & 100\,\% &   0\,\% &   0\,\% \\
		\hline
		\hline
		\multirow{4}{*}{\pdefault}
			& \walk           & 100\,\% & 100\,\% & 100\,\% & 100\,\% & 100\,\% \\
			& \sinewalk       & 100\,\% & 100\,\% & 100\,\% &   0\,\% &   0\,\% \\
			& \parabolawalk   & 100\,\% & 100\,\% & 100\,\% &   0\,\% &   0\,\% \\
			& \concavevalleys & 100\,\% & 100\,\% & 100\,\% & 100\,\% & 100\,\% \\
		\hline
		\multirow{4}{*}{\pnoedge}
			& \walk           & 100\,\% & 100\,\% & 100\,\% &  55\,\% &   0\,\% \\
			& \sinewalk       & 100\,\% & 100\,\% &   0\,\% &   0\,\% &   0\,\% \\
			& \parabolawalk   & 100\,\% & 100\,\% &   0\,\% &   0\,\% &   0\,\% \\
			& \concavevalleys & 100\,\% & 100\,\% & 100\,\% &  90\,\% &   0\,\% \\
		\hline
		\multirow{4}{*}{\pnodom}
			& \walk           & 100\,\% & 100\,\% & 100\,\% & 100\,\% & 100\,\% \\
			& \sinewalk       & 100\,\% & 100\,\% & 100\,\% &   0\,\% &   0\,\% \\
			& \parabolawalk   & 100\,\% & 100\,\% & 100\,\% &   0\,\% &   0\,\% \\
			& \concavevalleys & 100\,\% & 100\,\% & 100\,\% & 100\,\% & 100\,\% \\
		\hline
		\multirow{4}{*}{\pnow}
			& \walk           & 100\,\% & 100\,\% & 100\,\% &   0\,\% &   0\,\% \\
			& \sinewalk       & 100\,\% & 100\,\% &   0\,\% &   0\,\% &   0\,\% \\
			& \parabolawalk   & 100\,\% &  20\,\% &   0\,\% &   0\,\% &   0\,\% \\
			& \concavevalleys & 100\,\% & 100\,\% & 100\,\% &   0\,\% &   0\,\% \\
		\hline
	\end{tabular}
	\caption{Solution rates for each configuration, instance class, and instance complexity.}
	\label{tab:percentage}
\end{table}

\begin{table}
	\centering
	\small
	\begin{tabular}{|l|l|ccccc|}
		\hline
		\multirow{2}{*}{Configuration} & \multirow{2}{*}{Instance} & \multicolumn{5}{|c|}{\#vertices} \\
		& & $10^3$ & $10^4$ & $10^5$ & $5 \cdot 10^5$ & $10^6$ \\
		\hline
		\hline
		\multirow{4}{*}{\vdefault}
			& \walk           &   0.0\,s &   0.2\,s &   2.9\,s &  18.0\,s &  40.3\,s \\
			& \sinewalk       &   0.0\,s &   0.9\,s &  13.8\,s &      n/a &      n/a \\
			& \parabolawalk   &   0.1\,s &   1.8\,s &  21.7\,s &      n/a &      n/a \\
			& \concavevalleys &   0.2\,s &   2.4\,s &  24.2\,s & 177.2\,s & 337.6\,s \\
		\hline
		\multirow{4}{*}{\vnodom}
			& \walk           &   0.0\,s &   0.3\,s &   3.6\,s &  21.8\,s &  48.6\,s \\
			& \sinewalk       &   0.1\,s &   1.3\,s &  18.2\,s &      n/a &      n/a \\
			& \parabolawalk   &   0.1\,s &   2.5\,s &  27.8\,s &      n/a &      n/a \\
			& \concavevalleys &   0.2\,s &   2.4\,s &  24.3\,s & 177.1\,s & 411.0\,s \\
		\hline
		\multirow{4}{*}{\vnow}
			& \walk           &   0.0\,s &   0.4\,s &   8.8\,s &      n/a &      n/a \\
			& \sinewalk       &   0.1\,s &   6.4\,s &      n/a &      n/a &      n/a \\
			& \parabolawalk   &   0.4\,s &      n/a &      n/a &      n/a &      n/a \\
			& \concavevalleys &   0.2\,s &   2.7\,s &  32.0\,s &      n/a &      n/a \\
		\hline
		\hline
		\multirow{4}{*}{\pdefault}
			& \walk           &   0.0\,s &   0.3\,s &   4.6\,s &  27.9\,s &  62.4\,s \\
			& \sinewalk       &   0.1\,s &   1.7\,s &  26.3\,s &      n/a &      n/a \\
			& \parabolawalk   &   0.2\,s &   3.3\,s &  45.1\,s &      n/a &      n/a \\
			& \concavevalleys &   0.1\,s &   0.8\,s &  10.3\,s &  68.2\,s & 137.4\,s \\
		\hline
		\multirow{4}{*}{\pnoedge}
			& \walk           &   0.2\,s &   4.5\,s &  79.7\,s & 833.3\,s &      n/a \\
			& \sinewalk       &   1.0\,s &  58.8\,s &      n/a &      n/a &      n/a \\
			& \parabolawalk   &   4.1\,s & 220.3\,s &      n/a &      n/a &      n/a \\
			& \concavevalleys &   0.2\,s &   3.2\,s &  58.3\,s & 652.3\,s &      n/a \\
		\hline
		\multirow{4}{*}{\pnodom}
			& \walk           &   0.0\,s &   0.4\,s &   5.0\,s &  31.5\,s &  70.4\,s \\
			& \sinewalk       &   0.1\,s &   2.0\,s &  31.1\,s &      n/a &      n/a \\
			& \parabolawalk   &   0.2\,s &   4.0\,s &  51.8\,s &      n/a &      n/a \\
			& \concavevalleys &   0.1\,s &   0.9\,s &  10.7\,s &  68.0\,s & 145.5\,s \\
		\hline
		\multirow{4}{*}{\pnow}
			& \walk           &   0.0\,s &   0.6\,s &  10.4\,s &      n/a &      n/a \\
			& \sinewalk       &   0.1\,s &   7.8\,s &      n/a &      n/a &      n/a \\
			& \parabolawalk   &   0.6\,s &      n/a &      n/a &      n/a &      n/a \\
			& \concavevalleys &   0.1\,s &   1.1\,s &  20.8\,s &      n/a &      n/a \\
		\hline
	\end{tabular}
	\caption{Median solution times per configuration, instance class, and instance complexity.}
	\label{tab:time}
\end{table}

\subsubsection{Overview: Instances}
\label{sec:experiments-instances}

Tables~\ref{tab:percentage} and~\ref{tab:time} clearly reveal that \sinewalk and \parabolawalk are harder to solve than \walk and \concavevalleys.
This is to be expected, since \sinewalk and \parabolawalk contain a \walk as additive noise, and since the sole purpose of \concavevalleys is to encourage placing non-vertex guards, see Section~\ref{sec:experiments-instance-description}.
Furthermore, \sinewalk and \parabolawalk comprise facing valleys, resulting in highly fragmented visibility regions and complex visibility overlays in which a large portion of the guards and witnesses cannot be filtered out.
This induces time and memory intensive calculations and a complex \ac{IP}, making \sinewalk and \parabolawalk challenging instance classes.

\subsubsection{Overview: Vertex Guards}
\label{sec:experiments-vertex}

\vdefault and \vnodom solve all instances of \walk and \concavevalleys, and the \sinewalk and \parabolawalk instances of up to $10^5$ vertices;
\vnow can solve instances which are smaller by about a factor of~10, see Table~\ref{tab:percentage}.
This already demonstrates the importance of \witnessfilter.
\vdefault and \vnodom have comparable running times, with a slight advantage for \vdefault, see Table~\ref{tab:time}; \vnow is slower.

\subsubsection{Overview: Point Guards}
\label{sec:experiments-point}

In terms of solved instances, refer to Table~\ref{tab:percentage}, \pdefault and \pnodom are the strongest configurations, solving all \walk and \concavevalleys instances as well as the \sinewalk and \parabolawalk instances with up to $10^5$ vertices.
\pnow and \pnoedge are much weaker.
Table~\ref{tab:time} indicates that \pdefault is slightly faster than \pnodom.
It is clear from the performance of \pnoedge that \pointguardfilter is crucial in \pointguardmode mode.

\subsubsection{Impact of Filtering Edge-Interior Guards}
\label{sec:experiments-edge}

\begin{table}
	\centering
	\small
	\begin{tabular}{|l|l|ccccc|}
		\hline
		\multirow{2}{*}{Configuration} & \multirow{2}{*}{Instance} & \multicolumn{5}{|c|}{\#vertices} \\
		& & $10^3$ & $10^4$ & $10^5$ & $5 \cdot 10^5$ & $10^6$ \\
		\hline
		\hline
		\multirow{4}{*}{\pnodom}
			& \walk           & 80.1\,\% & 86.8\,\% & 89.5\,\% & 91.0\,\% & 91.7\,\% \\
			& \sinewalk       & 92.7\,\% & 97.6\,\% & 98.3\,\% &      n/a &      n/a \\
			& \parabolawalk   & 97.5\,\% & 98.8\,\% & 98.9\,\% &      n/a &      n/a \\
			& \concavevalleys & 65.8\,\% & 72.5\,\% & 77.7\,\% & 79.9\,\% & 80.6\,\% \\
		\hline
	\end{tabular}
	\caption{Median percentage of guard candidates removed by \pointguardfilter.}
	\label{tab:pointguardfilter}
\end{table}

\begin{figure}
	\subfigure[Unfiltered guard candidates.]{
		\includegraphics[width=.45\linewidth]{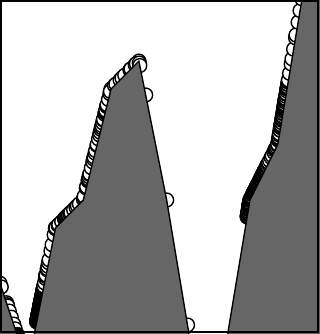}
		\label{fig:filtering-effect-off}
	}\hfill
	\subfigure[Filtered guard candidates.]{
		\includegraphics[width=.45\linewidth]{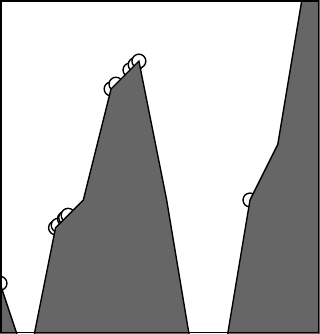}
		\label{fig:filtering-effect-on}
	}
	\caption{%
		The effect of \pointguardfilter (excerpt of a $10^5$-vertex \parabolawalk).
		White circles represent guard candidates.}
	\label{fig:filtering-effect}
\end{figure}

Recall that \pointguardfilter, see Section~\ref{sec:filters-pointguards}, only applies to \pointguardmode mode.
Table~\ref{tab:pointguardfilter} depicts the percentage of guards it removes in the \pnodom configuration\dash---the only configuration without interfering guard filters\dash---and Figure~\ref{fig:filtering-effect} illustrates its effectiveness using a $10^5$-vertex \parabolawalk as example.

\pointguardfilter proves to be our most effective guard filter by removing roughly 90\,\% (80\,\%) of the guard candidates in the $10^6$ vertex \walk (\concavevalleys) instances, and well above 95\,\% in the largest solved \sinewalk and \parabolawalk instances.
Tables~\ref{tab:percentage} and~\ref{tab:time} demonstrate that \pointguardfilter massively improves performance in terms of solution rates and median solution times.
This makes it the key success factor when solving the \ac{CTGP}, removing the computational barrier between \ac{VTGP} and \ac{CTGP}:
Without it, \pnoedge would be the state of the art, and solvable instances of the \ac{CTGP} would be smaller by at least a factor of 10 than for \ac{VTGP} and would take much more time.

\subsubsection{Impact of Filtering Dominated Guards}
\label{sec:experiments-dom}

\begin{table}
	\centering
	\small
	\begin{tabular}{|l|l|ccccc|}
		\hline
		\multirow{2}{*}{Configuration} & \multirow{2}{*}{Instance} & \multicolumn{5}{|c|}{\#vertices} \\
		& & $10^3$ & $10^4$ & $10^5$ & $5 \cdot 10^5$ & $10^6$ \\
		\hline
		\hline
		\multirow{4}{*}{\vdefault}
			& \walk           & 65.8\,\% & 64.1\,\% & 63.6\,\% & 63.5\,\% & 63.5\,\% \\
			& \sinewalk       & 58.5\,\% & 63.0\,\% & 63.6\,\% &      n/a &      n/a \\
			& \parabolawalk   & 59.9\,\% & 63.3\,\% & 63.6\,\% &      n/a &      n/a \\
			& \concavevalleys & 13.4\,\% & 13.1\,\% & 13.3\,\% & 13.3\,\% & 13.3\,\% \\
		\hline
		\hline
		\multirow{4}{*}{\pnoedge}
			& \walk           & 92.9\,\% & 94.7\,\% & 95.6\,\% & 95.8\,\% &      n/a \\
			& \sinewalk       & 88.1\,\% & 96.7\,\% &      n/a &      n/a &      n/a \\
			& \parabolawalk   & 93.2\,\% & 98.0\,\% &      n/a &      n/a &      n/a \\
			& \concavevalleys & 77.1\,\% & 80.5\,\% & 83.9\,\% & 85.0\,\% &      n/a \\
		\hline
	\end{tabular}
	\caption{Median percentage of guard candidates removed by \domfilter.}
	\label{tab:domfilter}
\end{table}

Table~\ref{tab:domfilter} displays the percentage of guard candidates that \domfilter, refer to Section~\ref{sec:filters-dominated}, filtered out in the \vdefault and \pnoedge configurations.
Observe that we need to obtain these numbers in configurations without other active filters.

\domfilter has no impact on the solution rates within the limits imposed by our setup (see Section~\ref{sec:experiments-setup}).
\vdefault and \pdefault are slightly faster than \vnodom and \pnodom, respectively.
However, the key advantage of \domfilter is that it saves memory by deleting dominated guard candidates;
this is important since memory consumption is the bottleneck of our implementation, see Section~\ref{sec:experiments-memory}.

\subsubsection{Impact of Filtering Witnesses}
\label{sec:experiments-witnesses}

\begin{table}
	\centering
	\small
	\begin{tabular}{|l|l|ccccc|}
		\hline
		\multirow{2}{*}{Configuration} & \multirow{2}{*}{Instance} & \multicolumn{5}{|c|}{\#vertices} \\
		& & $10^3$ & $10^4$ & $10^5$ & $5 \cdot 10^5$ & $10^6$ \\
		\hline
		\hline
		\multirow{4}{*}{\vdefault}
			& \walk           & 91.9\,\% & 94.3\,\% & 95.4\,\% & 96.1\,\% & 96.4\,\% \\
			& \sinewalk       & 95.8\,\% & 98.8\,\% & 99.3\,\% &      n/a &      n/a \\
			& \parabolawalk   & 98.6\,\% & 99.4\,\% & 99.5\,\% &      n/a &      n/a \\
			& \concavevalleys & 76.5\,\% & 80.5\,\% & 83.7\,\% & 85.1\,\% & 85.5\,\% \\
		\hline
		\hline
		\multirow{4}{*}{\pdefault}
			& \walk           & 91.9\,\% & 94.3\,\% & 95.5\,\% & 96.1\,\% & 96.4\,\% \\
			& \sinewalk       & 96.3\,\% & 98.9\,\% & 99.3\,\% &      n/a &      n/a \\
			& \parabolawalk   & 98.7\,\% & 99.5\,\% & 99.5\,\% &      n/a &      n/a \\
			& \concavevalleys & 80.3\,\% & 84.2\,\% & 87.4\,\% & 88.7\,\% & 89.1\,\% \\
		\hline
	\end{tabular}
	\caption{Median percentage of witnesses removed by \witnessfilter.}
	\label{tab:witnessfilter}
\end{table}

The percentage of witnesses removed by \witnessfilter, see Section~\ref{sec:filters-witnesses}, in the \vdefault and \pdefault configurations is displayed in Table~\ref{tab:witnessfilter}.
Throughout our instances, \witnessfilter removes the vast majority of witnesses, often more than~95\,\%.
Furthermore, Table~\ref{tab:percentage} clearly shows that disabling it reduces the solvable instance size by at least a factor of~10.
All of this renders \witnessfilter simple, fast, and useful.

\subsubsection{Timing Behavior}
\label{sec:experiments-time}

\begin{figure}
	\centering
	\includegraphics[width=.8\linewidth]{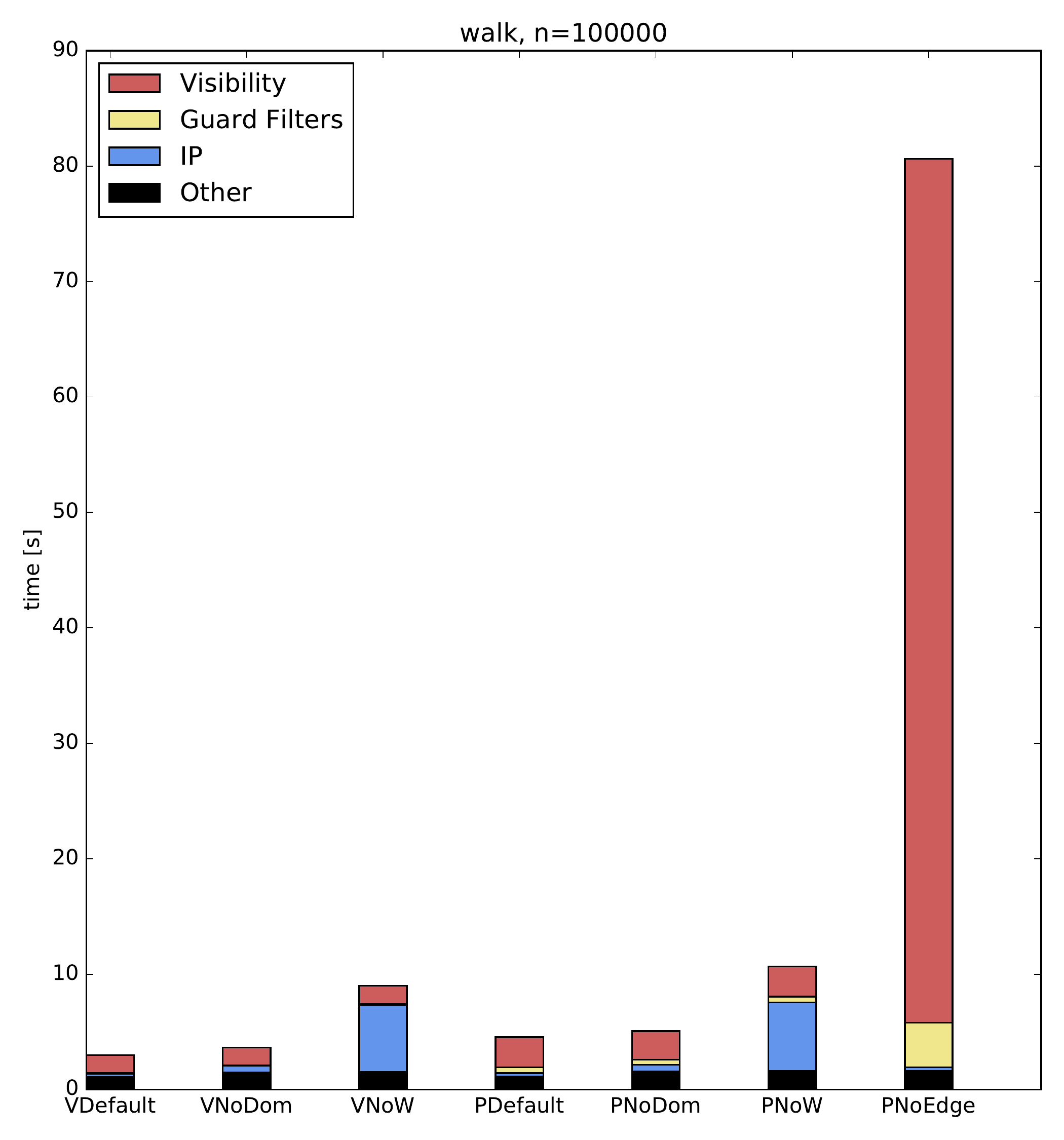}
	\caption{Median CPU time spent by subroutine regarding $10^5$-vertex \walk instances.}
	\label{fig:time-bar}
\end{figure}

\begin{figure}
	\centering
	\includegraphics[width=.8\linewidth]{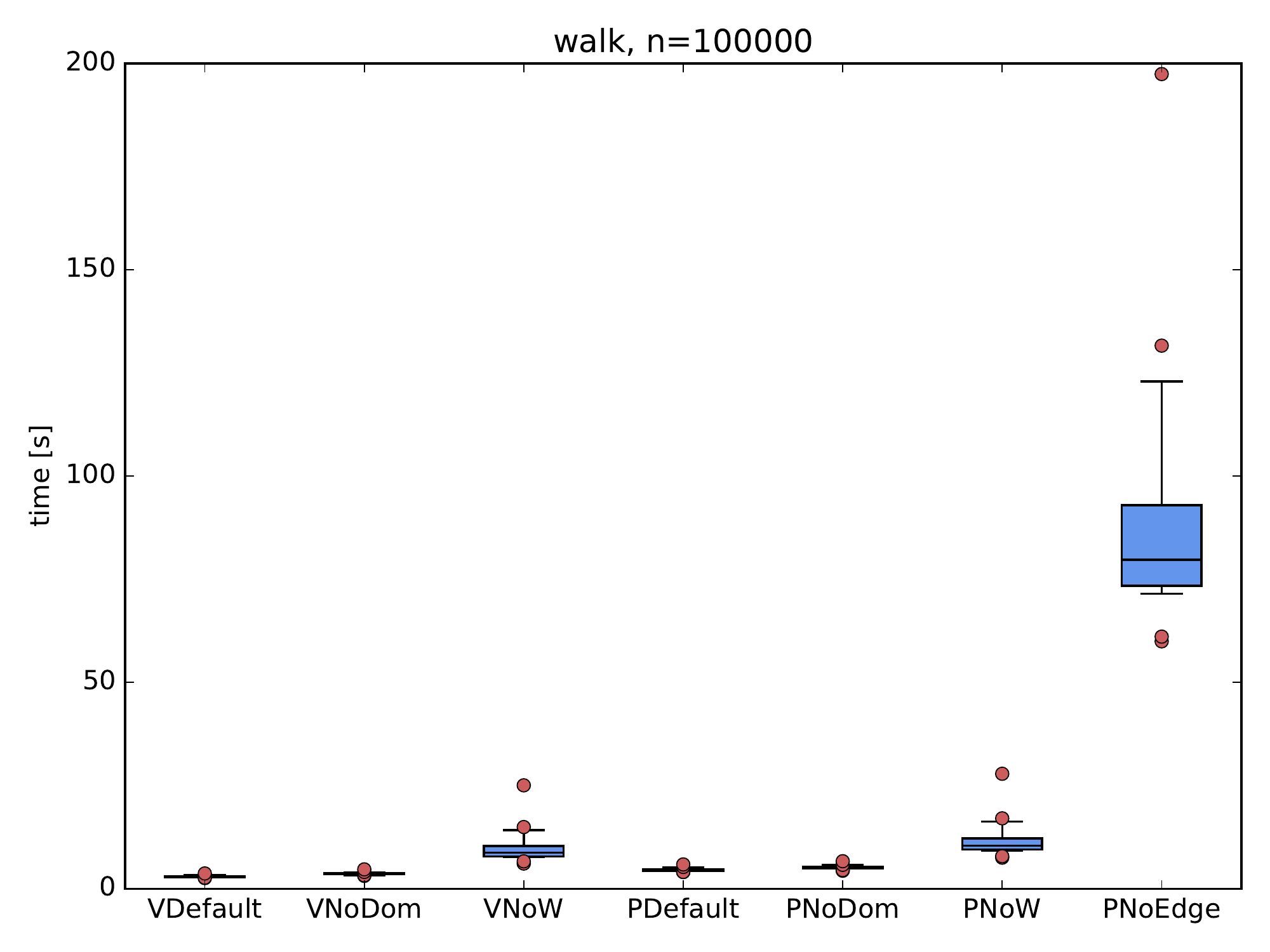}
	\caption{%
		Box plot of solution times regarding $10^5$-vertex \walk instances.
		Boxes comprise first, second, and third quartile;
		whiskers indicate the 10th and 90th percentile.}
	\label{fig:time-boxplot}
\end{figure}

Figures~\ref{fig:time-bar} and~\ref{fig:time-boxplot} show how much CPU time is spent in each part of Algorithm~\ref{alg:ip} and the distribution of the solution times, respectively.
For a comparison, we pick an instance class and a complexity that was solved by every configuration:
\walk with $10^5$ vertices\dash---the other combinations are left out, as they permit the same interpretation.

The strongest impact is that of \pointguardfilter, which is disabled in the rightmost bar in Figure~\ref{fig:time-bar}.
Without \pointguardfilter there is a computational gap between \pointguardmode and \vertexguardmode mode, as determining visibility regions of unneeded guards dominates the CPU time.
Guard-filtering time also increases in \pnoedge mode because \domfilter is active and has more guards to compare.

\witnessfilter has the second-most important impact.
In its absence, CPU times roughly double due to increased \ac{IP} solution times:
In the \ac{IP}, non-inclusion-minimal witnesses form constraints that are dominated by the inclusion-minimal witnesses' constraints.
The \ac{IP} solver eliminates dominated constraints in a preprocessing phase, but \witnessfilter does so more efficiently since it exploits the underlying geometry.

The timing behavior is hardly influenced by \domfilter.
It is, however, beneficial w.r.t.\ memory consumption, see Section~\ref{sec:experiments-memory}.

A general observation is that the filtering mechanisms significantly reduce the computational overhead.
One would expect \ac{IP} solution times to dominate an exact solver for an NP-hard problem.
This, however, tends not to be the case in geometric optimization problems like the \ac{AGP} and its relatives~\cite{rsfhkt-eag-14}.
It stems from the \ac{EGC} paradigm:
Instead of floating-point arithmetic, exact number types must be used to ensure correct results.
\vdefault and \pdefault still are not clearly dominated by \ac{IP} solution times, but much closer to it than unfiltered approaches.

\subsubsection{Memory Consumption}
\label{sec:experiments-memory}

Within our experimental setup, using the \vdefault and \pdefault modes, even instances with $10^6$ vertices are solved within minutes.
All unsolved configuration/instance pairs run out of memory, not time;
only \pnoedge occasionally runs out of time\dash---owed to the large number of unnecessary visibility calculations otherwise prevented by \pointguardfilter, see Sections~\ref{sec:experiments-edge} and~\ref{sec:experiments-time}.
So as long as instances are not designed to reveal the NP-hardness of the \ac{TGP}, the limiting resource is memory.

Two phases of Algorithm~\ref{alg:ip} generate a significant amount of data which persists in memory.
The first phase is the computation of all visibility regions of all vertices $V$ in line~\ref{alg:visi-v}.
This stores $\bigO(n^2)$ $x$-coordinates in memory which define the unfiltered guard candidate set~$U$.
Filters remove the vast majority of candidates from~$U$.
The second phase determines the visibility regions of the remaining points in $U \setminus V$, generating the largest chunk of data in line~\ref{alg:visi-u}:
At this point we keep $\bigO(n^3)$ $x$-coordinates in memory.
We conjecture that simultaneously holding all these visibility regions in memory cannot be avoided since a guard at the far right of the terrain may still see a region at its very left;
hence we do not see a way to apply \witnessfilter before knowing all extremal points.
As a lower bound, observe that a discretization with guards $G$ and witnesses $W$ yields a constraint matrix $A \in \{0,1\}^{|W| \times |G|}$ of the \acs{IP}~\eqref{eq:ip-begin}--\eqref{eq:ip-end} with $A_{wg} = 1 \Leftrightarrow w \in \V(g)$, i.e., one with $\bigO(n^5)$ entries for the \ac{CTGP}.

We remark that the memory bottleneck is amplified by the fact that we follow the \ac{EGC} paradigm, which ensures a correct and consistent representation of all visibility regions and a correct order of all visibility events.
Specifically, we do not store coordinates of points using floating-point arithmetic.
Neither do we use the other extreme, i.e., an exact representation by arbitrary precision rationals as e.g.\ provided by the GMP~\cite{gmp} library.
Instead, we rely on \ac{CGAL}~\cite{cgal}, more precisely on \texttt{CGAL::Exact\_predicates\_exact\_constructions\_kernel}, which provides lazy constructions:
Each coordinate is initially represented by two doubles that encode an interval containing the actual coordinate.
This suffices for many decisions, for instance, a comparison with another coordinate.
However, in cases in which intervals overlap, the exact coordinates are computed with GMP.
Compared to the pure exact approach this usually yields a significant advantage regarding speed and memory~\cite{pf-aglesfegc-2011}.

\acresetall
\section{Conclusion}
\label{sec:conclusion}

We present a discretization of polynomial size for the continuous 1.5D \ac{TGP}.
This settles two open questions:
\begin{inparaenum}
\item
	The continuous \ac{TGP} is a member of NP and, since NP-hardness is known~\cite{kk-tginph-11}, NP-complete, and
\item
	it admits a \ac{PTAS}, since the \ac{PTAS} for the discrete \ac{TGP}~\cite{gkkv-gtvls-14} applies to our discretization.
\end{inparaenum}
Furthermore, we propose an algorithm for finding optimal solutions for the \ac{TGP};
our implementation solves instances with up to $10^6$ vertices within minutes.
A key success factor are filtering techniques reducing the size of the discretization and the geometric overhead, essentially removing the computational barrier between the continuous and the discrete \ac{TGP}.

\bibliographystyle{plain}
\bibliography{bibliography}

\end{document}